\documentclass[11pt, a4paper]{article} 

%
%

\title{Semi-Algebraic Proof Systems for QBF
\thanks{An extended abstract of this article has appeared in proceedings of SAT'25 \cite{BBKMS-SAT-25}. Olaf Beyersdorff and Kaspar Kasche were supported by Carl-Zeiss Foundation. Olaf Beyersdorff was additionally supported by DFG grant BE 4209/3-1. Ilario Bonacina was funded by the AEI with the grant number PID2022-138506NB-C22. Meena Mahajan was partially supported by the J. C. Bose Fellowship of SERB, ANRF.}
}

\author{
Olaf Beyersdorff
\\
\small Friedrich Schiller University 
\\ \small Jena, Germany 
\\ \small \texttt{olaf.beyersdorff@uni-jena.de}
\and
Ilario Bonacina
\\
\small Universitat Politecnica de Catalunya
\\ \small Barcelona, Spain 
\\ \small \texttt{ilario.bonacina@upc.edu}
\and
Kaspar Kasche
\\ \small Friedrich Schiller University 
\\ \small Jena, Germany 
\\ \small \texttt{kaspar.kasche@uni-jena.de}
\and
Meena Mahajan
\\ \small The Institute of Mathematical Sciences 
\\ \small (A CI of Homi Bhabha National Institute)
\\ \small Chennai, India
\\ \small \texttt{meena@imsc.res.in}
\and
Luc Nicolas Spachmann
\\ \small Friedrich Schiller University
\\ \small Jena, Germany 
\\ \small \texttt{luc.spachmann@uni-jena.de}
}

%
%

\usepackage[noadjust]{cite}
\usepackage{subcaption}
\usepackage{adjustbox}
\usepackage{cite}
\usepackage{hyperref}
\hypersetup{
colorlinks= true,
linkcolor=blue!60!black,
citecolor=blue!60!black,
pagebackref=true,
}
\usepackage[left=3cm, right=3cm]{geometry}

\usepackage{amsthm}
\usepackage{amsmath,amssymb}
\usepackage{thmtools}
\newtheorem{thm}{Theorem}[section]
\newtheorem{lem}[thm]{Lemma}
\newtheorem{cor}[thm]{Corollary}
\newtheorem{prop}[thm]{Proposition}
\theoremstyle{definition}
\newtheorem{defi}[thm]{Definition}
\theoremstyle{remark}
\newtheorem{fact}[thm]{Fact}
\newtheorem{rem}[thm]{Remark}

\usepackage{cleveref}
\crefname{thm}{Theorem}{Theorems}
\Crefname{thm}{Theorem}{Theorems}
\crefname{defi}{Definition}{Definitions}
\Crefname{defi}{Definition}{Definitions}
\crefname{rem}{Remark}{Remarks}
\Crefname{rem}{Remark}{Remarks}
\crefname{cor}{Corollary}{Corollaries}
\Crefname{cor}{Corollary}{Corollaries}
\crefname{lem}{Lemma}{Lemmas}
\Crefname{lem}{Lemma}{Lemmas}
\crefname{fact}{Fact}{Fact}
\Crefname{fact}{Fact}{Fact}
\crefname{prop}{Proposition}{Proposition}
\Crefname{prop}{Proposition}{Proposition}

\usepackage{xspace}
\usepackage{tikz}
\usetikzlibrary{arrows.meta}
\usetikzlibrary{shapes.geometric}
\makeatletter
\newbox\dottedarrow@box
\setbox\dottedarrow@box\hbox
  {%
    \begin{tikzpicture}
      \draw[dotted,->] (0,0) -- (1.5em,0);
    \end{tikzpicture}%
  }
\newcommand*\dottedarrow
  {\relax\ifmmode\expandafter\dottedarrow@m\else\expandafter\dottedarrow@t\fi}
\newcommand*\dottedarrow@t[1][1.5em]
  {\resizebox{#1}{!}{\raisebox{.5ex}{\usebox\dottedarrow@box}}}
\newcommand*\dottedarrow@m[1][]
  {%
    \if\relax\detokenize{#1}\relax
      \mathchoice
        {\dottedarrow@t}
        {\dottedarrow@t}
        {\dottedarrow@t[1.1em]}
        {\dottedarrow@t[0.9em]}%
    \else
      \dottedarrow@t[#1]%
    \fi
  }
\makeatother

%
%

\renewcommand{\vec}{\boldsymbol}

\DeclareMathOperator{\enc}{enc}
\renewcommand{\phi}{\varphi}
\renewcommand{\le}{\leqslant}
\renewcommand{\ge}{\geqslant}
\renewcommand{\leq}{\leqslant}
\renewcommand{\geq}{\geqslant}
\newcommand{\ns}[1][]{\mathsf{NS}_{#1}\xspace}

\newcommand{\SA}[1][]{\mathsf{SA}_{#1}\xspace}
\newcommand{\sos}[1][]{\mathsf{SOS}_{#1}\xspace}
\newcommand{\res}{\textsf{Res}\xspace}
\newcommand{\pc}[1][]{\mathsf{PC}_{#1}\xspace}
\newcommand{\fregeTC}{\mathsf{TC}_0\text{-}\mathsf{Frege}}
\newcommand{\qns}[1][]{\ensuremath{\mathsf{Q}\text{-}\ns[#1]}\xspace}

\newcommand{\qsa}{\ensuremath{\mathsf{Q}\text{-}\SA}\xspace}
\newcommand{\qsos}{\ensuremath{\mathsf{Q}\text{-}\sos}\xspace}

\newcommand{\qures}{\ensuremath{\mathsf{QU}\text{-}\res}\xspace}
\newcommand{\qpc}{\ensuremath{\mathsf{Q}\text{-}\pc}\xspace}

\newcommand{\qtcz}{\textsf{Q-TC$_0$-Frege}\xspace}

\newcommand{\wres}{\textsf{w-Res}\xspace}
\newcommand{\reswres}{\textsf{restricted-w-Res}\xspace}
\newcommand{\qwres}{\textsf{Q-w-Res}\xspace}

\let\E\undefined
\DeclareMathOperator{\E}{{\mathbb{E}}}
\newcommand{\psE}{\tilde{\mathbb{E}}}

\newcommand{\restr}[2]{\left.{#1}\right|_{#2}}

\newcommand{\equality}{\textsf{Equality}\xspace}
\newcommand{\parityformula}{\textsf{Parity}\xspace}
\newcommand{\maj}{\textsf{Majority}\xspace}
\newcommand{\qmaj}{\textsf{Q-Majority}\xspace}

\newcommand{\abs}[1]{\lvert #1 \rvert}

\DeclareMathOperator{\qdeg}{qdeg_\exists}

\DeclareMathOperator{\vars}{vars}
\DeclareMathOperator{\score}{score}
\DeclareMathOperator{\qsize}{qsize}
\DeclareMathOperator{\size}{size}
\DeclareMathOperator{\sign}{sign}

\newcommand{\Ind}{\textsf{Ind}}

%
%
\begin{document}

\maketitle

\begin{abstract}
  \noindent We introduce new semi-algebraic proof systems for Quantified Boolean Formulas (QBF) analogous to the propositional  systems Nullstellensatz, Sherali-Adams and Sum-of-Squares. We  transfer to this setting techniques both from the QBF literature (strategy extraction) and from propositional proof complexity  (size-degree relations and pseudo-expectation). We obtain a number of  strong QBF lower bounds and separations between these systems, even when disregarding propositional hardness.
\end{abstract}

\section{Introduction}
\label{sec:intro}

Two key results in algebraic and semi-algebraic geometry are the  Nullstellensatz and the Positivstellensatz. 
The first can be seen as an algebraic identity certifying that a set of polynomial equations is unsatisfiable while the second as an algebraic identity certifying that a system of polynomial \emph{inequalities} is unsatisfiable. 
In other words, both the Nullstellensatz and the Positivstellensatz naturally give rise  to proof systems and in recent years intense research was performed on the proof complexity of such systems. 
In particular, the proof system \emph{Sum-of-Squares}, a special case of  Positivstellensatz, starting from \cite{BBHKSZ12}, has received a lot of attention for its connection with algorithms based on hierarchies of SDP relaxations (see for instance \cite{Lasserre01,ODonnellZ13}). For similar reasons, the even more restrictive \emph{Sherali-Adams} proof system has been investigated, named after its connection with the Sherali-Adams hierarchy of linear programming since its original definition  (see for instance \cite{SheraliA90,DantchevM13}).

In this work, we devise a simple and natural way to extend the proof systems Nullstellensatz~($\ns$), Sum-of-Squares ($\sos$) and Sherali-Adams ($\SA$) from the context of propositional existentially quantified variables to existentially and universally quantified variables. In other words, we show how to define  proof systems for quantified Boolean formulas (QBF)  inspired by the propositional proof systems above.

 The study of propositional and QBF proof systems is motivated both by theoretical reasons and also by  connections to SAT and QBF solving \cite{sathandbookpc,qbfhandbook,M4CQBF}.  
 While for SAT-solvers conflict-driven clause learning (CDCL) is the ruling paradigm, which by the seminal work of \cite{AtseriasFT11,DBLP:journals/ai/PipatsrisawatD11} is essentially equivalent to the propositional proof system \emph{Resolution} ($\res$), there are several competing approaches in QBF, with CDCL-based \cite{ZM02} and \emph{expansion}-based solving \cite{JM15} among the main paradigms. 
 To model the strength of QBF-solvers several (often incomparable) QBF proof systems have been introduced and analysed \cite{ELW13,JM15,BB23-LMCS}. Of  most relevance to this work is QU-Resolution ($\qures$) \cite{KBKF95,Gelder12}. 
 $\qures$ adds to propositional Resolution the $\forall$-reduction rule  that allows to eliminate universal variables from clauses. In \cite{BBCP20} it has been shown that this approach of augmenting  a propositional proof system by  a $\forall$-reduction rule taking care of  universal quantifiers also works for other common inference-based  proof systems such as various Frege systems \cite{CR79,BBCP20}, \emph{Cutting Planes} \cite{CCT87,BCMS18-CP} ---a proof system modelling geometric reasoning related to Chv\'atal-Gomory cuts--- and \emph{Polynomial Calculus} \cite{CEI.96,BeyersdorffHKS24}, modelling algebraic reasoning related to Gr\"obner bases computations.
 
 A common feature of all the propositional proof systems above is that they are \emph{inference}-based, unlike the static systems $\ns$/$\SA$/$\sos$ where a proof is just an algebraic identity of some specific form depending on the system at hand. 
 This has posed quite some problems to adapt such systems to the QBF setting, and it was not clear at all whether an approach similar to a $\forall$-reduction rule was even viable. Recently there has been a  suggestion to define QBF analogues of $\ns$ based on  $\forall$-expansion \cite{CCKS23}, but these differ considerably from our approach here and the $\forall$-reduction paradigm discussed above.   
 
 We show that an approach similar to a $\forall$-reduction rule \emph{does} allow to define QBF versions of $\ns$/$\SA$/$\sos$, which we call $\qns$/$\qsa$/$\qsos$ respectively. We argue that our definitions are quite natural: they add to the algebraic equations of $\ns$/$\SA$/$\sos$ simple polynomials that strongly resemble $\forall$-reduction and meet the same technical condition on variable dependence.

We begin the systematic study of these QBF proof systems in terms of lower and upper bounds, strategy extraction, and simulations.  
 Concerning the latter, \cref{fig:p-sim-a} recalls the relations between propositional $\ns$/$\SA$/$\sos$ and further propositional proof systems such as Resolution ($\res$) and Polynomial Calculus ($\pc$). In  \cref{fig:p-sim-b}, we depict our results on the new QBF systems $\qns$/$\qsa$/$\qsos$ and how they relate to $\qures$ and $\qpc$. The figures are virtually identical: what changes is that proofs of the simulations, although mimicking those in the propositional setting, require  extra care. \cref{fig:p-sim-c} depicts the simulation order  when we factor out the propositional complexity and consider \emph{genuine} QBF hardness stemming from quantifier alternations -- a framework that has become standard in QBF proof complexity (cf.\ \cite{BHP20,Che17} for background). We call the `genuine' size measure qsize, which only counts monomial size in the new $\forall$-reduction polynomials. Lower bounds on qsize are tighter and trivially imply lower bounds on the traditional size measure that counts all monomials. Hence, lower bounds and separations in qsize are harder to obtain. In fact, $\qsa$ and $\qsos$ become equivalent w.r.t. qsize while they are separated w.r.t.\ size.
 
 The fact that the systems we define fit so nicely into the lattice of QBF proof systems using the $\forall$-reduction approach suggests that the definitions we give are natural analogues of  $\forall$-reduction in this context. The analogy with the $\forall$-reduction rule of $\qures$ gets even  clearer when using the language of weighted clauses and Resolution to describe $\ns$ and $\SA$ \cite{BBL.24}. For simplicity, we describe $\qns$ and $\qsa$ first using the usual algebraic language, and in \Cref{sec:p-simulations-extra} we describe $\qsa$ also in the language of weighted clauses. 

\begin{figure}[h!]
\captionsetup[subfigure]{justification=centering}
\usetikzlibrary{calc}
\usetikzlibrary{backgrounds}
\centering
\tikzset{
every node/.append style={semithick, rounded corners, align=center, node distance=8em},
psim/.style = {thick, ->, >=stealth, shorten >=1pt, shorten <=2pt},
not psim/.style ={psim, dashed, color=black!50!white, shorten >=2pt, shorten <=2pt},
}

\newcommand\SingleLine[5][auto, sloped, near end]{
    \draw[#4, ->](#2)-- node [#1] {\footnotesize #5} (#3); 
    }

\newcommand\DoubleLine[7][3pt]{%
    \path(#2)--(#3)coordinate[at start](h1)coordinate[at end](h2);
    \draw[#4, ->]($(h1)!#1!90:(h2)$) -- node[above,sloped] {\footnotesize #5} ($(h2)!#1!-90:(h1)$); 
    \draw[#6, <-]($(h1)!#1!-90:(h2)$)-- node [below,sloped] {\footnotesize #7} ($(h2)!#1!90:(h1)$);
    }
 \begin{adjustbox}{max width=.9\textwidth}
\begin{subfigure}[t]	{.3\textwidth}
	\centering
	\begin{tikzpicture}
		\begin{scope}[every node/.append style={draw,fill=black!10!white}]
		\node (ns) at (0,0){$\ns$};
		\node[left of=ns] (res) {$\res$};
		\node[above of=ns] (sa) {$\SA$};
		\node[left of =sa] (pc) {$\pc$};
		\node (sos) at ([yshift=6em]$(pc)!0.5!(sa)$) {$\sos$};
		\end{scope}
		\DoubleLine{res}{ns}{not psim}{}{not psim}{}
		\SingleLine[sloped, above]{sa}{ns}{psim}{(trivial)}
		\DoubleLine{sos}{sa}{psim}{}{not psim}{\cite{ALN.16}}
		\SingleLine{sa}{res}{psim}{\cite{DMR.09}}
		\SingleLine[sloped,above]{pc}{sa}{not psim}{PHP}
		\SingleLine{pc}{res}{psim}{}
		\SingleLine{pc}{ns}{psim}{(trivial)}
		\SingleLine[sloped,above]{sos}{pc}{psim}{\cite{Berkholz18}}
	\end{tikzpicture}
\caption{propositional systems\label{fig:p-sim-a}}
\end{subfigure}
\hspace{0.01\textwidth}
\begin{subfigure}[t]	{.3\textwidth}
\begin{tikzpicture}
		\begin{scope}[every node/.append style={draw,fill=black!10!white}]
		\node (ns) at (0,0){$\qns$};
		\node[left of=ns] (res) {$\qures$};
		\node[above of=ns] (sa) {$\qsa$};
		\node[left of =sa] (pc) {$\qpc$};
		\node (sos) at ([yshift=6em]$(pc)!0.5!(sa)$) {$\qsos$};
		\end{scope}
		\DoubleLine{res}{ns}{not psim}{}{not psim}{}
		\SingleLine[sloped,above]{sa}{ns}{psim}{(trivial)}
		\DoubleLine{sos}{sa}{psim}{Thm.~\ref{thm:psim-sos-sa}}{not psim}{\cite{ALN.16}}
		\SingleLine{sa}{res}{psim}{Thm.~\ref{thm:psim-sa-res}}
		\SingleLine[sloped,above]{pc}{sa}{not psim}{PHP}
		\SingleLine[sloped,above]{pc}{res}{psim}{\cite{BBH19}}
		\SingleLine{pc}{ns}{psim}{Thm.~\ref{thm:psim-pc-ns}}
		\SingleLine[sloped,above]{sos}{pc}{psim}{Thm.~\ref{thm:psim-sos-pc}}
\end{tikzpicture}
\caption{QBF systems\\p-simulations w.r.t.\ $\size$\label{fig:p-sim-b}}
\end{subfigure}
\hspace{0.01\textwidth}
\raisebox{2pt}{%
\begin{subfigure}[t]	{.3\textwidth}
\begin{tikzpicture}
		\node[draw,fill=black!10!white] (ns) at (0,0){$\qns$};
		\node[left of=ns,draw,fill=black!10!white] (res) {$\qures$};
		\node[above of=ns] (sa) {$\qsa$};
		\node[above of =res, draw,fill=black!10!white] (pc) {$\qpc$};
		\node (sos) at ([yshift=6em]$(pc)!0.5!(sa)$) {$\qsos$};
		\begin{scope}[on background layer]
		\draw[rounded corners=1mm,semithick,fill=black!10!white] (sos.north east) -| (sos.west) |- (sos.south west) -- (sa.south west) -- (sa.south) -| (sa.north east) --cycle;
		\end{scope}
		\DoubleLine{res}{pc}{not psim}{Q-ExactMaj}{psim}{\cite{BBH19}}
		\SingleLine[sloped,above]{pc}{ns}{psim}{Thm.~\ref{thm:psim-pc-ns}}
		\DoubleLine{[yshift=.1em]pc.north}{[xshift=-.2em]sos.west}{not psim}{Q-Maj}{psim}{Thm.~\ref{thm:psim-sos-pc}}
		\DoubleLine{[xshift=-.7em]sos.south east}{[xshift=-.7em]sa.north}{psim}{Thm.~\ref{thm:psim-sos-sa}}{psim}{Thm.~\ref{thm:pequiv-semialg}}
		\end{tikzpicture}
\caption{QBF systems\\p-simulations w.r.t.\ $\qsize$ \label{fig:p-sim-c}}
\end{subfigure}
}
\end{adjustbox}
\caption{
\label{fig:p-sim}
Simulations and separations between algebraic proof systems in the propositional and the QBF setting. 
By $P \rightarrow Q$
 we indicate that  proof system $P$ polynomially simulates $Q$, while $ P\ \textcolor{black!50!white}{\dashrightarrow}\ Q$
 means that the proof system $P$ does not polynomially simulate $Q$. We omit  polynomial (non-)simulations  implied by those displayed.
}
\end{figure}

For lower and upper bounds we develop and adapt  techniques that originate both from propositional and QBF proof complexity. 

Regarding the transfer of propositional techniques, we show how to lift common techniques for $\sos$ from the propositional setting to $\qsos$: we establish a size-degree relation analogous to the propositional one \cite{AH.19}, and  show how to adapt the notion of pseudo-expectation, the prime lower-bound method for semi-algebraic systems \cite{FKP.19} to the QBF setting. Both adaptations require interesting modifications and do not just replicate the propositional techniques (see \cref{lemma:sizedegree_helper} and \cref{def:Q-pseudoexpectation} with the discussion thereafter). We use pseudo-expectations to show an exponential lower bound for $\qsos$ for the $\equality$ QBFs \cite{BBH18} with respect to the tighter qsize measure (\cref{thm:Equality}).

Regarding QBF techniques, we develop strategy extraction for $\qsos$. Strategy extraction has become the predominant technique to analyse QBF proof systems (see~\cite{BeyersdorffCJ19,BBCP20,BBH19} for instance) and is also of tremendous practical importance for QBF solving and verification \cite{qbfhandbook,ELW13,Balabanov12}. Specifically, we show that $\qsos$ allows strategy extraction by polynomial threshold functions and develop a new score game interpretation. Interestingly, this  score game allows to characterise genuine proof size in $\qsos$ and $\qns$ (Theorem~\ref{thm:game_equivalence}). We also use it to elegantly show completeness of the new systems and a linear upper bound for the $\qmaj$ QBFs, which are known to be hard for $\qpc$ \cite{BeyersdorffHKS24},  yielding the separation depicted in \cref{fig:p-sim-c}. 

\paragraph{Structure of the paper.}
\Cref{sec:preliminaries} contains preliminaries and notation.
\Cref{sec:algQBF} defines our new semi-algebraic QBF systems and shows their  soundness and completeness together with the size-degree relation for $\qsos$.
\Cref{sec:strategy} contains the strategy extraction for $\qsos$ and some consequences.
In \Cref{sec:pseudoexpectation} we develop the lower bound technique of pseudo-expectations for $\qsos$ and show an exponential lower bound.
\Cref{sec:p-simulations-extra} shows a lifted weighted resolution to QBF and its equivalence to $\qsa$.
In \Cref{sec:p-simulations} we compare the QBF systems via p-simulations.
We conclude in \Cref{sec:conclusions} with some  open problems.

\section{Preliminaries}
\label{sec:preliminaries}

\paragraph{QBF preliminaries.}
We consider Quantified Boolean Formulas (QBF) of the form $\mathcal{Q}. \phi$, where $\phi$ is a CNF formula and $\mathcal Q$ is the quantifier prefix. Both the variables of $\phi$ and the variables of $\mathcal Q$ range over a set of Boolean variables $V$. Let $\mathrm{vars}_\forall(\mathcal Q)$ (resp.\ $\mathrm{vars}_\exists(\mathcal Q)$) be the set of universally (resp.\ existentially) quantified variables in $\mathcal Q$. 

The evaluation of a QBF formula $\mathcal Q.\phi$ can be seen as a game (the \emph{evaluation} game) between two players: the existential $\exists$-player and the universal $\forall$-player, where the $\exists$-player's goal is to satisfy the formula $\phi$ and the $\forall$-player's goal is to falsify it. The players take turns according to the order of the quantifiers in $\mathcal Q$. We call this game the evaluation game to distinguish it from a new game, the \emph{score} game, which we introduce in \Cref{sec:completeness}.

A very well-studied QBF proof system  is $\qures$ \cite{BWJ14,KBKF95,Gelder12}, which can be seen as a natural extension of the propositional proof system Resolution \cite{Bla37,Rob65} to the QBF setting. 

The QBF proof system $\qures$ refutes a false QBF $\mathcal Q.\phi$ inferring the empty clause $\bot$ from the clauses in $\phi$ using the resolution  rule $\frac{C\lor v\quad D\lor \lnot v}{C\lor D}$, but also using a \emph{$\forall$-reduction} rule $\frac{C\lor u}{C}$, where all the variables in $C$ must be  on the left of $u$ in $\mathcal Q$. 
The \emph{size} of a $\qures$ refutation $\pi$ ($\size(\pi)$) is the number of applications of rules in $\pi$, while \emph{Q-size}  ($\qsize(\pi)$) is the number of applications of the $\forall$-reduction rule.

\paragraph{Algebraic proof systems.}
Given a set of Boolean variables $V$, let $\overline V$ be the set of new formal variables  $\overline v$ for $v\in V$. We consider polynomials with rational coefficients and variables in $V\cup \overline V$, i.e.\ polynomials in the ring $\mathbb Q[V\cup\overline V]$. Given a polynomial $p$ and an assignment $\alpha$ of its variables, we denote with $\restr{p}{\alpha}$ the evaluation of $p$ in $\alpha$. 

In this work we encode clauses and CNF formulas into polynomials using  the so-called \emph{twin-variables encoding}.
A clause $C=\bigvee_{v\in P}v\lor \bigvee_{v\in N}\lnot v$ is encoded as the set of polynomials
\[
\enc(C) = \left\{\prod_{v\in P}\overline v\prod_{v\in N}v \right\}\cup \{v^2-v,\ v+\overline v -1 : v\in P\cup N\}\ .
\]
A CNF  $\phi=\bigwedge_{j=1}^m C_j$ is  encoded as a set of polynomials $\enc(\phi)=\bigcup_{j=1}^{m}\enc(C_j)$. The formula $\phi$ is satisfiable if and only if the set of polynomial equalities $\{p=0 : p\in \enc(\phi)\}$ is satisfiable. 

\begin{fact}
\label{fact:evaluation}
	Given a polynomial $r\in \mathbb Q[V\cup \overline V]$, if $r$ evaluates to $0$ over every Boolean assignment satisfying $\phi$ (and setting $v+\overline v$ to $1$), then $r$ is in the ideal generated by $\enc(\phi)$, i.e.\ there are polynomials $q_p$ such that $r=\sum_{p\in \enc(\phi)}q_pp$.
\end{fact}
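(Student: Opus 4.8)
The plan is to prove this by first normalising $r$ to a multilinear polynomial in the variables of $\phi$ and then running a Lagrange-interpolation argument over the Boolean cube. Throughout I assume, as is implicit in the statement, that $r$ mentions only variables occurring in $\phi$ (otherwise the claimed identity is impossible); write $W=\vars(\phi)$.

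First I would use the axioms $v^2-v$ and $v+\overline v-1$, which lie in $\enc(\phi)$ for every $v\in W$, to rewrite $r$: substituting $\overline v\mapsto 1-v$ eliminates all barred variables, and repeatedly replacing $v^2$ by $v$ kills all higher powers. This yields a multilinear polynomial $\hat r\in\mathbb Q[W]$ together with an explicit decomposition $r=\hat r+\sum_{v\in W}a_v(v^2-v)+\sum_{v\in W}b_v(v+\overline v-1)$ for suitable cofactors $a_v,b_v$. In particular $r-\hat r\in\langle\enc(\phi)\rangle$, and $r$ and $\hat r$ take the same value on every assignment $\alpha$ with $\alpha(v)\in\{0,1\}$ and $\alpha(\overline v)=1-\alpha(v)$; hence $\hat r$ vanishes on every $\beta\in\{0,1\}^W$ satisfying $\phi$.

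Next I would expand $\hat r$ in the point-indicator basis. For $\beta\in\{0,1\}^W$ set $\delta_\beta=\prod_{v:\,\beta(v)=1}v\cdot\prod_{v:\,\beta(v)=0}(1-v)$, the unique multilinear polynomial with $\delta_\beta(\gamma)=1$ if $\gamma=\beta$ and $0$ otherwise. Since the $\delta_\beta$ form a basis of the multilinear polynomials in $W$ and a multilinear polynomial is determined by its values on $\{0,1\}^W$, we get $\hat r=\sum_\beta\hat r(\beta)\,\delta_\beta=\sum_{\beta\not\models\phi}\hat r(\beta)\,\delta_\beta$. For each $\beta$ falsifying $\phi$, fix a clause $C=\bigvee_{v\in P}v\lor\bigvee_{v\in N}\lnot v$ of $\phi$ falsified by $\beta$; then $\beta(v)=0$ for $v\in P$ and $\beta(v)=1$ for $v\in N$, so every (distinct, linear) factor of $m_C:=\prod_{v\in P}(1-v)\prod_{v\in N}v$ already appears among the factors of $\delta_\beta$, whence $\delta_\beta=g_\beta\cdot m_C$ for some product $g_\beta$ of remaining linear factors. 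Therefore $\hat r=\sum_{\beta\not\models\phi}\hat r(\beta)\,g_\beta\,m_{C(\beta)}$ lies in the ideal of $\mathbb Q[W]$ generated by the $m_C$. Finally, $m_C$ is precisely the image of the monomial $\prod_{v\in P}\overline v\prod_{v\in N}v\in\enc(C)$ under $\overline v\mapsto 1-v$, so their difference lies in $\langle v+\overline v-1:v\in W\rangle\subseteq\langle\enc(\phi)\rangle$; hence $m_C\in\langle\enc(\phi)\rangle$, so $\hat r\in\langle\enc(\phi)\rangle$, and combining with $r-\hat r\in\langle\enc(\phi)\rangle$ yields $r\in\langle\enc(\phi)\rangle$.

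The argument is essentially bookkeeping, so I do not expect a serious obstacle; the one point that needs care is the normalisation step, where one must track the cofactors $a_v,b_v$ honestly rather than merely reason about values on the cube, so as to be certain that $r-\hat r$ really lies in the ideal and not just in its radical. An alternative, less constructive route: $\langle\enc(\phi)\rangle$ contains $w^2-w$ for every $w\in W\cup\overline W$ (for a barred variable this follows from $v^2-v$ together with $v+\overline v-1$), hence it is a radical ideal, and its variety over $\overline{\mathbb Q}$ is exactly the set of Boolean, twin-consistent assignments satisfying $\phi$; Hilbert's Nullstellensatz then gives $r\in\sqrt{\langle\enc(\phi)\rangle}=\langle\enc(\phi)\rangle$ directly.
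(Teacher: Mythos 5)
The paper states this as an unproven \emph{Fact}, so there is no in-paper argument to compare against; your proof is correct and fills the gap. The normalisation step is sound: reducing $\overline v\mapsto 1-v$ and $v^2\mapsto v$ produces a multilinear $\hat r$ with $r-\hat r$ an explicit polynomial combination of the axioms $v^2-v$ and $v+\overline v-1$ (so genuinely in the ideal, not just its radical), and the Lagrange expansion together with the divisibility of $\delta_\beta$ by $m_{C(\beta)}$ for a clause $C(\beta)$ falsified by $\beta$ finishes the job. Your caveat about variables is well taken: as defined, $\enc(\phi)$ only supplies the Boolean and twin axioms for variables occurring in some clause of $\phi$, so the Fact as literally written tacitly assumes $r$ mentions only such variables (in the paper's applications this is the case, or one should read $\enc(\phi)$ as implicitly including those axioms for all of $V$). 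The alternative Nullstellensatz route also works, but it silently uses two further points worth naming: the quotient $\mathbb Q[V\cup\overline V]/\langle\enc(\phi)\rangle$ is a finite product of copies of $\mathbb Q$, so the extended ideal over $\overline{\mathbb Q}$ stays radical under base change, and the contraction of that extended ideal back to $\mathbb Q[V\cup\overline V]$ equals $\langle\enc(\phi)\rangle$ by flatness of the field extension, which is what lets you conclude membership over $\mathbb Q$ rather than merely over $\overline{\mathbb Q}$. The constructive interpolation argument avoids all of this and is the better choice here.
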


A refutation of an unsatisfiable CNF  $\phi$ in variables  $V$ in the  system Nullstellensatz, $\ns$,  (resp.\ Sherali-Adams, $\SA$, or Sum-of-Squares, $\sos$) is an algebraic identity $\pi$ of the form
\begin{equation}
 \label{eq:NS-SA-SOS}
    \sum_{p\in \enc(\phi)} q_p p+ q + 1 = 0\ ,
\end{equation}
where all the polynomials $q_p$, $q$ are in $\mathbb{Q}[V\cup\overline V]$, and  for $\ns$ $q$ is identically $0$ (resp.\ for $\SA$ $q$ is a polynomial with non-negative coefficients, and for $\sos$ $q=\sum_{s\in S}s^2$, that is $q$ is a sum of squares).  The \emph{size} of the $\ns/\SA/\sos$ refutation $\pi$, $\size(\pi)$, is the number of monomials (counted with repetition) in $q_p$ and $q$. The \emph{degree} of $\pi$, $\deg(\pi)$, is the maximum degree of any of the polynomials $q_p p$, and $q$.

\begin{rem}[On variations of $\ns$, $\SA$ and $\sos$]
	The proof system $\ns$ has been considered also for polynomials over arbitrary fields \cite{BGIP.01}. In this paper we focus only on polynomials with rational coefficients. The proof systems $\ns$, $\SA$ and $\sos$ have been also studied using a different encoding of CNF formulas: the encoding $\enc'(C)$, which is the same as $\enc(C)$ but with each $\bar v$ variable substituted by $1-v$.  The systems $\ns$, $\SA$ and $\sos$ under the $\enc'$ encoding are exponentially weaker than the corresponding system under the encoding $\enc$ \cite{RezendeLN021}. In this paper we focus only on polynomials using the encoding $\enc$. The proof system $\sos$ has also been studied recently on Boolean variables representing the Boolean values as $\pm 1$ instead of $0/1$ \cite{Sokolov20}, i.e. using instead of the polynomials $v^2-v$ the polynomials $v^2-1$. In this paper we focus only on polynomials using $0/1$-valued variables. 
\end{rem}

Another well-studied algebraic propositional proof system is \emph{Polynomial Calculus} ($\pc$) \cite{CEI.96}. A Polynomial Calculus (over $\mathbb Q$) refutation of the set of polynomials $\enc(\phi)$, for an unsatisfiable CNF formula $\phi$, is a sequence of polynomials showing that $1$ can be derived from $\enc(\phi)$ using the inference rules  $\frac{p\quad q}{p+q}$ for  polynomials $p,q$, and $\frac{p}{v p}$ where $v$ is a variable or $v\in \mathbb Q$. The \emph{degree} and (monomial) \emph{size} of the refutation are respectively the largest degree of a polynomial in it and the number of monomials in it (counted with multiplicity).

In \cite{BBCP20,BeyersdorffHKS24}, the authors showed how to extend the proof system $\pc$ to the QBF context. This resulted in QBF proof system \emph{Q-Polynomial Calculus} ($\qpc$). $\qpc$ refutes a QBF $\mathcal Q.\phi$ analogously to the system $\pc$,  i.e.\ showing that the polynomial $1$ can be derived from the polynomials in $\enc(\phi)$ using the inference rules  $\frac{p\quad q}{p+q}$ for  polynomials $p,q$, and $\frac{p}{v p}$ where $v$ is a variable or $v\in \mathbb Q$, but also using a \emph{$\forall$-reduction} rule $\frac{p}{\restr{p}{u=b}}$, for $b\in \{0,1\}$ where all the variables in $p$ distinct from $u$ must be   left of $u$ in $\mathcal Q$. The $\size$ of a $\qpc$ refutation is the number of monomials in the refutation (counted with repetition), while the $\qsize$ is the number of monomials in the polynomials involved in the $\forall$-reduction steps (again counted with repetition). The reason to study $\qsize$ and not just $\size$ is to factor out the propositional hardness of the principles and focus on genuine QBF hardness (cf.\ \cite{BHP20,Che17}).

\section{Algebraic systems for QBFs}
\label{sec:algQBF}

We introduce new QBF proof systems inspired by  propositional $\ns/\SA/\sos$. We call them $\qns/\qsa/\qsos$. As their propositional counterparts they are static proof systems: a refutation of a false QBF  $\mathcal Q.\phi$ over variables $V$ in Q-Nullstellensatz, $\qns$, (resp.\ Q-Sherali-Adams, $\qsa$,  and Q-Sum-of-Squares, $\qsos$) is an algebraic identity $\pi$ of the form%
\begin{equation}
 \label{eq:QNS-QSA-QSOS}
    \sum_{p\in \enc(\phi)} q_p p+ \sum_{u\in \mathrm{vars}_\forall(\mathcal Q)}q_u(1-2u)+ q + 1 = 0\ ,
\end{equation}
where all the polynomials $q_p$, $q_u$ $q$ are in $\mathbb{Q}[V\cup\overline V]$, the variables in $q_u$ are all quantified before  $u$ in $\mathcal Q$ (i.e.\ on the left of $u$), and for $\qns$ $q$ is identically $0$ (resp.\ for $\qsa$ $q$ is a polynomial with non-negative coefficients, and for $\qsos$ $q=\sum_{s\in S}s^2$, that is $q$ is a sum of squares). We call the expression in eq.~\eqref{eq:QNS-QSA-QSOS} a \emph{$\qns$-refutation} of $\mathcal Q.\phi$ (resp.\ $\qsa$-refutation/$\qsos$-refutation).

\begin{defi}[size, degree, $\qsize$ and $\qdeg$]
\label{def:deg-size}
 The \emph{size} of a $\qns/\qsa/\qsos$ refutation~$\pi$ ($\size(\pi)$) is the number of monomials (counted with repetition) in $q_p$, $q_u$ and $q$. The \emph{degree} of $\pi$ ($\deg(\pi)$) is the maximum degree of any of the polynomials $q_pp$, $q_u(1-2u)$, and~$q$. 

The \emph{Q-size} of $\pi$ \emph{($\qsize(\pi)$)} is  defined analogously to the $\size$ but accounts only for the monomials in the polynomials $q_u$. The existential Q-degree of $\pi$ \emph{($\qdeg(\pi)$)}  is the maximum existential degree of any $q_u$, where the existential degree is the highest number of existentially quantified variables in any monomial. 	
\end{defi}

The definitions of size and degree for $\qns/\qsa/\qsos$ are completely analogous to the definitions in the propositional setting, while Q-size and Q-degree factor out   propositional hardness and therefore give measures more appropriate to study principles where the hardness stems from quantification.
The definition of Q-size also aligns with genuine QBF hardness measures defined in \cite{BHP20} and analysed e.g.\ in \cite{BBMP22,BeyersdorffHKS24} for QU-Resolution and $\qpc$, where only universal reduction steps are counted. In a sense, the polynomial $q_u$ in \eqref{eq:QNS-QSA-QSOS} can be understood as a universal reduction step on $u$.  In particular, on QBFs without universal variables, $\qns/\qsa/\qsos$ are equivalent to their propositional counterparts $\ns/\SA/\sos$.

Any lower bound on Q-size immediately implies the same lower bound on size.
The reason to consider the \emph{existential} Q-degree is a connection between Q-size and existential Q-degree similar to the inequality between size and width in resolution \cite{BBMP22} (see \Cref{sec:size-degree}).

As a first result we prove that \qns, \qsa, \qsos\ are sound QBF proof systems.
\begin{thm}[soundness]
	\label{thm:soundness}
        If there exists a \qns- or \qsa- or \qsos-refutation of $\mathcal Q.\phi$, then $\mathcal Q.\phi$ is false.
\end{thm}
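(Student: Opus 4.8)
The plan is to prove the contrapositive by directly exhibiting a winning strategy for the $\forall$-player in the evaluation game of $\mathcal Q.\phi$; by the game semantics this means $\mathcal Q.\phi$ is false. Fix a $\qns/\qsa/\qsos$-refutation $\pi$ as in \eqref{eq:QNS-QSA-QSOS}. The crucial observation is the side condition: for each $u\in\mathrm{vars}_\forall(\mathcal Q)$ the polynomial $q_u$ only mentions variables quantified to the left of $u$. Hence in the evaluation game, where variables are assigned in prefix order, by the time it is the $\forall$-player's turn to set $u$ all variables occurring in $q_u$ have already received Boolean values, so $q_u$ has collapsed to a fixed rational number $c_u$ (which may depend on the $\exists$-player's earlier choices). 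Since $1-2u$ equals $+1$ when $u=0$ and $-1$ when $u=1$, the term $q_u(1-2u)$ will evaluate to $c_u$ if the $\forall$-player plays $u:=0$ and to $-c_u$ if it plays $u:=1$. I therefore let the $\forall$-player play $u:=0$ when $c_u\ge 0$ and $u:=1$ when $c_u<0$; this makes the contribution of $q_u(1-2u)$ equal to $\lvert c_u\rvert\ge 0$ regardless of the $\exists$-player's responses.

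It remains to argue this strategy is winning. Let $\alpha$ be any complete Boolean assignment arising from a play in which the $\forall$-player follows the above rule, and extend $\alpha$ by $\overline v:=1-v$ for all $v$ (including universal variables). Suppose towards a contradiction that $\alpha\models\phi$. Then every clause of $\phi$ is satisfied, so every clause-monomial $\prod_{v\in P}\overline v\prod_{v\in N}v$ in $\enc(\phi)$ vanishes under $\alpha$, while the twin-variable axioms $v^2-v$ and $v+\overline v-1$ vanish because $\alpha$ is Boolean with $\overline v=1-v$; hence $\sum_{p\in\enc(\phi)}q_p p$ evaluates to $0$. By the choice of the $\forall$-moves, $\sum_{u}q_u(1-2u)$ evaluates to $\sum_u\lvert c_u\rvert\ge 0$. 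Finally $q\big|_\alpha\ge 0$ in all three systems: for $\qns$ it is identically $0$; for $\qsa$ it has non-negative coefficients and $\alpha$ assigns every variable in $V\cup\overline V$ to $0$ or $1$; for $\qsos$ it is a sum of squares. Evaluating the identity \eqref{eq:QNS-QSA-QSOS} at $\alpha$ thus yields $0+(\ge 0)+(\ge 0)+1=0$, i.e.\ a quantity that is at least $1$ equals $0$ — a contradiction. Therefore $\alpha\not\models\phi$, the $\forall$-player's strategy is winning, and $\mathcal Q.\phi$ is false.

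I do not expect a serious obstacle: the argument is uniform across the three systems, the only system-specific point being the one-line remark that $q\big|_\alpha\ge 0$. The two places that need care are (i) aligning the syntactic side condition ``variables of $q_u$ lie left of $u$'' with the prefix order of moves in the evaluation game, so that $q_u$ is genuinely a constant when $u$ is played and the $\forall$-player may react to the already-fixed existential values; and (ii) the bookkeeping that $\overline v$ is consistently interpreted as $1-v$ throughout (also for universal $v$), so that the axioms in $\enc(\phi)$ vanish and the contribution of the clause-monomials is exactly captured by ``$\alpha\models\phi$''.
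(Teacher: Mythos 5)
Your proof is correct, but it takes a genuinely different route from the paper's. The paper proves soundness by an averaging argument: assuming towards contradiction that the QBF is true, it fixes a winning existential strategy $\sigma$, takes the uniform distribution over all universal strategies $\tau$, and shows $\E_\tau[q_u(1-2u)] = 0$ for each universal $u$ (since $q_u$ is independent of $\tau_{\ge u}$ and $\E[1-2u]=0$), so that the expectation of the left-hand side of \eqref{eq:QNS-QSA-QSOS} is at least $1$ while the right-hand side is $0$. You instead construct an explicit winning universal strategy: at each $u$, since $q_u$ has already collapsed to a constant $c_u$, play $u$ to make $q_u(1-2u) = |c_u| \ge 0$, and then observe that any resulting play satisfying $\phi$ would force the left-hand side to be at least $1$. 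Your choice of move is exactly $u = \tfrac12(1-\sign(q_u))$, which the paper establishes separately as its strategy extraction result (Theorem~\ref{thm:strategy-extraction}); so in effect you have proved soundness as an immediate corollary of strategy extraction, rather than via the paper's probabilistic argument. Both are sound; your route has the advantage of being constructive (it exhibits a countermodel), while the paper's averaging argument is more minimal in that it never commits to a specific universal response and instead observes that the $q_u(1-2u)$ terms vanish on average. One small point you handled correctly but is worth flagging: the twin variables $\overline v$ are not part of the game, so you must (and do) extend the game assignment by $\overline v := 1-v$ in order for the twin-variable axioms in $\enc(\phi)$ to vanish.
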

\begin{proof}
	Suppose, for a contradiction, that $\mathcal Q.\phi$ is a true QBF, so the  $\exists$-player has a winning strategy $\sigma$, but at the same time there is a refutation of $\mathcal Q.\phi$ of the form as in eq.~\eqref{eq:QNS-QSA-QSOS}
where all the variables in polynomials $q_u$ are on the left of $u$ and $q$ is identically zero (\qns), or a polynomial with non-negative coefficients (\qsa), or a sum of squares (\qsos).
	
	For every strategy $\tau$ of the $\forall$-player, the game proceeds following the strategies $\sigma$ and $\tau$ and constructs a total Boolean assignment $\alpha_{\sigma,\tau}$  satisfying the matrix $\phi$. That is $\sum_{p\in \enc(\phi)} q_p p$ evaluates to $0$ under every assignment $\alpha_{\sigma,\tau}$. For a universal variable $u$, we write $\tau_{<u}$ for the part of $\tau$ on variables to the left of $u$ and $\tau_{\ge u}$ for the rest of~$\tau$.	
Taking a uniform probability distribution over all universal strategies $\tau$, for every universal variable $u$ it holds that:
	\[ 
	\E_\tau \left[ q_u (1-2u) \right] \stackrel{(\star)}{=} \E_{\tau_{<u}} \left[ q_u \E_{\tau_{\ge u}} \left[1-2u \right] \right] = \E_{\tau_{<u}} \left[ q_u \cdot 0 \right] = 0\ ,
	\]
		where in the equality $(\star)$ we used the fact that all the polynomials $q_u$ only depend on variables on the left of $u$.
	Hence, evaluating both sides of eq.~\eqref{eq:QNS-QSA-QSOS} on $\alpha_{\sigma,\tau}$ and taking $\E_\tau$, the LHS equals $\E_\tau [\restr{q}{\alpha_{\sigma,\tau}} + 1 ]$, which is always at least $1$, while the RHS is $0$. Contradiction.
\end{proof}

\begin{rem}[\qns over arbitrary fields]
The definition of $\qns$ from eq.~\eqref{eq:QNS-QSA-QSOS} can be trivially adapted from polynomials over $\mathbb Q$ to arbitrary fields of characteristic different from $2$. In characteristic $2$ it gives an unsound system since all the terms $(1-2u)$ are identically~1.  Indeed, in characteristic 2 every formula with at least one universal variable could be ``refuted'' by setting all $q_p=0$ and a single $q_u=1$.
\end{rem}

\begin{rem}[unary vs binary coefficients]
\label{rmk:unary}
Unlike in the propositional setting where the unary versions of $\ns/\SA/\sos$ give rise to non-trivial (and interesting) proof systems \cite{GoosHJMPRT24}, in the QBF setting imposing unary (i.e.\ $\pm 1$) coefficients in $\qns/\qsa/\qsos$ refutations seems to give rise to very weak systems. For instance, unary $\qsos$ cannot even  efficiently refute a false QBF formula as simple as  $\forall u_1 \forall u_2 \cdots \forall u_n. \bigvee_{i=1}^n  u_i$. 
\end{rem}

\begin{prop}
\label{prop:unary}
Any \emph{unary} $\qsos$ refutation of $\forall u_1 \forall u_2 \cdots \forall u_n. \bigvee_{i=1}^n  u_i$  has an exponential number of monomials.
\end{prop}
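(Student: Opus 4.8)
The plan is to exploit the very restrictive structure of a unary $\qsos$ refutation of $\Phi_n := \forall u_1 \cdots \forall u_n.\, \bigvee_{i=1}^n u_i$. Here $\enc(\phi)$ consists of the single monomial $\prod_{i=1}^n \overline{u_i}$ together with the Boolean axioms $u_i^2 - u_i$ and $u_i + \overline{u_i} - 1$. A refutation is an identity $\sum_p q_p p + \sum_{i} q_{u_i}(1-2u_i) + \sum_{s \in S} s^2 + 1 = 0$ where, crucially, $q_{u_i}$ may only mention variables quantified strictly before $u_i$ --- and in $\Phi_n$ there are no existential variables and $u_i$ is preceded only by $u_1,\dots,u_{i-1}$. "Unary" means every coefficient appearing in every $q_p$, every $q_{u_i}$, and (to be pinned down) in the squared polynomials $s$ lies in $\{+1,-1\}$ (so each "monomial" is $\pm$ a monomial, and the size counts these with multiplicity).

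First I would set up a well-chosen evaluation/restriction. Consider restricting to assignments that set $u_i \in \{0,1\}$ and $\overline{u_i} = 1 - u_i$; on such points the axioms $u_i^2-u_i$ and $u_i+\overline{u_i}-1$ vanish, so the identity collapses to $\sum_i q_{u_i}(1-2u_i) + \sum_{s} s^2 + 1 = 0$ as a polynomial identity in the $u_i$ alone (after substituting $\overline{u_i} \mapsto 1-u_i$). Next, the key move: pick the all-zero point $u = \vec 0$, which is the unique assignment falsifying the matrix. At $\vec 0$ every $(1-2u_i)$ equals $1$, so $\sum_i q_{u_i}(\vec 0) + \sum_s s(\vec 0)^2 + 1 = 0$, i.e. $\sum_s s(\vec 0)^2 = -1 - \sum_i q_{u_i}(\vec 0)$. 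The point of the dependency restriction is that $q_{u_1}$ is a constant, and more generally the "triangular" structure lets one control the $q_{u_i}(\vec 0)$ values; but the cleaner route is to argue that the sum of squares term must carry the bulk of the weight and that unary squares are expensively small.

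The heart of the argument is a counting/positivity bound on how a sum of squares of unary polynomials can represent a polynomial that is forced to take value $-1$ somewhere yet is a genuine SOS (hence everywhere $\ge 0$) modulo the Boolean and $\forall$-reduction pieces. I would formalize this by evaluating the identity at cleverly chosen Boolean points: at $\vec 0$ the $\forall$-reduction terms are "active" (each $1-2u_i = 1$), while at points with many $1$'s they flip sign. Playing these evaluations against each other, together with the triangular dependency constraint on the $q_{u_i}$, forces the polynomials $s \in S$ (or the $q_{u_i}$) to have large $\ell_1$-norm of coefficients; since coefficients are $\pm 1$, large $\ell_1$-norm means exponentially many monomials. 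Concretely I expect to show that representing the indicator-like behavior needed at the $2^n$ Boolean points with only $\pm 1$ coefficients requires $2^{\Omega(n)}$ monomials, e.g. by a dimension or discrepancy argument: the vector of evaluations of a unary-coefficient polynomial over the Boolean cube is a $\pm$ sum of $s$ rows of the $\pm1$-Fourier/Möbius matrix, and matching a target that is $-1$ at one point and $\ge 0$ elsewhere forces $s = 2^{\Omega(n)}$.

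The main obstacle, and the step I would spend the most care on, is making the SOS positivity interact correctly with the $\forall$-reduction terms $q_{u_i}(1-2u_i)$: these are not sign-definite, so one cannot simply say "$q$ is nonnegative, contradiction." The honest argument must use the dependency restriction --- that $q_{u_i}$ sees only $u_1,\dots,u_{i-1}$ --- in a quantitative way, presumably by an averaging over the last few universal variables (mirroring the expectation trick $\E_{\tau_{\ge u}}[1-2u]=0$ from the soundness proof) to kill the $\forall$-reduction contribution on average, leaving $1 + \E[\sum_s s^2]$ which is $\ge 1 > 0$ on average but must be $0$ --- and then tracking that this averaging, applied to unary polynomials, cannot reduce the monomial count, so the original refutation already had exponential size. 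Turning this averaging-plus-counting into a clean contradiction, rather than the mere unsoundness one gets for free, is where the real work lies; a fallback is to prove the weaker statement directly by exhibiting the $2^n$ linear constraints on coefficients that a unary refutation must satisfy and invoking a rank bound.
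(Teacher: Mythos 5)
You have correctly identified the right tool (averaging over universal strategies, mirroring the soundness proof, kills the $\forall$-reduction contributions), but the proposal has a concrete error that derails the argument: when you ``collapse'' the identity by substituting $\overline{u_i}\mapsto 1-u_i$, the clause monomial $\prod_{i=1}^n\overline{u_i}$ does \emph{not} vanish for all Boolean $u$ --- in particular it evaluates to $1$ at $\vec 0$, precisely the point you then plug in. So the collapsed identity $\sum_i q_{u_i}(1-2u_i)+\sum_s s^2+1=0$ is simply false at $\vec 0$; the missing term $q_0(\vec 0)\cdot\prod\overline{u_i}(\vec 0)=q_0(\vec 0)$ must be kept, and it is in fact the term that carries the entire lower bound. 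Dropping it also explains why your subsequent sketch leads you toward a contradiction ($1+\E[\sum_s s^2]=0$ with LHS $\ge 1$): that would wrongly show no refutation exists at all, whereas the formula is false and refutations do exist.

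The actual argument is much shorter and more elementary than the discrepancy/rank-bound route you outline as a fallback. Keeping the clause term, the identity evaluated at a Boolean assignment $\alpha_\tau$ reads $q_0(\alpha_\tau)\prod_i\overline{u_i}(\alpha_\tau)+\sum_i q_{u_i}(1-2u_i)|_{\alpha_\tau}+q|_{\alpha_\tau}+1=0$. Taking $\E_\tau$ over the uniform distribution, the $\forall$-reduction terms vanish by the dependency condition (your ``$\E_{\tau_{\ge u}}[1-2u]=0$'' observation), and $\prod_i\overline{u_i}(\alpha_\tau)$ is $1$ only at $\tau^*=\vec 0$ and $0$ otherwise, giving $\frac{1}{2^n}q_0(\vec 0)+\E_\tau[q|_{\alpha_\tau}]+1=0$. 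Since $q$ is a sum of squares, $\E_\tau[q]\ge 0$, so $|q_0(\vec 0)|\ge 2^n$. A polynomial with $\pm1$ coefficients, evaluated at a single Boolean point, has absolute value at most its number of monomials, so $q_0$ has at least $2^n$ monomials. No dimension or rank argument over the whole cube is needed; a single evaluation plus the $2^{-n}$ attenuation from averaging suffices.
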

\begin{proof}
Consider a \qsos-refutation of  $\forall u_1 \forall u_2 \cdots \forall u_n. \bigvee_{i=1}^n  u_i$:
	\begin{equation}
	\label{eq:unary-remark}
    \sum_{p\in \enc(\phi)} q_p p+ \sum_{u\in \mathrm{vars}_\forall(\mathcal Q)}q_u(1-2u)+ q + 1 = 0\ .
\end{equation}
We show that, if the polynomials are written as sums of monomials with $\pm 1$ coefficients then there must be an exponential number of them. The argument is very similar to the one used to prove the soundness of \qsos (\Cref{thm:soundness}).

For every universal strategy $\tau$, the evaluation game following $\tau$ gives a total Boolean assignment $\alpha_{\tau}$. The sum $\sum_{p\in \enc(\phi)} q_p p$ under $\alpha_\tau$ evaluates to $q_0(\alpha_\tau) \prod_{i=1}^n\overline{u_i}(\alpha_\tau)$, which is $0$ for every $\tau$ assigning some $u_i=1$ (and hence $\overline{u_i}=0$). Let $\tau^*$ be the universal strategy assigning all variables to $0$.
	For a universal strategy $\tau$ and universal variable $u$, we write $\tau_{<u}$ for the part of $\tau$ on variables to the left of $u$ and $\tau_{\ge u}$ for the rest of $\tau$.
Taking the uniform probability distribution over all universal strategies $\tau$, for every universal variable $u$ it holds that:
	\[ 
	\E_\tau \left[ q_u (1-2u) \right] \stackrel{(\star)}{=} \E_{\tau_{<u}} \left[ q_u \E_{\tau_{\ge u}} \left[1-2u \right] \right] = \E_{\tau_{<u}} \left[ q_u \cdot 0 \right] = 0\ ,
	\]
	where in the equality $(\star)$ we used the fact that all the polynomials $q_u$ only depend on variables on the left of $u$.
	Evaluating both sides of eq.~\eqref{eq:unary-remark} on $\alpha_\tau$ and taking $\E_\tau$ we obtain that
	\[
	\E_\tau[\restr{q_0}{\alpha_\tau}\prod_{i=1}^n \restr{\overline{u_i}}{\alpha_\tau}+\restr{q}{\alpha_\tau}+1]=
	\frac{1}{2^n}\restr{q_0}{\alpha_{\tau^*}}+\E_\tau[\restr{q}{\alpha_\tau}]+1 =0 \ . 
	\]
	Since $q$ is always non-negative, $\E_\tau[\restr{q}{\alpha_\tau}]\geq 0$ and hence the absolute value of $\restr{q_0}{\alpha_{\tau^*}}$ is at least $2^n$. Therefore the polynomial $q_0$ when written as a sum of monomials with $\pm 1$ coefficients must have an exponential number of them.
	\end{proof}

\subsection{Completeness via a score game}
\label{sec:completeness}

To show  completeness of \qns/\qsa/\qsos, we introduce a new \emph{score} game. We call it \emph{score} game to distinguish it from the \emph{evaluation} game used for the  QBF semantics (cf.\ Sec.~\ref{sec:preliminaries}).

The \emph{score} game, as the evaluation game, is played between a universal and an existential player on a QBF $\mathcal Q.\phi$, building a total Boolean assignment. As in the evaluation game, the players take turns according to the quantifier prefix $\mathcal Q$ and the existential player can freely decide on the value of existential variables.
For the universal variables the score game differs from the usual  evaluation game: the universal player gives a preference for the universal variable $u$ in the form of a number $s_u\in \mathbb Q$. Then, the existential player sets $u$ to $b\in \{0,1\}$ and the universal player scores $s_u(2b-1)$ points. There are two variants of this game that differ in the winning condition:
\begin{description}
\item[variant 1] the universal player wins if $\phi$ is falsified or the total score is strictly positive;
\item[variant 2] the universal player wins if $\phi$ is falsified or the total score equals 1.
\end{description}

Clearly every winning strategy of \textsc{variant 2} is also a winning strategy of \textsc{variant 1}.
The intuition behind the universal preferences is that the sign of $s_u$ encodes the preferred assignment (if the preferred assignment is $u=0$ then the universal player sets $s_u>0$, and $s_u<0$ for $u=1$) and the absolute value encodes the magnitude of this preferred choice. 
If the existential player follows the choice, the universal player loses $|s_u|$ points; otherwise he gains the same amount. 

Our interest in the score game is that the universal winning strategies can be transformed into \qns/\qsos refutations.

\begin{prop}
	\label{prop:game_correctness}
	A QBF $\mathcal Q.\phi$ is false if and only if the universal player has a winning strategy in the score game for $\mathcal Q.\phi$ (in either variant).
\end{prop}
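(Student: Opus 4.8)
The plan is to prove three implications and combine them, without invoking determinacy of the score game. Writing $\exists$-player and $\forall$-player for the two players, I would show: (a) if $\mathcal{Q}.\phi$ is true, then the $\exists$-player has a strategy beating every $\forall$-strategy in the score game, in both variants; and (b) if $\mathcal{Q}.\phi$ is false, then the $\forall$-player has a winning strategy in \textsc{variant 2}. Since, as already observed, every \textsc{variant 2} winning strategy of the $\forall$-player is also a \textsc{variant 1} winning strategy, (a) and (b) together give ``$\mathcal{Q}.\phi$ false'' $\iff$ ``$\forall$-player wins \textsc{variant 1}'' $\iff$ ``$\forall$-player wins \textsc{variant 2}'', which is exactly the claim.

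For (a) I would fix a winning $\exists$-strategy $\sigma$ in the \emph{evaluation} game and let the $\exists$-player in the score game (1) set every existential variable according to $\sigma$ --- its input, the values of all variables to the left, is available by the time the variable is reached --- and (2) answer each announced preference $s_u$ by setting $u:=0$ if $s_u\ge 0$ and $u:=1$ if $s_u<0$. Then each round contributes $-s_u\le 0$ or $s_u<0$ to the score, so the total is $\le 0$: never strictly positive, never equal to $1$. Moreover the resulting assignment is an evaluation-game play in which the $\exists$-player follows $\sigma$, hence satisfies $\phi$. So the $\forall$-player wins against this $\exists$-strategy in neither variant.

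For (b) I would prove, by induction on the number of quantifiers in the prefix, the stronger statement: \emph{for every false QBF $\mathcal{R}.\psi$ and every $\theta\in\mathbb{Q}$, the $\forall$-player has a strategy in the score game on $\mathcal{R}.\psi$ guaranteeing that the matrix is falsified or the total score equals $\theta$}; the case $\theta=1$ applied to $\mathcal{Q}.\phi$ is then a \textsc{variant 2} win. If the prefix is empty, $\psi$ is a constant, which being false is falsified. If $\mathcal{R}=\exists x.\mathcal{R}'$, then for whichever value $c$ the $\exists$-player picks, $\mathcal{R}'.\psi|_{x=c}$ is again false and the $\forall$-player recurses with the same target $\theta$. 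If $\mathcal{R}=\forall u.\mathcal{R}'$, at least one of $\mathcal{R}'.\psi|_{u=0}$, $\mathcal{R}'.\psi|_{u=1}$ is false; if both are, the $\forall$-player announces $s_u:=0$ and recurses with target $\theta$ on whichever restriction is chosen; if exactly one, say $\mathcal{R}'.\psi|_{u=b^*}$, is false (so $\mathcal{R}'.\psi|_{u=1-b^*}$ is true), the $\forall$-player announces $s_u:=-\theta(2b^*-1)$, which makes the move $u=b^*$ contribute $-\theta$ to the score and the move $u=1-b^*$ contribute $+\theta$. If the $\exists$-player plays $u=b^*$, recurse on the still-false $\mathcal{R}'.\psi|_{u=b^*}$ with the \emph{doubled} target $2\theta$, so that the matrix is falsified or the total is $-\theta+2\theta=\theta$; if the $\exists$-player plays $u=1-b^*$, the residual QBF is now true, so the $\forall$-player switches to a defensive strategy announcing preference $0$ for all remaining universal variables, no further points are scored, and the total is exactly $\theta$.

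I expect the ``deviation into a true residual'' sub-case to be the main obstacle: there the $\forall$-player has irrevocably lost control of the matrix and must win on score alone, and the doubling of the target is arranged precisely so that the running score telescopes ($-\theta$ at a strict variable the $\exists$-player follows, then re-aimed at $2\theta$; $+\theta$ at a deviation, after which nothing changes) to land at exactly $\theta$. The care needed is in keeping the sign conventions straight and in cleanly separating ``score accumulated so far'' from ``target of the current sub-game''; the remaining steps are routine. Combining (a), (b) and the stated implication from \textsc{variant 2} to \textsc{variant 1} then closes all the equivalences.
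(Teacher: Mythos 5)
Your proof is correct and establishes the same proposition, but the "false $\Rightarrow$ universal wins \textsc{variant 2}" direction follows a genuinely different route from the paper's. The paper fixes an evaluation-game winning strategy $\tau$ for the universal player and gives a closed-form score-game strategy $s_u = (1-2\tau_u)(1-S)$, where $S$ is the running score; the algebra is arranged so that a single deviation by the existential player snaps the total to exactly $1$, and full compliance falsifies the matrix. Your argument instead proceeds by structural induction on the quantifier prefix, maintaining a running target $\theta$ (initially $1$) that is doubled whenever the existential player stays inside a false restriction that faces a true sibling; the telescoping $-\theta + 2\theta = \theta$ vs.\ $+\theta + 0 = \theta$ plays the role of the paper's $(1-S)$ rescaling. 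Both are valid; yours is arguably more self-contained since it derives the universal's preferred moves directly from the recursive structure of falsity rather than pre-committing to a $\tau$, while the paper's formula is more compact and exhibits the strategy explicitly as a function of the running score. Your part (a) is essentially the contrapositive of the paper's second direction and uses the same sign-chasing idea: the existential player answers each announced $s_u$ by the value that makes the contribution $s_u(2b-1)$ non-positive, so the total never exceeds $0$, and following $\sigma$ on existential variables keeps the matrix satisfied. The two parts combine with the observation that \textsc{variant 2} wins are \textsc{variant 1} wins to give all the equivalences, exactly as you say.
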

\begin{proof}
	Let the QBF $\mathcal Q.\phi$ be false.
Then there exists a winning strategy $\tau=(\tau_u)_{u\in \vars_{\forall}(\mathcal Q)}$ for the universal player in the evaluation game. 
	In the score game, on the universal variable~$u$ the universal player plays depending on the total score $S$ up to this point and $\tau$. The preferred choice of the universal player is $\tau_u$ and he assigns to $u$ score
	\begin{align*}
		s_u = \begin{cases}
			0 \quad&\text{if }  S = 1\ ,\\
			 (1-2\tau_u)(1-S)\quad&\text{otherwise}\ .
			\end{cases}
	\end{align*}
	
	If in each step of the game the universal player gets always his preference $\tau_u$, then the matrix $\phi$ is falsified (since $\tau$ is a winning strategy in the evaluation game).
	Otherwise, let $u^*$ be the first variable where the universal player does not get his preference and $S^*$ the total score before deciding the value for $u$. Since, by assumption the universal player does not get his preference but the value $1-\tau_{u^*}$ instead, then, after setting $u^*$, the total score is
	\begin{equation}
		S^*+ s_{u^*}(2(1-\tau_{u^*})-1)=1\ .
	\end{equation}
	At this moment the universal player has essentially just won since he can set all following scores to $0$ and the final total score of the game will be $1$.

For the other direction, if the universal player has a winning strategy in the score game on $\mathcal Q.\phi$ then he wins also against the case when the existential player makes the scores negative in each moment of the game. In this case the resulting assignment must falsify the matrix $\phi$ since the universal player is using a winning strategy. As such, this strategy is also a winning strategy for the universal player in the usual evaluation game and the QBF $\mathcal Q.\phi$ is false.
\end{proof}

We require the universal strategy for each universal variable $u$ to be expressed as a polynomial in all variables to the left of $u$ in the quantifier prefix. 
The \emph{size} of a universal strategy in the score game is then the sum of the number of monomials in all the $s_u$.

\begin{thm}
	\label{thm:game_equivalence}
	Let $\pi$ be a shortest \qsos (resp.\ $\qns$) refutation of $\mathcal Q.\phi$ with respect to its Q-size. Then  $\qsize(\pi)$ equals the size of the shortest universal winning strategy in the score game in \textsc{variant~1} (resp.\ \textsc{variant~2}) on $\mathcal Q.\phi$.
\end{thm}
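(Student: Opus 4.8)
The plan is to prove the two quantities are equal by establishing inequalities in both directions, treating the $\qsos$/\textsc{variant 1} case and the $\qns$/\textsc{variant 2} case in parallel since the arguments are nearly identical, differing only in what the ``remainder'' polynomial $q$ contributes. The key dictionary is: a universal strategy assigns to each $u$ a polynomial $s_u$ in the variables left of $u$; a refutation as in eq.~\eqref{eq:QNS-QSA-QSOS} has polynomials $q_u$ in the variables left of $u$ multiplying $(1-2u)$. The natural correspondence is $q_u \leftrightarrow s_u$ (possibly up to sign, since $(1-2u) = -(2u-1)$ and the score gained on $u$ is $s_u(2b-1)$), so $\qsize(\pi) = \sum_u (\text{monomials in } q_u)$ matches exactly the size of the strategy $\sum_u (\text{monomials in } s_u)$. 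So the content is not the counting but showing that \emph{valid} objects on one side yield \emph{valid} objects on the other with the same $q_u$'s.

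First I would show: a universal winning strategy $(s_u)$ of size $N$ yields a refutation with $\qsize \le N$. Given the winning strategy, run the soundness-style evaluation argument in reverse. For any play, the final score is $\sum_{u} s_u(2 b_u - 1)$ where $b_u$ is the value the $\exists$-player picked; writing this as a polynomial identity in the play variables and using that the $\exists$-player is free, one gets that on every Boolean assignment $\alpha$ satisfying the matrix $\phi$, the quantity $1 - \sum_u s_u(\alpha)(2u-1)\big|_\alpha$ (or the analogous expression) is forced by the winning condition: in \textsc{variant 2} the score is exactly $1$ on every $\phi$-satisfying play, so $\sum_u s_u(1-2u) + 1$ vanishes on all such $\alpha$, hence lies in the ideal of $\enc(\phi)$ by Fact~\ref{fact:evaluation}, giving the $q_p$'s for a $\qns$ refutation with $q_u = s_u$. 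In \textsc{variant 1} the score is merely $>0$ (equivalently, being rationals bounded away suitably, $\ge$ something positive); here one has to absorb the slack into a sum-of-squares term $q$ — the point is that a nonnegative rational value achieved on finitely many Boolean points can be written as $q$ evaluated there plus an ideal member, and since we are on the Boolean cube one can realise any nonnegative function on $\{0,1\}^n$ as a sum of squares of multilinear polynomials. This gives a $\qsos$ refutation, again with the same $q_u = s_u$, so $\qsize \le N$.

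Conversely, from a refutation $\pi$ I would extract a winning strategy of size $\qsize(\pi)$: the $\exists$-player plays his winning strategy $\sigma$ for truth of $\mathcal{Q}.\phi$ — wait, $\mathcal{Q}.\phi$ is false, so it is the $\forall$-player who will use a strategy; instead, the $\forall$-player in the score game, at variable $u$, announces $s_u$ equal to the value of the polynomial $q_u$ on the already-fixed variables to its left (legal, since $q_u$ only mentions those). One then shows this is winning: run the argument of Theorem~\ref{thm:soundness} / Proposition~\ref{prop:unary} along the single play produced, not in expectation. Evaluating eq.~\eqref{eq:QNS-QSA-QSOS} on the play's total assignment $\alpha$: if $\alpha$ falsifies $\phi$ the $\forall$-player wins immediately; otherwise $\sum_p q_p p|_\alpha = 0$, $q|_\alpha \ge 0$ (it is $0$ for $\qns$, nonnegative or a square sum otherwise), and the identity forces $\sum_u q_u(\alpha)(1-2u)|_\alpha = -1 - q|_\alpha \le -1 < 0$, i.e.\ the total score $\sum_u s_u(2u-1)|_\alpha = 1 + q|_\alpha \ge 1$, which is $>0$ (winning \textsc{variant 1}) and, when $q\equiv 0$ as in $\qns$, is exactly $1$ (winning \textsc{variant 2}). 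The size of this strategy is $\sum_u (\text{monomials in } q_u) = \qsize(\pi)$.

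The main obstacle I expect is the \textsc{variant 1}/$\qsos$ direction of the first inequality: turning a purely qualitative winning condition (``score strictly positive on every $\phi$-play'') into an explicit sum-of-squares polynomial $q$ with the property that $\sum_u s_u(1-2u) + q + 1$ is an ideal member. One has to argue that on the finite set of relevant Boolean points the deficit $(\text{score}) - 1 \ge 0$ and then represent this nonnegative-on-the-cube function by a sum of squares — doable because every function on $\{0,1\}^n$ agrees with its multilinear interpolation and nonnegative such functions are sums of squares modulo the Boolean axioms — but it needs care to ensure the bookkeeping genuinely does not inflate $\qsize$ (it does not, since $q$ contributes to $\size$ but not $\qsize$, so we are free to make $q$ as large as needed). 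A secondary subtlety is confirming the sign conventions line up so that the monomial counts of $q_u$ and $s_u$ are literally equal, not merely proportional; this is immediate once one fixes $s_u := q_u$ throughout, which the above construction does.
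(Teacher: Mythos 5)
Your overall plan matches the paper's: in one direction set $s_u := q_u$ and check the score is $1$ (resp.\ $>0$) on every $\phi$-satisfying play; in the other direction set $q_u := s_u$, observe that $\sum_u q_u(1-2u)+q+1$ vanishes on every Boolean point satisfying $\enc(\phi)$, and invoke \Cref{fact:evaluation} to land it in the ideal. The refutation-to-strategy direction is essentially the paper's argument.

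There is, however, a genuine gap in the strategy-to-refutation direction for \textsc{variant~1}/$\qsos$. You assert that on every $\phi$-satisfying $\alpha$ the quantity $\score(\alpha)-1$ is nonnegative, and then represent this nonnegative function on the cube as a sum of squares $q$. But \textsc{variant~1} only guarantees $\score(\alpha)>0$; it may well be that $0<\score(\alpha)<1$, in which case $\score(\alpha)-1<0$ and no sum-of-squares $q$ can match. Your parenthetical ``being rationals bounded away suitably, $\ge$ something positive'' is correct, but ``$\ge$ something positive'' does not imply ``$\ge 1$'', and you never bridge this. The paper's fix is a normalization you omit: set $c=\tfrac{1}{2}\min_{\alpha\models\phi}\score(\alpha)>0$ and replace the strategy $(s_u)_u$ by $(s_u/c)_u$. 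After rescaling, $\tfrac{\score(\alpha)}{c}\ge 2>1$ on every satisfying $\alpha$, so the deficit $\tfrac{\score(\alpha)}{c}-1$ is genuinely positive and your sum-of-squares construction for $q$ goes through. Crucially, dividing each $s_u$ by a constant does not change its monomial count, so $\qsize$ is preserved; your remark that ``$q$ contributes to $\size$ but not $\qsize$'' is true but is not where the issue lies --- the scaling acts on the $q_u$, not on $q$. With this one missing step inserted, your proof aligns with the paper's.
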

\begin{proof}
Let $U = \vars_{\forall}(\mathcal Q)$, let $S$ be the size of the shortest universal winning strategy in the score game on $\mathcal Q.\phi$, and let $\pi$ be the LHS of a shortest \qsos(resp.\ \qns) refutation written as
\begin{equation}
	\sum_{p\in \enc(\phi)} q_p p+ \sum_{u\in \mathrm{vars}_\forall(\mathcal Q)}q_u(1-2u)+ q = -1\ .
\end{equation} 

To show that $\qsize(\pi)\geq S$ consider the universal strategy setting $s_u=q_u$. This is a winning strategy in the score game, i.e.\ for every total assignment $\alpha$ the universal player wins the score game. Indeed, if $\alpha$ falsifies $\phi$ then the universal player wins automatically (in both versions of the game). Assume then that $\alpha$ satisfies $\phi$, that is for every $p\in \enc(\phi)$, $\restr{p}{\alpha}=0$.
	Therefore, $\pi$ evaluated at $\alpha$ is the same as $\sum_{u \in U} q_u(1-2u) +q$ evaluated at $\alpha$. The expression $\sum_{u \in U} q_u(1-2u)$ evaluates under $\alpha$ to $-1$ if $\pi$ is a $\qns$ refutation or to  $\leq -1< 0$ if $\pi$ is a  $\qsos$ refutation.
	Therefore the total score  when playing the game given the total assignment $\alpha$ is 
	\[
	\sum_{u\in U}s_u(2u-1) = \restr{\left(\sum_{u\in U}q_u(2u-1)\right)}{\alpha} = -\restr{\left(\sum_{u\in U}q_u(1-2u)\right)}{\alpha}
	\] 
	and this latter sum equals $1$ if $\pi$ is a $\qns$ refutation or it is $>0$ if $\pi$ is a $\qsos$ refutation. In other words the universal player in such cases wins using the scores.

	To prove $\qsize(\pi)\leq S$, we consider  the cases where $\pi$ is a \qns or \qsos refutation. 
	
	\medskip
\noindent \emph{Case 1: $\pi$ is a $\qns$ refutation.}
	Let  $(s_u)_{u\in U}$ be a shortest universal winning strategy for the score game in \textsc{variant 2} on $\mathcal Q.\phi$ and let  $q_u$ be the polynomial computing $s_u$ as a function of the variables left of $u$ in $\mathcal Q$.
	In particular, on all Boolean assignments $\alpha$ satisfying the matrix~$\phi$, the universal player wins because the total score is $1$, i.e.\ $\restr{\left(\sum_{u \in U} q_u(1-2u) \right)}{\alpha}=-1$ and therefore $\sum_{u \in U} q_u(1-2u)+1$ is in the ideal generated by the polynomials in $\enc(\phi)$ (this follows from \Cref{fact:evaluation}). 
This  gives a $\qns$ refutation of~$\mathcal Q.\phi$ with a Q-size of at most $S$.        

\medskip
\noindent \emph{Case 2: $\pi$ is a $\qsos$ refutation.} 
	Let  $(s_u)_{u\in U}$ be a shortest universal winning strategy for the score game in \textsc{variant 2}  on $\mathcal Q.\phi$ and let $q_u$ be the polynomial computing $s_u$ as a function of the variables left of $u$ in $\mathcal Q$.
For an assignment $\alpha$, let  $\score(\alpha) = \sum_{u \in U}s_u\restr{(2u-1)}{\alpha}$. For every Boolean assignment $\alpha$ satisfying the matrix $\phi$, since $(s_u)_u$ is a winning strategy, we have $\score(\alpha)>0$. That is for $c=\frac{1}{2}\min_{\alpha\models \phi} \score(\alpha)$ we have 
\begin{equation}
\label{eq:score-SOS}
	\sum_{u\in U} \frac{s_u}{c}(1-2u)=-\frac{\score(\alpha)}{c}<-1\ .
\end{equation}
In this way, for every $\alpha$ satisfying $\phi$, $1-\frac{\score(\alpha)}{c}< 0$. Let 
\[
q=-\sum_{\alpha\models \phi}\left(1-\frac{\score(\alpha)}{c}\right)\chi_{\alpha}(\vec v)\ ,
\]
where $\chi_{\alpha}(\vec v)$ is the monomial which evaluates to $1$ if the variables $\vec v$ are set according to $\alpha$, and $0$ on any other Boolean assignment. Modulo the polynomials $v^2-v$, $q$ is a sum of squares, hence to conclude it is enough to notice that the polynomial
\(
	\sum_{u\in U}\frac{1}{c}q_u(1-2u)+q+1
\)
evaluates to $0$ on every assignment satisfying $\enc(\phi)$,  hence it belongs to the ideal generated by the polynomials in $\enc(\phi)$ (this follows from \Cref{fact:evaluation}). This gives a $\qsos$ refutation of $\mathcal Q.\phi$ having the same $\qsize$ as the strategy $(s_u)_u$.
\end{proof}

\begin{cor}[completeness]
	\label{cor:NS_complete}
	\qns/\qsa/\qsos are complete.
\end{cor}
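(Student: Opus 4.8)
The plan is to read off completeness from the score-game machinery already in place, so the argument is essentially a short deduction. Let $\mathcal Q.\phi$ be a false QBF. By \Cref{prop:game_correctness}, the universal player has a winning strategy in the score game on $\mathcal Q.\phi$ in \textsc{variant~2} (and hence also in \textsc{variant~1}). The first thing I would check is that any such strategy admits a finite representation of the kind required by \Cref{thm:game_equivalence}: for each $u\in\mathrm{vars}_\forall(\mathcal Q)$ the preference $s_u$ prescribed by the strategy is a function of the assignment to the variables quantified to the left of $u$ only, and there are finitely many such assignments, so by interpolation there is a (multilinear) polynomial $q_u\in\mathbb Q[V\cup\overline V]$ in exactly those variables computing $s_u$ on all Boolean inputs. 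Consequently the strategy has finite size.

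Next I would feed this strategy into the construction from the proof of \Cref{thm:game_equivalence}. Taking a shortest winning strategy in \textsc{variant~2}, Case~1 of that proof shows that $\sum_{u\in\mathrm{vars}_\forall(\mathcal Q)}q_u(1-2u)+1$ lies in the ideal generated by $\enc(\phi)$, i.e.\ it yields an identity of the form \eqref{eq:QNS-QSA-QSOS} with $q\equiv 0$; this is a $\qns$-refutation, and it is finite since the strategy is. Taking a shortest winning strategy in \textsc{variant~1} (for which every \textsc{variant~2} strategy qualifies), Case~2 of the same proof produces, from the bound $\score(\alpha)>0$ for all $\alpha\models\phi$, a sum-of-squares term $q$ together with the polynomials $q_u$ making \eqref{eq:QNS-QSA-QSOS} hold; this is a finite $\qsos$-refutation. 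Hence $\qns$ and $\qsos$ refute every false QBF.

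Finally, completeness of $\qsa$ comes for free: the zero polynomial has non-negative coefficients, so the $\qns$-refutation obtained above is verbatim a $\qsa$-refutation. The only step that needs a word of justification is the finite polynomial representability of the score-game strategy by the $q_u$ in the admissible variables, and that is immediate from interpolation over the finitely many assignments to the variables preceding $u$; I do not foresee any genuine obstacle.
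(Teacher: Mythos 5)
Your proof is correct and follows essentially the paper's route: invoke \Cref{prop:game_correctness} to get a universal winning strategy and \Cref{thm:game_equivalence} to turn it into a $\qns$ refutation, then transfer to the other systems. The only difference is cosmetic: the paper gets $\qsos$ (like $\qsa$) simply by noting that a $\qns$ refutation is verbatim a $\qsos$ refutation, since the zero polynomial is a sum of squares, whereas you re-run Case 2 of the game-equivalence proof; both are fine, but the paper's observation is the more economical one.
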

\begin{proof}
	 Proposition~\ref{prop:game_correctness} and Theorem~\ref{thm:game_equivalence}  immediately imply the completeness of \qns. As \qns refutations are  special cases of \qsa and \qsos refutations their completeness also follows.
\end{proof}

Alternatively, it is also possible to prove the completeness of $\qns$ directly (without using the score game), as follows. 
\begin{proof}[Alternative proof of completeness of \qns]
It is sufficient to show the completeness without twin variables. For this proof we use the encoding where $\enc(x)=1-x$, $\enc(\neg x)=x$,
	$\enc(C)=\prod_{\ell\in C}\enc(\ell)$. 
	
	For a partial assignment $\rho$, let $\Ind(\rho)$ be the ``indicator'' polynomial that evaluates to $1$ on all assignments extending $\rho$, and to $0$ on all
	Boolean assignments that disagree with $\rho$. Formally, let
	\[ 
	\Ind(\rho)=\prod_{x\in \rho^{-1}(1)} x\prod_{x\in \rho^{-1}(0)} (1-x)\ .
	\]
	
	Notice that if $\rho$ falsifies $C$, then $\enc(C)$ divides
	$\Ind(\rho)$.
	Given a winning strategy $S$ for the universal player in the
	evaluation game,   we use it to prune the complete assignment tree, i.e.	remove branches not consistent with $S$. 
	
	Furthermore, we terminate a path
	as soon as a clause is falsified, and label the corresponding leaf with that falsified clause. (If there is more than one, pick any one
	arbitrarily.) 
	This is a decision tree $T$ which, given any
	assignment $\alpha$ to the existential variables, finds a clause
	falsified by the assignment $\alpha \cup S(\alpha)$. 
	The tree is unary-binary, since nodes labelled by universal variables have only one child.
	For a node $v$ in the tree $T$, let $\alpha_v$ denote the partial
	assignment leading to $v$. For the root node $r$,
	$\alpha_r=\emptyset$, and $\Ind(\alpha_r) = 1$.
	
	We show, by a bottom-up traversal, that for each node $v$, the
	polynomial $\Ind(\alpha_v)$ can be expressed as a polynomial
	combination of the initial clause encodings $\{\enc(C)\mid C\in F\} $
	and the universal polynomials $(1-2u)$, where the multipliers for
	$1-2u$ involve only variables quantified left of $u$.

\noindent \textbf{Case 1:} $v$ is a leaf, where clause $C$ is falsified. In this case, 
		$\Ind(\alpha_v)$ is a polynomial multiple of $\enc(C)$.
		
\noindent \textbf{Case 2:} $v$ is a node with two children, $v_0$ and
		$v_1$. Then, for some existential variable $x$, $\alpha_{v_b}$
		extends $\alpha_v$ by $x=b$. We have inductively expressed
		$\Ind(\alpha_{v_b})$ for $b=0,1$. Now
		\begin{align*}
					\Ind(\alpha_v) &= \Ind(\alpha_{v_0}) + \Ind(\alpha_{v_1})\ .
					\\
		\Ind(\alpha_{v_0}) + \Ind(\alpha_{v_1}) &=
		\Ind(\alpha_v)(1-x) + \Ind(\alpha_v)x 
		\\
		&= \Ind(\alpha)v\ .
		\end{align*}
		
\noindent \textbf{Case 3:} $v$ is a node with one child $w$. Then for some
		universal variable $u$ and some $b$, $\alpha_w$ extends $\alpha_v$
		by $u=b$, and we have inductively expressed $\Ind(\alpha_w)$.
		Consider $b=0$, then 
		\begin{align*}
			2\,\Ind(\alpha_w) - \Ind(\alpha_v)(1-2u) &=
		2\, \Ind(\alpha_v)(1-u) - \Ind(\alpha_v)(1-2u) \\
		&=
		\Ind(\alpha_v)\ .
		\end{align*}
		Since all variables in $\alpha_v$ are quantified
		left of $u$, $\Ind(\alpha_v)$ is a valid multiplier for $(1-2u)$.
		Similarly if $b=1$, then 
		\begin{align*}
					2\, \Ind(\alpha_w) + \Ind(\alpha_v)(1-2u) & =
		2\,  \Ind(\alpha_v)u + \Ind(\alpha_v)(1-2u) 
		\\
		&=
		\Ind(\alpha_v)\ .
		\end{align*}
	
	Proceeding in this way all the way to the root, we get that
	$\Ind(\alpha_r)=1$ is expressible as a polynomial combination of the
	clause encodings and the universal polynomials with valid multipliers.
\end{proof}

\subsection{Upper bounds in \qsos via the score game}
Due to \Cref{thm:game_equivalence}, the score game can be used to obtain bounds on the Q-size of $\qsos$ refutations. 
The advantage is being able to argue directly on countermodels (without reference to the syntactic representation of the matrix).
To that end, we use QBFs $Q\text{-}C_n$  from \cite{BBCP20}, which are defined via their countermodel computed by a family of circuits $C_n$.

\begin{defi}[$Q\text{-}C_n$ \cite{BBCP20}]
	Let $n$ be an integer and $C_n$ be a circuit with inputs $x_1, \dots, x_n$ and a single output. 
	We define
	\begin{equation*}
		Q\text{-}C_n = \exists x_1 \cdots \exists x_n \forall u \exists t_1 \cdots \exists t_m. \big(u \nleftrightarrow C_n(x_1, \dots, x_n)\big)\ ,
	\end{equation*}
	where the additional variables $t_i$ are used for a Tseitin-encoding of the circuit $C_n$ into  CNF (the $i^\text{th}$ node in $C_n$ is represented by  variable $t_i$,  and if, e.g.,  node $i$ is an $\land$-gate between nodes $j$ and $k$, then we have clauses encoding $t_i\leftrightarrow( t_j\land t_k)$, and similarly for $\lor$ and $\neg$ gates).
\end{defi}

For the $Q\text{-}C_n$ formulas, the only countermodel sets $u$ to  $C_n(x_1, ..., x_n)$.
Here, we are specifically interested in choosing circuits $C_n$ that  compute $\maj_n$. 
A circuit calculating $\maj_n$ evaluates to true, if and only if at least half of the $n$ input variables are set to true.
We show that $\qmaj_n$ has short \qsos refutations in the $\qsize$ measure.

\begin{prop}
\label{prop:majority}
	$\qmaj_n$ has \qsos refutations of linear $\qsize$.
\end{prop}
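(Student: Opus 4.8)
The plan is to exploit \Cref{thm:game_equivalence}: instead of writing down an explicit algebraic identity, I will describe a short universal winning strategy in the score game (\textsc{variant~1}) for $\qmaj_n$ and count its monomials. Since $\qmaj_n = Q\text{-}C_n$ for $C_n$ computing $\maj_n$, the quantifier prefix is $\exists x_1 \cdots \exists x_n \,\forall u\, \exists t_1 \cdots \exists t_m$, and the matrix asserts $u \nleftrightarrow C_n(\vec x)$ together with the Tseitin clauses defining the $t_i$. The only universal variable is $u$, so a universal strategy is a single polynomial $s_u = q_u(x_1,\dots,x_n)$ in the $x$-variables (it may not mention the $t_i$, which are quantified after $u$, but that is fine).

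The key observation is that the countermodel forces $u = \maj_n(\vec x)$: if the existential player picks the $x_i$ and then sets $u$, the only way to keep the matrix satisfiable is $u = C_n(\vec x) = \maj_n(\vec x)$. So I want $s_u$ to be positive exactly when $\maj_n(\vec x) = 0$ (preferring $u=0$) and negative exactly when $\maj_n(\vec x) = 1$, with the property that whenever the existential player contradicts this preference the matrix is falsified — and when he follows it, the score is strictly positive. The natural choice is a linear polynomial: set
\[
  s_u \;=\; q_u \;=\; 1 - \frac{2}{n}\sum_{i=1}^{n} x_i \;+\; \frac{1}{n},
\]
or more cleanly $q_u = \frac{n+1}{2} - \sum_{i=1}^n x_i$ (up to a positive scaling, which does not affect who wins \textsc{variant~1}). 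On any Boolean assignment to $\vec x$, this is $> 0$ iff $\sum x_i < \frac{n+1}{2}$, i.e.\ iff $\maj_n(\vec x)=0$, and $<0$ iff $\maj_n(\vec x)=1$; it is never $0$ when $n$ is odd (for even $n$ one shifts the constant slightly, e.g.\ $q_u = \frac{n}{2}+\frac14 - \sum x_i$). Thus the universal player's preferred value is always $\tau_u = \maj_n(\vec x)$. If the existential player obeys, the score is $|q_u| > 0$, so variant~1 is won; if he disobeys, the resulting assignment has $u \neq \maj_n(\vec x) = C_n(\vec x)$, so after the Tseitin variables are forced the clause encoding $u\nleftrightarrow C_n$ is violated and the universal player wins by falsification. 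Either way the strategy is winning, and $q_u$ has only $n+1$ monomials, giving a $\qsos$ refutation of $\qsize$ at most $n+1 = O(n)$ via \Cref{thm:game_equivalence}.

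\textbf{The main obstacle} is the handling of the parity of $n$ (making sure $q_u$ never evaluates to $0$, so that a strict-sign guarantee holds for \textsc{variant~1}), together with confirming that the score game machinery genuinely applies here despite $u$ being quantified before the $t_i$: the strategy polynomial $q_u$ depends only on $x_1,\dots,x_n$, all of which are left of $u$, so the variable-dependence condition in \eqref{eq:QNS-QSA-QSOS} and in the score game is satisfied, and the $t_i$ play no role in the strategy. A minor point to check is that \Cref{thm:game_equivalence} converts a winning strategy of size $S$ in \textsc{variant~1} into a $\qsos$ refutation of $\qsize$ at most $S$ (this is the ``$\qsize(\pi)\le S$'' direction, Case~2 of that proof), which is exactly what we invoke. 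No separate argument about the propositional part is needed because $\qsize$ ignores the $q_p$ multipliers entirely. Hence the whole proof reduces to: (i) define $q_u$ as above with the right constant for the parity of $n$; (ii) verify it is a winning \textsc{variant~1} strategy by the two-case analysis (existential obeys $\Rightarrow$ positive score; existential disobeys $\Rightarrow$ matrix falsified); (iii) count $n+1$ monomials and apply \Cref{thm:game_equivalence}.
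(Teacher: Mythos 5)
Your approach is exactly the paper's: give a universal strategy in the score game (\textsc{variant~1}) using a single linear polynomial $q_u$ in the $x$-variables, observe it has $n+1$ monomials, and invoke Theorem~\ref{thm:game_equivalence}. However, the concrete polynomial you write down does not work, and your winning-case analysis has the semantics of the matrix inverted.

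First, the threshold is off. Under the paper's convention that $\maj_n(\vec x)=1$ iff $\sum_i x_i \geq n/2$ (``at least half''), you need the constant $c$ in $q_u = c - \sum_i x_i$ to satisfy $\frac{n}{2}-1 < c < \frac{n}{2}$, so that $q_u>0$ exactly when $\sum_i x_i \leq \lceil n/2\rceil - 1$ (i.e.\ $\maj_n=0$) and $q_u<0$ exactly when $\sum_i x_i \geq \lceil n/2\rceil$ (i.e.\ $\maj_n=1$). The paper uses $c = \frac{n}{2} - \frac14$. Your $c = \frac{n+1}{2}$ lies \emph{above} $\frac{n}{2}$: for even $n$ and $\sum_i x_i = n/2$ you get $q_u = \frac12>0$ while $\maj_n=1$, the wrong sign; for odd $n$ you get $q_u=0$ at $\sum_i x_i = \frac{n+1}{2}$, so the claim ``never $0$ when $n$ is odd'' is false. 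The proposed even-$n$ correction $c = \frac{n}{2}+\frac14$ is also above $\frac{n}{2}$ and fails for the same reason; you need $\frac{n}{2}-\frac14$ (or any value in the open interval above).

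Second, the ``key observation'' reads the matrix backwards: $u \nleftrightarrow C_n(\vec x)$ is \emph{satisfied} iff $u \neq C_n(\vec x)$, so to keep the matrix satisfiable the existential player must set $u = 1-\maj_n(\vec x)$, not $u = \maj_n(\vec x)$ (the latter is what the universal countermodel would set to \emph{falsify} the matrix). Consequently your two cases are swapped: if the existential player follows the universal preference $u=\maj_n$, the matrix is falsified and the score is \emph{negative} (following the preference costs the universal $|s_u|$ points, by the score-rule $s_u(2b-1)$); if the existential player disobeys, the score is positive and the matrix may be satisfied. Both cases are wins for the universal player in \textsc{variant~1}, so with these two fixes the argument goes through exactly as in the paper, but as written the bookkeeping is incorrect and must be repaired before this counts as a proof.
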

\begin{proof}
	In the score game for $\qmaj_n$, set $s_u = - x_1 - \cdots - x_n + \frac{n}{2} - \frac14$. The choice of the constant $ \frac14$ is somewhat arbitrary, but should be between 0 and  $\frac12$.
	As there is only a single universal variable, this defines a complete strategy for the universal player. 
	We show that $s_u$ is a winning strategy.
	For an arbitrary assignment~$\alpha$, 	$\restr{s_u}{\alpha} < 0$ if and only if at least half of the existential $x_i$ variables are set to 1, otherwise $\restr{s_u}{\alpha} > 0$. As such, the total score of the score game on $\alpha$ is
	\(
		 \restr{\big(-x_1 - \cdots - x_n + \frac{n}{2} - \frac14\big)(2u-1)}{\alpha}
	\).
	This is negative only if either $u=1$ and at least half of the $x_i$ equal 1 or $u=0$ and less then half of the $x_i$ equal 1,
	i.e.\ if the matrix is satisfied, the score is positive.    
	
	This strategy has size $n+1$ and degree $1$, hence, by \Cref{thm:game_equivalence}, there exists a \qsos refutation of $\qmaj_n$ with a $\qsize$ linear in $n$.
\end{proof}

It is known that $\qmaj_n$ requires $\qpc$ refutations of exponential $\qsize$ \cite{BeyersdorffHKS24}, therefore the previous result yields the exponential separation between \qpc and \qsos  in \Cref{fig:p-sim}.

\subsection{From existential Q-degree to Q-size}
\label{sec:size-degree}

In various proof systems, strong enough lower bounds on the degree/width of proofs immediately imply non-trivial lower bounds on proof size. This happens for instance in  propositional proof systems such as Resolution \cite{BW01}, Polynomial Calculus \cite{CEI.96}, Sherali-Adams and Sum-of-Squares \cite{AH.19}; and in QBF proof systems as well, for instance in QU-Resolution \cite{BBMP22}, and Q-PC \cite{BeyersdorffHKS24}.
It turns out that a very similar statement holds in \qsos between the existential Q-degree ($\qdeg(\cdot)$, see \Cref{def:deg-size}) and Q-size ($\qsize(\cdot)$, see \Cref{def:deg-size}).

\begin{thm} \label{thm:sizedegree}
	Let $Q. \varphi$ be a false QBF with $n$ variables  that has a \qsos refutation of $\qsize$ $s$. Then it has a \qsos refutation of $\qdeg$ $O(\sqrt{n \log s})$.
\end{thm}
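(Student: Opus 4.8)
The plan is to mimic the classical Atserias–Hakoniemi argument~\cite{AH.19} that turns a low-\emph{size} \sos{} refutation into a low-\emph{degree} one, but carried out entirely in the \qsize/\qdeg{} world so that the propositional part of the refutation never interferes with the degree bound we track. Fix a \qsos{} refutation $\pi$ of $\mathcal Q.\varphi$ of \qsize{} $s$, written as in eq.~\eqref{eq:QNS-QSA-QSOS} with universal multipliers $q_u$ for $u\in\mathrm{vars}_\forall(\mathcal Q)$. Set a degree threshold $d$ (to be optimized at the end, of order $\sqrt{n\log s}$). Call a monomial \emph{heavy} if it has existential degree $>d/2$; since the total existential-degree budget among all $q_u$ is what \qsize{} controls, there are at most $2s/d$ heavy monomials appearing across all the $q_u$. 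The idea is: iteratively hit the refutation with a carefully chosen restriction that kills one heavy monomial, after paying only a constant in the existential Q-degree of the \emph{residual} multipliers, and repeat until no heavy monomials remain; what is left is a refutation of a restricted QBF of small \qdeg, and one then either recurses on a variable that does \emph{not} appear in any heavy monomial, or argues directly.

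Concretely, the key steps I would carry out are the following. (1) \emph{Restrictions preserve \qsos{} refutations with controlled \qsize.} Show that if $x$ is an existential variable of $\mathcal Q$ and $b\in\{0,1\}$, then restricting every polynomial in $\pi$ by $x=b$, $\overline x=1-b$ yields a \qsos{} refutation of the restricted QBF $\mathcal Q\!\setminus\!\{x\}.\varphi|_{x=b}$; crucially, a multiplier $q_u$ may only mention variables left of $u$, and restriction never introduces new variables, so the quantifier-dependence condition is preserved, and $\enc(\varphi|_{x=b})$ is still generated by the restricted $\enc(\varphi)$ up to the Boolean axioms (this uses \Cref{fact:evaluation}). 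The \qsize{} can only go down. (2) \emph{A heavy monomial can be killed cheaply.} Pick a monomial $m$ of maximal existential degree occurring in some $q_u$ with existential degree $>d$; some existential variable $x$ in $m$ occurs in a $1/(2s/d)$-fraction... more carefully, use the standard counting: among the (at most $2s/d$) heavy monomials there is an existential variable $x$ contained in a $\Omega(d/n)$ fraction... — here I would follow~\cite{AH.19} and choose $x$ so that setting $x$ appropriately reduces the number of heavy monomials by a definite multiplicative factor while the \qdeg{} of what remains is not yet small. (3) \emph{Termination and bookkeeping.} After $O(\frac{n}{d}\log s)$ restrictions all heavy monomials are gone, leaving a \qsos{} refutation of a restricted QBF whose multipliers $q_u$ all have existential degree $\le d$; since each restriction removes one variable, we need $\frac{n}{d}\log s \le d$ roughly, i.e.\ $d = \Theta(\sqrt{n\log s})$, and then unrestrict by multiplying back the (monomial) indicator of the chosen partial assignment, which adds at most $d$ to the existential degree. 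This gives $\qdeg O(\sqrt{n\log s})$.

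The main obstacle I expect is step (2)–(3) \emph{in the presence of the quantifier-order constraint}. In the propositional proof one is free to restrict \emph{any} variable appearing in a heavy monomial; here, restricting an existential variable $x$ is harmless for \emph{every} $q_u$ with $u$ right of $x$, but one must check it does not violate the "left of $u$" condition for the $q_u$ that $x$ feeds into (it doesn't, since we only ever delete variables), and — more delicately — one must ensure the restricted object is still a bona fide QBF refutation, i.e.\ that restricting an existential variable does not, e.g., produce a \emph{true} restricted QBF. It need not: if $\mathcal Q.\varphi$ is false then \emph{some} restriction $x=b$ keeps it false, but the restriction forced on us by the heavy-monomial counting might be the "wrong" one. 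The fix, exactly as hinted by the paper's reference to \Cref{lemma:sizedegree_helper}, is to \emph{not} branch semantically at all: keep the \emph{algebraic} identity eq.~\eqref{eq:QNS-QSA-QSOS} and restrict it as a formal identity — a restricted false QBF is automatically produced because the restricted LHS is still an identity equal to $-1$ of the required shape, so soundness (\Cref{thm:soundness}) forces the restricted QBF false. Thus the whole argument stays purely syntactic/algebraic, which is what makes the quantifier-dependence condition trivially survive; the only genuinely new point over~\cite{AH.19} is isolating the existential-degree accounting (\qdeg) from the irrelevant propositional-degree blow-up, which is precisely the content of \Cref{lemma:sizedegree_helper}.
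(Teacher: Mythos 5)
Your high-level roadmap---a Ben-Sasson--Wigderson-style restriction-and-counting argument that tracks only existential degree---matches the paper's, but the step you leave vaguest is exactly where the new idea is needed. After killing the heavy monomials along a single chain of restrictions $\rho$, you propose to ``unrestrict by multiplying back the (monomial) indicator of the chosen partial assignment.'' That does not produce a $\qsos$ refutation of $\mathcal Q.\phi$: multiplying the restricted identity by $\Ind(\rho)$ turns the free constant $1$ into $\Ind(\rho)$, and $\Ind(\rho)\cdot q^\rho$ is not a sum of squares. In score-game terms this is even clearer: whenever the existential player strays off $\rho$, all of your candidate $q_u=\Ind(\rho)\,q_u^\rho$ vanish and the total score is $0$, which is not a win. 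As in Ben-Sasson--Wigderson, one must recurse on \emph{both} branches $x=0$ and $x=1$ and then merge the two refutations of $\mathcal Q.\phi|_{x=0}$ and $\mathcal Q.\phi|_{x=1}$ into one refutation of $\mathcal Q.\phi$; that merge is what \Cref{lemma:sizedegree_helper} supplies, and it is the genuinely new ingredient in this section.

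You do cite \Cref{lemma:sizedegree_helper}, but you mischaracterize it as merely ``isolating the existential-degree accounting from the propositional-degree blow-up.'' Its actual content is the combining lemma: given $\pi_1$ for $\mathcal Q.\phi|_{x=1}$ with $\qdeg(\pi_1)\le k-1$ and $\pi_0$ for $\mathcal Q.\phi|_{x=0}$ with $\qdeg(\pi_0)\le k$, it constructs $\pi$ for $\mathcal Q.\phi$ with $\qdeg(\pi)\le k$. The construction works in the score game by setting $q_u=x\cdot q_u^1+\tfrac{d}{c}\,q_u^0$ for suitable positive constants $c,d$ determined by the ranges of the final scores of $\pi_0$ and $\pi_1$; the factor $x$ (a single existential variable, not $\Ind(\rho)$) supplies exactly the one-unit existential-degree bump that makes the budget add up, and the scaling $\tfrac{d}{c}$ is what keeps the combined strategy winning. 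With that lemma in hand, the double induction on $n$ and the number of high-existential-degree monomials (\Cref{lem:sizedegree_general}) and the choice $b=d=\sqrt{2n\log s}$ finish the theorem exactly as you sketch. So: right outline, but the part that needs a new idea---the score-game merge of the two branches---is the part your proposal skips or gets wrong.
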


The argument is similar to the proof of the analogous  size-width inequality for Resolution from \cite{BW01}. The main difference is the proof of the lemma below showing how to combine a proof of $\qdeg$ $k-1$ of $\mathcal Q.\phi|_{x=1}$ and a proof of $\qdeg$ $k$ of $\mathcal Q.\phi|_{x=0}$ into a proof of $\qdeg$ $k$ of $\mathcal Q.\phi$. This is done using the score game from the previous section.

\begin{lem} \label{lemma:sizedegree_helper}
	Let $Q. \varphi$ be a false QBF and $x\in \mathrm{vars}_\exists(\mathcal Q)$. If there is a $\qsos$ refutation  $\pi_1$ of $Q. \varphi|_{x=1}$ with  $\qdeg(\pi_1)\leq k-1$, and a $\qsos$ refutation $\pi_0$ of $Q. \varphi|_{x=0}$ with $\qdeg(\pi_0)\leq k$, then there is a $\qsos$ refutation $\pi$ of $\mathcal Q.\phi$ with $\qdeg(\pi)\leq k$.
\end{lem}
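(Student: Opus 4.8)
The plan is to work through the score game characterisation of \Cref{thm:game_equivalence} rather than directly manipulating algebraic identities, since the lemma is a combinatorial statement about merging two refutations and the game language makes variable-dependence bookkeeping transparent. By \Cref{thm:game_equivalence}, the refutation $\pi_1$ of $\mathcal Q.\phi|_{x=1}$ corresponds to a universal winning strategy $(s_u^1)_{u\in U}$ in \textsc{variant 1} of the score game on $\mathcal Q.\phi|_{x=1}$, where each $s_u^1$ is a polynomial in the variables left of $u$, and $\qdeg$ bounds the existential degree of these polynomials. Similarly $\pi_0$ gives a strategy $(s_u^0)_{u\in U}$ on $\mathcal Q.\phi|_{x=0}$. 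The goal is to build a single strategy $(s_u)_{u\in U}$ for the score game on $\mathcal Q.\phi$ whose polynomials have existential degree at most $k$, and then invoke \Cref{thm:game_equivalence} in the other direction to get $\pi$.

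The merging construction should follow the Resolution size-width pattern: condition on the value of $x$. For universal variables $u$ quantified to the \emph{left} of $x$, we cannot yet know the value of $x$, so we must play ``$x$-obliviously''; for $u$ to the \emph{right} of $x$, the value of $x$ has already been fixed in the play, so we can branch. The natural definition is, for $u$ right of $x$, to set $s_u = x\cdot s_u^1 + (1-x)\cdot s_u^0$ (an interpolation between the two strategies, legal since $x$ is left of $u$), and for $u$ left of $x$ to set $s_u = s_u^0$. First I would check that the existential degree never exceeds $k$: for $u$ right of $x$, multiplying $s_u^1$ (existential degree $\le k-1$) by $x$ adds one existential variable, giving $\le k$, while $(1-x)s_u^0$ has existential degree $\le k+1$ — so this naive choice is \textbf{not} quite good enough, and the handling of the $(1-x)s_u^0$ term is exactly where the proof must be more careful than the Resolution analogue. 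The fix, paralleling \cite{BW01}, is that along any play where $x=0$ one should instead route into $\pi_0$ entirely and never multiply $s_u^0$ by the extra factor $(1-x)$; concretely, one replaces $x=0$ by folding $x$ into the universal player's reasoning so that the $\pi_0$-strategy is used verbatim (existential degree $\le k$) while only on the $x=1$ branch does the degree-$(k-1)$ strategy $\pi_1$ get the extra $x$-multiplier, landing at $\le k$. I would make this precise by defining the merged strategy as a genuine case split on the already-revealed value of $x$ rather than an algebraic interpolation, and then observe that such a case split is itself expressible by a low-degree polynomial because $x$ is Boolean and appears to the left.

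Once the strategy is defined, the remaining steps are routine: verify it is winning in \textsc{variant 1} on $\mathcal Q.\phi$ — any play sets $x$ to some Boolean value $b$, after which the play restricted to the remaining variables is a play of the score game on $\mathcal Q.\phi|_{x=b}$ following $(s_u^b)$, which the universal player wins because that strategy is winning and because falsifying $\phi|_{x=b}$ is the same as falsifying $\phi$ on that branch; and tally the existential degree of every $s_u$ as $\le k$ using that $x$ is existential and counts towards existential degree. Then \Cref{thm:game_equivalence} converts the merged strategy back into a $\qsos$ refutation $\pi$ of $\mathcal Q.\phi$ with $\qdeg(\pi)\le k$. The main obstacle, as flagged, is the asymmetry in the two hypotheses ($k-1$ versus $k$): one must be sure the ``cheap'' branch absorbs the unavoidable $+1$ from conditioning on $x$ while the ``expensive'' branch is used without any extra multiplier, which is precisely why the lemma is stated with this asymmetry and why a plain interpolation fails.
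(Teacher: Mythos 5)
You correctly reduce the lemma to merging score-game strategies via \Cref{thm:game_equivalence}, and you correctly spot that the naive interpolation $s_u = x\,s_u^1 + (1-x)\,s_u^0$ overshoots the degree bound because $(1-x)\,s_u^0$ can reach existential degree $k+1$. But your proposed fix is circular: a ``genuine case split on the already-revealed value of $x$'' is, once expressed as a polynomial over Boolean $x$, precisely the interpolation $x\,s_u^1+(1-x)\,s_u^0$ that you just showed fails. The claim that ``such a case split is itself expressible by a low-degree polynomial because $x$ is Boolean and appears to the left'' is where the argument breaks; Booleanness lets you replace $x^2$ by $x$, not avoid the extra factor of $x$ (or $1-x$) that a two-branch case split inherently carries.

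The paper's actual trick is different in kind and avoids any $x$-dependent coefficient on the $\pi_0$-strategy: it sets $q_u = x\cdot q_u^1 + \tfrac{d}{c}\,q_u^0$, where $\tfrac{d}{c}$ is a fixed positive constant ($d$ a lower bound on the positive values of the $\pi_1$-score, $c$ an upper bound on $|q_0|+1$). This keeps $\qdeg(q_u)\le\max\{1+(k-1),\,k\}=k$ with no $(1-x)$ factor. The price is that when $x=1$ the term $\tfrac{d}{c}q_0$ is still present and could be negative; the constants are chosen exactly so that $\tfrac{d}{c}q_0 > -d \ge -q_1$, so the total score $q_1 + \tfrac{d}{c}q_0$ remains strictly positive. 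When $x=0$ the $x\,q_1$ term vanishes and $\tfrac{d}{c}q_0>0$ because $\pi_0$ is winning. Your proposal never reaches this ``shrink the $\pi_0$-strategy by a constant so it is harmless on the $x=1$ branch'' idea, which is the essential step, so as written the argument has a real gap at the point you yourself flagged as the crux.
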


\begin{proof}
	We consider the equivalent representation of the proofs as universal strategies in the score game.
	Let $q_1$ be the final score in $\pi_1$ and $q_0$ be the final score in $\pi_0$.
	Let $c$ be the maximum of $\abs{q_0} + 1$ over all assignments, and $d$ be the smallest \emph{positive} value that $q_1$ can take (or $1$ if $q_1$ does not take positive values).
	For each $u\in \mathrm{vars}_\forall(\mathcal Q)$, let  $q_u = x \cdot q_u^1 + \frac{d}{c} q_u^0$, where $q_u^0$ and $q_u^1$ are the polynomials for $u$ in $\pi_0$ and $\pi_1$. Combining these $q_u$ yields a strategy~$\pi$ with final score $q = x \cdot q_1 + \frac{d}{c} q_0$ and $\qdeg(\pi)\leq k$.

	We still need to argue that $\pi$ is a universal winning strategy, i.e.\ that on every assignment~$\alpha$ satisfying $\varphi$ we have $q(\alpha) > 0$.
	If $\alpha$ sets $x=0$, then it satisfies $\varphi|_{x=0}$, so $q_0(\alpha) > 0$ due to the correctness of $\pi_0$. But if $x=0$ then $q = 0 \cdot q_1 + \frac{d}{c} q_0 > 0$.
	If $\alpha$ sets $x=1$, then it satisfies $\varphi|_{x=1}$, so $q_1(\alpha) > 0$ due to the correctness of $\pi_1$. By the definitions of $c$ and $d$, we have $c > -q_0(\alpha)$ and $d \le q_1(\alpha)$. This means that $\frac{d}{c} q_0 > -d \ge -q_1(\alpha)$ and $q = 1 \cdot q_1 + \frac{d}{c} q_0 > 0$.
\end{proof}

\begin{restatable}{lem}{degreelemma}
\label{lem:sizedegree_general}
	Let $d,n,b \in \mathbb{N}^{\ge 0}$ and $Q.\varphi$ be a false QBF. Let $\pi$ be a $\qsos$ refutation of $Q.\varphi$ so that its $q_u$ polynomials contain, in total, fewer than $(1-\frac{d}{2n})^{-b}$ monomials of existential degree $> d$. Then there is a $\qsos$ refutation $\pi'$ of $Q.\varphi$ with $\qdeg(\pi')\le d+b$.
\end{restatable}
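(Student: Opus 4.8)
The plan is to adapt the Ben-Sasson--Wigderson size-width argument to the present setting, working entirely on the level of the $q_u$ polynomials (equivalently, universal strategies in the score game) and using \Cref{lemma:sizedegree_helper} as the step that combines two subproofs. Call a monomial \emph{fat} if its existential degree is strictly larger than $d$, and let $a$ be the total number of fat monomials occurring in the polynomials $q_u$ of $\pi$, counted with multiplicity. I would prove, by induction on the number of existential variables of $\varphi$, the statement ``for every false QBF with that many existential variables and every $d,b\in\mathbb{N}^{\ge 0}$, the lemma holds''; here $n$ is fixed throughout and need only be an upper bound on the number of existential variables, so it may safely be reused in the recursive calls.

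If $a=0$ then every $q_u$ already has existential degree $\le d$, so $\qdeg(\pi)\le d\le d+b$ and $\pi'=\pi$ works; this handles the base case (no existential variables) and also the degenerate cases $b=0$, $d=0$ and $d\ge n$. Otherwise $a\ge 1$, and then $a<(1-\tfrac{d}{2n})^{-b}$ forces $b\ge 1$ and $0<d<n$. Since each fat monomial mentions more than $d$ distinct existential variables and hence contains at least $d+1$ distinct existential literals out of the $\le 2n$ available, a simple counting argument produces an existential literal $\ell$, say on an existential variable $x$, that occurs in strictly more than $\tfrac{d}{2n}a$ of the fat monomials. Now set $\pi_0:=\restr{\pi}{\ell=0}$ and $\pi_1:=\restr{\pi}{\ell=1}$. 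Restricting an existential variable to a constant turns a \qsos refutation of $\mathcal Q.\varphi$ into a \qsos refutation of the corresponding restricted QBF (the sum-of-squares part remains a sum of squares, the terms $(1-2u)$ and the condition that $q_u$ only uses variables left of $u$ are untouched, the ideal part restricts to the ideal of the restricted matrix), and it can only lower existential degrees. Hence $\pi_1$ has at most $a$, thus fewer than $(1-\tfrac{d}{2n})^{-b}$, fat monomials; and in $\pi_0$ every monomial divisible by $\ell$ vanishes, so the number of fat monomials of $\pi_0$ is at most $a$ minus the number of fat monomials of $\pi$ divisible by $\ell$, which is strictly less than $a\bigl(1-\tfrac{d}{2n}\bigr)<(1-\tfrac{d}{2n})^{-(b-1)}$.

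Both $\mathcal Q.\varphi|_{\ell=0}$ and $\mathcal Q.\varphi|_{\ell=1}$ are false QBFs with one fewer existential variable, so the induction hypothesis applies: with parameters $(d,n,b-1)$ it converts $\pi_0$ into a \qsos refutation $\pi_0'$ of $\mathcal Q.\varphi|_{\ell=0}$ with $\qdeg(\pi_0')\le d+b-1$, and with parameters $(d,n,b)$ it converts $\pi_1$ into a \qsos refutation $\pi_1'$ of $\mathcal Q.\varphi|_{\ell=1}$ with $\qdeg(\pi_1')\le d+b$. One of $\{\ell=0,\,\ell=1\}$ is the restriction $x=1$ and the other is $x=0$, so $\pi_0'$ and $\pi_1'$ are exactly the two inputs required by \Cref{lemma:sizedegree_helper} with $k=d+b$ -- taken in the orientation dictated by the polarity of $\ell$, which is legitimate because \Cref{lemma:sizedegree_helper} holds verbatim with the roles of $x=0$ and $x=1$ exchanged (equivalently, after flipping the polarity of $x$). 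That lemma then yields a \qsos refutation $\pi'$ of $\mathcal Q.\varphi$ with $\qdeg(\pi')\le d+b$, completing the induction.

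The parts that need genuine care are the bookkeeping in the twin-variable encoding -- one has to make sure that ``fat'' is precisely the notion for which setting $\ell=0$ kills all fat monomials touched by $\ell$ while setting $\ell=1$ creates none of them -- and the choice of induction measure: the induction must be on the number of variables (which strictly decreases in both branches), not on $b$, since in the $\pi_1$-branch neither the fat-monomial count nor $b$ decreases. The mild asymmetry of \Cref{lemma:sizedegree_helper} between $x=0$ and $x=1$ is cosmetic and is dealt with by the polarity remark above.
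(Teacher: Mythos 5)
Your proof takes essentially the same route as the paper's: select a literal occurring in a large fraction of the high-existential-degree monomials, restrict to both polarities, apply the induction hypothesis (with $b$ decremented on the branch where the literal vanishes, unchanged on the other), and combine the two refutations via \Cref{lemma:sizedegree_helper}. The only differences are cosmetic bookkeeping -- you induct on the number of existential variables keeping $n$ fixed as an upper bound, while the paper inducts on $n$ and passes $n-1$ to the recursive calls using $(1-\tfrac{d}{2n})^{-b}<(1-\tfrac{d}{2(n-1)})^{-b}$, and you make explicit the polarity-flipping needed to match the asymmetric orientation of \Cref{lemma:sizedegree_helper}, a point the paper's writeup leaves implicit.
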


\begin{proof}
The proof of this lemma is virtually identical to the proof of \cite[Theorem 3.5]{BW01}. 
Informally, the proof is by an inductive argument on $n$ and $b$, considering the \emph{high-degree} monomials to be the ones with $\qdeg$ at least $d$. By a counting argument there will be a literal $x$ appearing in at least a $\frac{d}{2n}$ fraction of them. Restricting by $x=0$ and $x=1$, we have that the first restriction eliminates at least $\frac{d}{2n}$ of the high-degree monomials, while the second eliminates one variable. Then using the inductive hypothesis and \Cref{lemma:sizedegree_helper} concludes the argument.

More formally, we fix $d$ and prove the statement by induction over $n$ and $b$.
	In the base cases where $b=0$ or $n \le d$, $\pi' = \pi$ has the required $\qdeg$.
	Otherwise, we call those monomials \emph{high-degree} which have existential degree $> d$. If $\pi$ has $k < (1-\frac{d}{2n})^{-b}$ high-degree monomials, they contain in total $> kd$ existential literals. Because there are only $2n$ existential literals, one of these, say $x$, must occur in more than $\frac{kd}{2n}$ high-degree monomials.

	When restricting $\pi$ to $x=0$, all these monomials vanish, leaving fewer than $k - \frac{kd}{2n} = k(1 - \frac{d}{2n}) < (1 - \frac{d}{2n})^{-(b-1)} < (1 - \frac{d}{2(n-1)})^{-(b-1)}$ high-degree monomials.
	Applying the induction hypothesis yields a proof $\pi_0$ of $Q.\varphi|_{x=0}$ with $\qdeg$ $\le d+b-1$.
	When restricting $\pi$ to $x=1$, the number of high-degree monomials certainly cannot increase, so it is at most $k < (1-\frac{d}{2n})^{-b} < (1-\frac{d}{2(n-1)})^{-b}$. Applying the induction hypothesis yields a proof $\pi_1$ of $Q.\varphi|_{x=1}$ with $\qdeg$ $\le d+b$.

	Finally, applying Lemma~\ref{lemma:sizedegree_helper} to $\pi_0$ and $\pi_1$ yields a proof $\pi'$ of $\qdeg$ $\le d+b$.
\end{proof}

Given the two lemmas above it is immediate to prove \Cref{thm:sizedegree}.

\begin{proof}[Proof of \Cref{thm:sizedegree}]
	Set $b = d = \sqrt{2n \log s}$ and observe that $s < (1-\frac{d}{2n})^{-b}$.
	The number of high-degree monomials in the refutation is  smaller than its total number of monomials $s$, so we can apply Lemma~\ref{lem:sizedegree_general} and get a refutation of $\qdeg(b+d) \in O(\sqrt{n \log s})$.
\end{proof}

\section{Lower bounds via strategy extraction in the evaluation game}
\label{sec:strategy}

In QBF proof systems, strategy extraction is a welcome and ubiquitous feature. 
Informally, given a refutation of a false QBF, strategy extraction allows  to represent in some computational model a winning strategy of the universal player  in the \emph{evaluation} game. 
Different QBF proof systems give rise to strategy extraction in different computational models.
For example, from \qures refutations we get \emph{unified decision lists} \cite{BBMP22} and from Frege+$\forall$-reduction refutations we get $\mathsf{NC}_1$ circuits \cite{BBCP20} .

In this section, we show that \qsos/\qsa/\qns also admit strategy extraction, using \emph{polynomial threshold functions} (PTF) as the computational model (\Cref{thm:strategy-extraction}). We use this fact to prove a lower bound in \qsos (\Cref{cor:parity}) and a p-simulation of \qsos by $\qtcz$ (\Cref{cor:tcz}).
\begin{defi}[polynomial threshold function]
A Boolean function $f:\{0,1\}^n \to \{\pm 1\}$ is computed by a \emph{polynomial threshold function (PTF)}, if there exists some $n$-variate polynomial $p\in \mathbb Q[\vec x]$ such that $f(\vec x) = \sign(p(\vec x))$. 
The \emph{size} of the PTF is the number of monomials in $p$ and the \emph{degree} of the PTF is the degree of $p$.
\end{defi}

\begin{rem}[On the PTF degree of parity]
\label{rem:PTF-parity}
It is well known that $f(\vec x)=x_1\oplus x_2\oplus \cdots\oplus x_n$ cannot be computed by PTFs of degree less than $n$. We recall briefly the argument (see also for instance \cite{MP87}). Let $f(\vec x)=\sign(p(\vec x))$ with $p$ a $n$-variate polynomial of degree $d$. Since $f$ is symmetric, there exists  a symmetric $n$ variate polynomial $p'$ such that $f(\vec x)=\sign(p'(\vec x))$ and $\deg(p')\leq \deg(p)=d$. Since $p'$ is symmetric, there exists a univariate polynomial $p''$ such that $p''(x_1+\cdots+x_n)=p'(\vec x)$ and $\deg(p'')=\deg(p')$. In other words, $f(\vec x)=\sign(p''(x_1+\dots+x_n))$. To conclude it is enough to notice that on $0$, $1$, $\dots$, $n$ the polynomial $p''$ must be alternating signs and hence it must have at least $n$ real roots; therefore $d\geq \deg(p'')\geq n$.
\end{rem}

Given a \qsos/\qsa/\qns refutation 
\[ \sum_{p\in \enc(\phi)} q_p p+ \sum_{u\in \mathrm{vars}_\forall(\mathcal Q)}q_u(1-2u)+ q + 1=0 \]
of a false QBF $\mathcal Q.\phi$, we claim that for every universal variable $u$, $\sign(q_u)$ computes a strategy for $1-2u$.\footnote{We use the convention that $\sign(0)=+1$.}
This can then be easily transformed into a strategy for $u$ through a linear output transformation mapping $-1$ to $1$ and $1$ to $0$. 
Per definition, for every universal variable $u$, $\sign(q_u)$ is a PTF. The size of the extracted strategy is the sum of the sizes of the PTFs and, as such, the Q-size of the refutation. 
Analogously, the degree of the extracted strategy, i.e. the maximum degree of the PTFs, equals the total Q-degree of the refutation.

\begin{thm}
\label{thm:strategy-extraction}
	  Let $ \sum_{p\in \enc(\phi)} q_p p+ \sum_{u\in \mathrm{vars}_\forall(\mathcal Q)}q_u(1-2u)+ q + 1 =0$ be a \qsos / \qsa / \qns proof of a false QBF $\mathcal Q.\phi$. Then the universal strategy that maps each universal variable $u\in \vars_{\forall}(\mathcal Q)$ to
	  \(
	 \frac{1}{2}(1-\sign(q_u))
	  \)
	  is a countermodel (i.e. falsifies $\phi$).
\end{thm}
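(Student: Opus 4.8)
The plan is to run the \emph{deterministic} version of the soundness argument (Theorem~\ref{thm:soundness}). In that proof one averages over all universal strategies $\tau$ so that each term $q_u(1-2u)$ vanishes in expectation; here we instead fix $\tau$ to be the extracted strategy $\sigma\colon u\mapsto\frac12(1-\sign(q_u))$ and show that then each such term is actually \emph{non-negative} along the play, which already yields the contradiction. So the whole proof reduces to tracking signs.

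First I would check that $\sigma$ is a legal QBF strategy: the polynomial $q_u$ mentions only variables quantified left of $u$ in $\mathcal Q$ (this is the variable-dependence condition built into~\eqref{eq:QNS-QSA-QSOS}), so by the time the evaluation game reaches $u$ the rational number $\restr{q_u}{\cdot}$ is already determined by the moves made so far, and $\frac12(1-\sign(q_u))\in\{0,1\}$ by the convention $\sign(0)=+1$. Next, assume towards a contradiction that $\sigma$ is not a countermodel: there is a play of the evaluation game in which the $\forall$-player follows $\sigma$, the $\exists$-player answers arbitrarily, and the resulting total Boolean assignment $\alpha$ (extended by $\bar v:=1-v$) satisfies $\phi$. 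Since $\alpha$ satisfies every clause and is twin-consistent, each $p\in\enc(\phi)$ evaluates to $0$ at $\alpha$, so evaluating the refutation identity at $\alpha$ gives
\[
\sum_{u\in\mathrm{vars}_\forall(\mathcal Q)}\restr{q_u}{\alpha}\,\restr{(1-2u)}{\alpha}+\restr{q}{\alpha}+1=0.
\]
In all three systems $\restr{q}{\alpha}\ge 0$: for $\qns$ it is identically $0$; for $\qsa$ it has non-negative coefficients and is evaluated at $\{0,1\}$-values; for $\qsos$ it is a sum of squares. Hence $\sum_u\restr{q_u}{\alpha}\,\restr{(1-2u)}{\alpha}\le -1$.

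The final step is to contradict this via the definition of $\sigma$. Because $\alpha$ is consistent with $\sigma$, for every universal $u$ we have $\restr{u}{\alpha}=\frac12(1-\sign(\restr{q_u}{\alpha}))$, hence $\restr{(1-2u)}{\alpha}=\sign(\restr{q_u}{\alpha})$ and therefore $\restr{q_u}{\alpha}\,\restr{(1-2u)}{\alpha}=\restr{q_u}{\alpha}\cdot\sign(\restr{q_u}{\alpha})=\abs{\restr{q_u}{\alpha}}\ge 0$. Summing over $u$ gives $\sum_u\restr{q_u}{\alpha}\,\restr{(1-2u)}{\alpha}\ge 0$, contradicting the previous inequality. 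So no such $\alpha$ exists, i.e.\ every existential play against $\sigma$ ends with $\phi$ falsified, which is the claim. The only point that needs genuine care — and the natural place for the argument to go wrong if stated carelessly — is the bookkeeping ensuring that $\restr{u}{\alpha}$ and $\restr{q_u}{\alpha}$ refer to the same partial play; this is precisely why the left-of-$u$ dependence of $q_u$ in~\eqref{eq:QNS-QSA-QSOS} is essential, mirroring its role in Theorem~\ref{thm:soundness}.
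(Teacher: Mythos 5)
Your proof is correct and takes essentially the same route as the paper's: you show that along any play consistent with the extracted strategy every term $q_u(1-2u)$ evaluates to $|q_u|\ge 0$, that $q\ge 0$, and hence that $\sum_p q_p p\le -1$, which forces some $p\in\enc(\phi)$ to be nonzero, i.e.\ $\phi$ falsified. The paper states this directly rather than by contradiction, but the content — the sign bookkeeping for $q_u(1-2u)$ and the left-of-$u$ dependence guaranteeing that $q_u$ is determined before $u$ is played — is the same.
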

\begin{proof}
	The syntactic restriction of countermodels, i.e.\  that each variable  only depends on  variables left of it in the quantifier prefix $\mathcal Q$, holds by  definition of $q_u$.
	
	For every universal variable $u$ played according to the strategy, we have $q_u(1-2u)=q_u\sign(q_u) \ge 0$.
	As such, 
	$	\sum_{u \in \vars_{\forall}(\mathcal Q)} q_u(1-2u) + q + 1 \ge 1$.
	Hence, $\sum_{p \in \enc(\phi)} q_pp \le -1$, which is only possible if the matrix $\phi$  is not satisfied (otherwise $q_pp=0$ for all $p \in \enc(\phi)$).
\end{proof}

To exemplify the strategy extraction technique for \qsos, we use  \cref{rem:PTF-parity} and \cref{thm:strategy-extraction} to prove that the \parityformula formulas \cite{BCJ15}
\begin{equation*}
	\parityformula_n=\exists x_1 \cdots\exists x_n \forall u \exists t_1 \cdots \exists t_n. \ (t_1 \leftrightarrow x_1) \land (u \nleftrightarrow t_n) \land \bigwedge_{i=2}^n (t_i \leftrightarrow t_{i-1} \oplus x_i).
\end{equation*}
are exponentially hard for \qsos.
Notice that the only winning strategy for the universal player is to set $u=x_1\oplus\dots\oplus x_n$.

\begin{cor}
\label{cor:parity}
	Every \qsos refutation of $\parityformula_n$ requires Q-size $\exp(\Omega(n))$.
\end{cor}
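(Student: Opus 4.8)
The plan is to combine strategy extraction (Theorem~\ref{thm:strategy-extraction}), the classical degree lower bound for PTFs computing parity (Remark~\ref{rem:PTF-parity}), and the size--degree relation for \qsos (Theorem~\ref{thm:sizedegree}). I first observe that $\parityformula_n$ has a \emph{unique} countermodel, namely the strategy setting $u = x_1 \oplus \cdots \oplus x_n$: on any assignment to the existential variables, the clauses $(t_1 \leftrightarrow x_1)$ and $(t_i \leftrightarrow t_{i-1} \oplus x_i)$ force $t_i = x_1 \oplus \cdots \oplus x_i$, and then $(u \nleftrightarrow t_n)$ is satisfiable exactly when $u \ne x_1 \oplus \cdots \oplus x_n$. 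Hence, given any \qsos refutation $\pi$ of $\parityformula_n$ written as in~\eqref{eq:QNS-QSA-QSOS}, Theorem~\ref{thm:strategy-extraction} tells us that $u \mapsto \tfrac12(1-\sign(q_u))$ is precisely this countermodel, so $\sign(q_u)$ computes (the $\pm 1$-valued version of) $n$-bit parity as a function of $x_1,\dots,x_n$ --- the only variables occurring in $q_u$, since $u$ is the sole universal variable and $q_u$ may mention only variables quantified before $u$. Thus $q_u$ is a polynomial threshold function for $n$-bit parity.

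Next I would deduce a lower bound on the existential Q-degree. By Remark~\ref{rem:PTF-parity}, any polynomial whose sign computes $n$-bit parity has degree at least $n$. Substituting the twin variables $\overline{x_i} \mapsto 1 - x_i$ and reducing modulo $x_i^2 = x_i$ changes neither the computed Boolean function nor (does not increase) the degree, so $q_u$ itself has degree at least $n$; since every variable occurring in $q_u$ is existential, its existential degree is also at least $n$. Therefore $\qdeg(\pi) \ge n$ for \emph{every} \qsos refutation $\pi$ of $\parityformula_n$.

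Finally I would feed this into the size--degree relation. The formula $\parityformula_n$ has $\Theta(n)$ variables, so by Theorem~\ref{thm:sizedegree} a \qsos refutation of $\qsize$ $s$ yields one of $\qdeg$ $O(\sqrt{n\log s})$; but that refutation must also satisfy $\qdeg \ge n$ by the previous paragraph, whence $n = O(\sqrt{n\log s})$, i.e.\ $s = \exp(\Omega(n))$. As $\pi$ was arbitrary, this is the claimed $\qsize$ lower bound.

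I expect the main obstacle to be not a single hard step but making the chain airtight: one must be careful that the Minsky--Papert-style degree bound obtained from strategy extraction is a statement about $q_u \in \mathbb{Q}[V\cup\overline V]$ (with twin variables and possibly not multilinear) and that it transfers correctly to the existential Q-degree measure used in Theorem~\ref{thm:sizedegree}; and one must note that, precisely because the bound $\qdeg \ge n$ holds for \emph{all} refutations, it applies as well to the low-degree refutation produced by the size--degree argument. A more hands-on alternative would try to count the monomials of $q_u$ directly --- a multilinear polynomial whose sign is parity must, by Möbius inversion, contain all $2^n-1$ monomials of positive degree --- but the substitution $\overline{x_i}\mapsto 1-x_i$ can both create and cancel monomials, so this does not obviously bound $\qsize(\pi)$; the route through Theorem~\ref{thm:sizedegree} is the clean one.
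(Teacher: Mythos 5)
Your proof is correct and follows essentially the same route as the paper's: strategy extraction (\cref{thm:strategy-extraction}) yields a PTF for parity, the Minsky--Papert bound (\cref{rem:PTF-parity}) forces $\qdeg(\pi)\ge n$, and the size--degree relation (\cref{thm:sizedegree}) converts this into $\qsize(\pi)=\exp(\Omega(n))$. You are somewhat more explicit than the paper (spelling out the uniqueness of the countermodel, the twin-variable substitution, and why existential Q-degree coincides with total degree since $q_u$ mentions only existential variables), but the argument is the same.
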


\begin{proof}
  	Let $d := \qdeg(\pi)$ and apply strategy extraction (\cref{thm:strategy-extraction}) to get a PTF of degree $d$ that computes $u = \bigoplus_{i=1}^n x_i$. By \Cref{rem:PTF-parity}, its degree is at least $n$, so $d \ge n$. Let $s := \qsize(\pi)$ and apply Theorem~\ref{thm:sizedegree} to obtain $n \le d = O(\sqrt{n \log s})$ and therefore $s = \exp(\Omega(n))$.
	Theorem~\ref{thm:sizedegree} can be applied here, because $\parityformula_n$ only has a single universal variable and, as such, its existential Q-degree equals its total Q-degree.
\end{proof}

As a second consequence of \cref{thm:strategy-extraction},  strategy extraction can also be used to embed \qsos into more powerful systems, in this case \qtcz. \qtcz is the $\fregeTC$ system with an added universal reduction rule. The Q-size of a \qtcz refutation is the sum of the number of symbols of all lines involved in a  $\forall$-reduction step. 

\begin{restatable}{cor}{cortcz}
	\label{cor:tcz}
	$\qtcz$ p-simulates \qsos in the $\qsize$ measure.
\end{restatable}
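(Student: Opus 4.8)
The plan is to combine the strategy extraction from \Cref{thm:strategy-extraction} with the well-known fact that $\fregeTC$ can efficiently reason about polynomial threshold functions. Suppose we are given a $\qsos$ refutation $\pi$ of a false QBF $\mathcal Q.\phi$ of $\qsize$ $s$, written as $\sum_{p\in\enc(\phi)} q_p p + \sum_{u\in\vars_\forall(\mathcal Q)} q_u(1-2u) + q + 1 = 0$. By \Cref{thm:strategy-extraction}, the universal strategy mapping each $u$ to $\frac12(1-\sign(q_u))$ is a countermodel; each $\sign(q_u)$ is a PTF whose size (number of monomials in $q_u$) is bounded by $\qsize(\pi)=s$, and each $q_u$ depends only on variables left of $u$. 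The first step is to observe that a PTF of size $m$ on Boolean inputs is computed by a $\mathsf{TC}_0$ circuit of size polynomial in $m$: expand $q_u$ as an explicit sum of $\le s$ monomials, each of which is a conjunction (hence an AND-gate) of the literals in it, and each carries a rational coefficient with bit-length polynomial in the size of $\pi$; the sign of the weighted sum of these conjunctions is exactly a single threshold gate applied to polynomially many AND-gates, which is a depth-two $\mathsf{TC}_0$ circuit. Thus from $\pi$ we obtain, in polynomial time, a $\mathsf{TC}_0$ description of a countermodel of total size polynomial in $\qsize(\pi)$.

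The second step is the standard route from "$\mathsf{TC}_0$-computable countermodel" to a $\qtcz$ refutation, following the strategy-extraction-based simulations of \cite{BBCP20,BBH19}. Let $\sigma=(\sigma_u)_{u\in\vars_\forall(\mathcal Q)}$ be the extracted countermodel, with $\sigma_u$ given by a $\mathsf{TC}_0$ circuit in the variables left of $u$. We substitute into the matrix $\phi$ each universal variable $u$ by (a propositional encoding of) $\sigma_u$, obtaining an unsatisfiable propositional CNF-like formula $\phi[\sigma]$ over the existential variables together with fresh extension variables naming the internal gates of the circuits $\sigma_u$. Because $\sigma$ is a countermodel, $\phi[\sigma]$ is false under every assignment to the existential variables, and — crucially — it is false for a "simple" reason: $\fregeTC$, which has direct access to threshold connectives, proves $\lnot\phi[\sigma]$ with a proof of size polynomial in $|\phi|$ and the size of the circuits. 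One then lifts this propositional $\fregeTC$ proof of $\lnot\phi[\sigma]$ to a $\qtcz$ refutation of $\mathcal Q.\phi$ by reintroducing the universal variables: each line mentioning $\sigma_u$ is obtained from the corresponding $\mathcal Q.\phi$-line by a $\forall$-reduction step on $u$ that plugs in the value $\sigma_u$, which is legal precisely because $\sigma_u$ depends only on variables to the left of $u$. The $\qsize$ of the resulting $\qtcz$ refutation — the total number of symbols in lines involved in $\forall$-reduction steps — is bounded by the number of such steps times the size of the lines, which is polynomial in $|\phi|$ and in $\qsize(\pi)$; the purely propositional part of the $\fregeTC$ proof does not touch $\forall$-reductions and hence does not contribute to $\qsize$.

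The main obstacle is making the second step precise: one must carefully check that the lifting of a propositional $\fregeTC$ proof of $\lnot\phi[\sigma]$ back to a $\qtcz$ refutation genuinely respects the quantifier order at every $\forall$-reduction step, and that all $\forall$-reductions can be performed "in one block" so that the $\qsize$ cost is controlled — this is exactly the technical heart of the strategy-extraction simulations in \cite{BBCP20} and must be adapted to our PTF model. The comparatively routine parts are the $\mathsf{TC}_0$ simulation of a PTF (a standard depth-two construction, once one bounds coefficient bit-lengths in $\pi$) and the observation that plugging a $\mathsf{TC}_0$ countermodel into a false QBF yields something $\fregeTC$ refutes efficiently. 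A detailed proof appears in the appendix.
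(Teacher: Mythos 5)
Your proposal takes essentially the same approach as the paper's proof: extract PTF countermodels via \Cref{thm:strategy-extraction}, convert each PTF into a depth-two $\mathsf{TC}_0$ circuit (AND gates for monomials feeding a single threshold gate), and then invoke the characterisation from \cite{BBCP20} that the $\qsize$ of $\qtcz$ refutations matches the $\mathsf{TC}_0$-size of countermodels. The only difference is that you spell out the substitute-and-lift construction underlying the \cite{BBCP20} result, whereas the paper simply cites it as known; your parenthetical claim that coefficients have polynomial bit-length is not actually needed, since threshold gates with arbitrary weights can be simulated by polynomial-size $\mathsf{TC}_0$ circuits regardless.
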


\begin{proof}
	We  extract PTFs calculating a countermodel for the QBF from {\qsos} refutations. 
	A PTF can be transformed into a depth-2 threshold circuit as follows.
	Every monomial can be seen as an AND Gate, since we are only interested in Boolean assignments. In the second layer, we can use a single weighted linear threshold function using the AND gates as inputs.
	
	It is known that the $\qsize$ of {\qtcz} refutations is characterized by the size of expressing the strategy in $\mathsf{TC}_0$ circuits \cite{BBCP20}.
	As such, there exists a \qtcz proof of $\qsize$ bounded by the size of the $\mathsf{TC}_0$ circuit computing the countermodel, which itself is bounded by the $\qsize$ of the $\qsos$ refutation.
\end{proof}

\section{Lower bounds via Q-pseudo-expectation}
\label{sec:pseudoexpectation}

In the propositional setting, the notion of pseudo-expectation is the standard tool to obtain degree lower bounds for $\sos$, see for instance \cite{FKP.19}. Thanks to the size-degree relation for $\sos$ \cite{AH.19},  degree lower bounds also give  size lower bounds.

Inspired by the propositional notion of pseudo-expectation, we give a notion of pseudo-expectation for $\qsos$ and  use it to prove lower bounds on the $\qdeg$ of $\qsos$ refutations. In particular, for a false QBF $\mathcal Q.\phi$ and a $\qsos$ expression $\pi$ of the form
\begin{equation}
\label{eq:pseudo-exp}
	\sum_{p\in \enc(\phi)} q_p p+ \sum_{u\in \mathrm{vars}_\forall(\mathcal Q)}q_u(1-2u)+ q + 1\ ,
\end{equation}
where all the variables in $q_u$ are on the left of $u$ in $\mathcal Q$ and $q$ is a sum of squares, we consider  \emph{witnesses} that $\pi\neq 0$, i.e.\ that $\pi$ \emph{is not} a $\qsos$ refutation of~$\mathcal Q.\phi$. In analogy to the propositional case, we call the witnesses we construct  \emph{Q-pseudo-expectations}. 

\begin{defi}[Q-pseudo-expectation in $\qsos$]
\label{def:Q-pseudoexpectation}
Given $\mathcal Q.\phi$ and a $\qsos$ expression $\pi$  as in eq.~\eqref{eq:pseudo-exp}, a  \emph{Q-pseudo-expectation} for $\mathcal Q.\phi$ and $\pi$ is a linear function $\psE \colon \mathbb Q[V\cup\overline V]\to \mathbb R$  such that:
	\begin{enumerate}
	\item $\psE[1] = 1$;
	\item $\psE[q + \sum_{p\in \enc(\phi)} q_p p]\ge 0$; 
	\item $\psE[\sum_{u \in \vars_{\forall}(\mathcal Q)}q_u(1-2u)] \ge 0$.
\end{enumerate}
\end{defi}

In the propositional context, a single pseudo-expectation for $\sos$ typically targets a fixed degree~$d$ and has properties similar to 1.--3. above but for arbitrary $q_p$ and sum-of-squares~$q$ such that the degree of $q_pp$ and $q$ are at most $d$. In this way, a single pseudo-expectation rules out the possibility of \emph{any} small-degree $\sos$ refutation. There are exceptions to this general approach, for instance the pseudo-expectations used in \cite{Hakoniemi20} that are targeting all $\sos$ proofs over a certain set of monomials, \emph{but} we are not aware of degree lower bounds in $\sos$ proved by constructing a family of pseudo-expectations each tailored to a specific set of polynomials (i.e. as in \cref{def:Q-pseudoexpectation} but without the condition in item 3.). In the QBF context, this is what we do. To rule out small $\qdeg$ $\qsos$ refutations we use a family of pseudo-expectations, each targeting \emph{one} possible candidate $\qsos$ proof, i.e.\ an expression as in eq.~\eqref{eq:pseudo-exp}. We formalise this approach in \Cref{thm:pseudoExp} and exemplify it in \Cref{thm:Equality}.

\begin{thm} \label{thm:pseudoExp}
	Given a QBF $\mathcal Q. \phi$, if for every $\qsos$ expression $\pi$ as in eq.~\eqref{eq:pseudo-exp} with $\qdeg(\pi)<d$ there is a pseudo-expectation for $\mathcal Q. \phi$ and $\pi$, then every $\qsos$ refutation of $\mathcal Q. \phi$ has $\qdeg$ at least $d$.
\end{thm}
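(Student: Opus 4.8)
The plan is to argue by contraposition. Suppose, toward a contradiction, that $\mathcal Q.\phi$ has a $\qsos$ refutation $\pi$ with $\qdeg(\pi) < d$, written in the form of eq.~\eqref{eq:pseudo-exp} so that, as a polynomial identity,
\[
\sum_{p\in \enc(\phi)} q_p p+ \sum_{u\in \mathrm{vars}_\forall(\mathcal Q)}q_u(1-2u)+ q + 1 = 0\ ,
\]
with $q$ a sum of squares and each $q_u$ depending only on the variables left of $u$ in $\mathcal Q$. Regarded merely as a $\qsos$ \emph{expression} of the shape in eq.~\eqref{eq:pseudo-exp}, this $\pi$ has $\qdeg(\pi) < d$, so the hypothesis of the theorem supplies a Q-pseudo-expectation $\psE\colon \mathbb Q[V\cup\overline V]\to\mathbb R$ for $\mathcal Q.\phi$ and this particular $\pi$, satisfying the three conditions of \Cref{def:Q-pseudoexpectation}.

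Next I would evaluate $\psE$ on both sides of the refutation identity. Since $\pi$ is the zero polynomial and $\psE$ is linear, $\psE[\pi] = \psE[0] = 0$. On the other hand, grouping $\pi$ as $\big(q + \sum_{p} q_p p\big) + \big(\sum_{u} q_u(1-2u)\big) + 1$ and using linearity again,
\[
\psE[\pi] = \psE\!\Big[q + \sum_{p\in \enc(\phi)} q_p p\Big] + \psE\!\Big[\sum_{u\in \mathrm{vars}_\forall(\mathcal Q)} q_u(1-2u)\Big] + \psE[1]\ .
\]
By item~2 of \Cref{def:Q-pseudoexpectation} the first summand is $\ge 0$, by item~3 the second is $\ge 0$, and by item~1 the last is $1$. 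Hence $\psE[\pi] \ge 1 > 0$, contradicting $\psE[\pi] = 0$. Therefore no $\qsos$ refutation of $\mathcal Q.\phi$ can have $\qdeg < d$, i.e.\ every $\qsos$ refutation of $\mathcal Q.\phi$ has $\qdeg$ at least $d$, as claimed.

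The deduction is essentially immediate once \Cref{def:Q-pseudoexpectation} is in place. The only point requiring a moment of care is the observation that a genuine refutation, viewed purely as an algebraic expression, still lies within the scope of the hypothesis — it is a $\qsos$ expression of the right form with $\qdeg < d$ — so a tailored pseudo-expectation is guaranteed; and that the ``witness that $\pi\neq 0$'' character of the definition is precisely what clashes with the refutation identity $\pi = 0$. There is no real obstacle in this argument: the mathematical work is entirely in \emph{constructing} suitable families of Q-pseudo-expectations in the applications (as in \Cref{thm:Equality}), not in this short deduction.
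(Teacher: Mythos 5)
Your proof is correct and coincides with the paper's argument: both apply the hypothesized pseudo-expectation $\psE$ to the refutation identity, use linearity with items 1--3 of \Cref{def:Q-pseudoexpectation} to get $\psE[\pi]\ge 1$, and derive a contradiction with $\psE[\pi]=\psE[0]=0$. Nothing differs beyond exposition.
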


\begin{proof}
If there was a $\qsos$ refutation $\pi=0$ of $\mathcal Q.\phi$ with $\qdeg(\pi)<d$, then taking the pseudo-expectation $\psE$ for $\mathcal Q.\phi$ and $\pi$ we get $\psE[\pi]=\psE[0]=0$. Notice that  necessarily $\psE[0]=0$, by linearity and the identity $\psE[0+0]=\psE[0]$. On the other hand, again by linearity, and the properties of $\psE$, we get $\psE[\pi]\geq 1$.
\end{proof}

We apply the Q-pseudo-expectation technique to show a $\qdeg$ lower bound on $\qsos$ refutations of  the   $\equality_n$ formulas \cite{BBH18} where
\[ 
\equality_n=\exists x_1 \cdots x_n \forall u_1 \cdots u_n \exists t_1 \cdots t_n. \bigwedge_{i=1}^n (t_i \rightarrow (x_i \nleftrightarrow  u_i)) \land \bigvee_{i=1}^n t_i\ .
\]

For the $\qdeg$ lower bound we only use the fact that every assignment satisfying the matrix sets $u_i \nleftrightarrow x_i$ for some $i$.

$\equality_n$ is a quite simple QBF. Hardness for $\qsos$ might suggest that the system is  weak. $\equality_n$ is also hard for $\qures$ and $\qpc$ \cite{BBH18}, but easy in $\mathsf{Q\text{-}depth\text{-}}d\ \mathsf{Frege}$ \cite{BBH18}, so  the hardness appears to only stem  from the expressiveness of the objects used. Similarly, $\qsos$ on depth-$d$ arithmetic circuits (instead of polynomials) would  shortly  prove $\equality_n$, but if  polynomials are represented explicitly as sums of monomials,  $\equality_n$ becomes hard.

\begin{thm}
\label{thm:Equality}
	Every $\qsos$ refutation $\pi$ of $\equality_n$ has $\qdeg(\pi)\ge n$ and $\qsize(\pi)\geq \exp(\Omega(n))$.
\end{thm}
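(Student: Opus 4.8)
The strategy is to use the Q-pseudo-expectation method (\Cref{thm:pseudoExp}): we must exhibit, for every $\qsos$ expression $\pi$ of the form in eq.~\eqref{eq:pseudo-exp} with $\qdeg(\pi) < n$, a Q-pseudo-expectation $\psE$ for $\equality_n$ and $\pi$. The key feature to exploit is that each polynomial $q_{u_i}$ depends only on the variables $x_1,\dots,x_n$ (since the $u_j$ are all to the right of the $x$'s and left of the $t$'s, and the $t$'s and other $u$'s are to the right of $u_i$), so the constraint $\qdeg(\pi) < n$ means that each $q_{u_i}$, as a polynomial in the $x$-variables, has degree strictly less than $n$. This is what makes the universal-reduction term controllable.

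\textbf{Construction of $\psE$.} First I would fix a probability distribution on Boolean assignments that is ``balanced'' with respect to the $x$-variables but concentrated on assignments satisfying the matrix of $\equality_n$. Concretely, consider assignments where $\vec x$ is uniform on $\{0,1\}^n$, each $u_i$ is set to $x_i \oplus b_i$ for a carefully chosen bit (so that $u_i \nleftrightarrow x_i$ can be made true where needed), and exactly one $t_i$ is set to $1$ in a way consistent with the clause $t_i \to (x_i \nleftrightarrow u_i)$. Define $\psE[m]$ for a monomial $m$ to be the expectation of $\restr{m}{\alpha}$ (with $v + \bar v$ set to $1$) under this distribution, and extend linearly; this automatically gives $\psE[1] = 1$. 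Because the distribution is supported on satisfying assignments of $\phi$, we get $\psE[\sum_p q_p p] = 0$ and $\psE[q] \ge 0$ since $q$ is a sum of squares and $\psE$ is an average of genuine evaluations — this settles conditions 1 and 2. The delicate point is condition 3: $\psE\big[\sum_i q_{u_i}(1-2u_i)\big] \ge 0$. Writing $1 - 2u_i$ as $\pm(1-2x_i)$ depending on the parity relation forced by the distribution, and using independence of $\vec x$ from the residual randomness conditioned on $\vec x$, we reduce to showing $\E_{\vec x}[q_{u_i}(\vec x)\cdot(\pm(1-2x_i))] = 0$, which holds \emph{provided $q_{u_i}$ has $x$-degree $< n$}: the function $\prod_j (1-2x_j)$ is the unique (up to scalar) degree-$n$ Fourier character, and a polynomial of degree $< n$ has zero correlation with $1-2x_i$ only if the relevant monomials avoid a full parity — here one must set up the distribution so that the sign $\pm$ and the conditional law of $u_i$ depend on \emph{all} of $\vec x$ (e.g.\ via $x_1 \oplus \cdots \oplus x_n$), forcing the correlation to pick out the top character, which $q_{u_i}$ cannot hit.

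\textbf{Main obstacle.} The hard part is engineering the distribution so that simultaneously (a) it is supported on assignments satisfying the matrix (so conditions 1--2 are free), and (b) the law of each $u_i$ given $\vec x$ makes $1 - 2u_i$ correlate with the full parity $\prod_j(1-2x_j)$ rather than any lower-degree character, so that every $q_{u_i}$ of $\qdeg < n$ is annihilated in condition 3. These two requirements pull in opposite directions: satisfying $\bigvee_i t_i$ together with $t_i \to (x_i \nleftrightarrow u_i)$ wants to pin down some $u_i$ locally, whereas killing low-degree correlations wants $u_i$'s value to depend globally on $\vec x$. The resolution is to randomize which index $i$ carries the "active" $t_i = 1$, and to choose that active index as a function of the global parity of $\vec x$; one then checks that conditioned on $\vec x$, each individual $u_i$ is either forced (on the $2^{n-1}$-measure event that $i$ is active) or free/uniform otherwise, and a short computation shows the resulting correlation of $1-2u_i$ with any $x$-monomial of degree $< n$ vanishes. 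Once the distribution is in place, verifying conditions 1--3 is routine, and \Cref{thm:pseudoExp} yields $\qdeg(\pi) \ge n$; the claimed bound $\qsize(\pi) \ge \exp(\Omega(n))$ then follows immediately by combining this with the size--degree relation \Cref{thm:sizedegree} (since $\equality_n$ has $2n + n$ variables and, once $\qdeg \ge n$, the inequality $n \le O(\sqrt{n\log s})$ forces $s = \exp(\Omega(n))$ — note $\equality_n$ having several universal variables does not impede this, as the relation is stated for general QBFs).
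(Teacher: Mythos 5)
Your overall framework — use the Q-pseudo-expectation method (\Cref{thm:pseudoExp}) to establish $\qdeg(\pi)\ge n$ and then pass to $\qsize$ via \Cref{thm:sizedegree} — is exactly the paper's, and the instinct to exploit the degree-$<n$ condition through a parity/top-character cancellation is also in the right spirit. However, the construction you sketch has a genuine gap. The ``key feature'' you cite is false: in $\equality_n$ the prefix is $\exists\vec x\,\forall u_1\cdots u_n\,\exists\vec t$, so $u_1,\dots,u_{i-1}$ are to the \emph{left} of $u_i$, and $q_{u_i}$ may depend on them with unbounded degree — $\qdeg$ only caps the number of \emph{existential} variables per monomial. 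Your reduction of condition 3 to $\E_{\vec x}\bigl[q_{u_i}(\vec x)\,\E[1-2u_i\mid\vec x]\bigr]$ silently assumes $q_{u_i}$ is an $x$-polynomial, and once the $u_j$'s carry randomness conditioned on $\vec x$ (which your distribution requires), conditioning them away yields effective $x$-polynomials of degree exceeding $n$, so the full-parity cancellation no longer applies. Separately, the ``resolution'' of your main obstacle does not check out as written: if $u_i$ is uniform when inactive and forced to $1-x_i$ when active, then $\E[1-2u_i\mid\vec x]=-\Pr[i\ \text{active}\mid\vec x]\,(1-2x_i)$, which is a multiple of the full parity only when $\Pr[i\ \text{active}\mid\vec x]$ is $\pm1$-valued, incompatible with a randomly chosen active index.

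The paper handles both issues by inverting which of conditions 2 and 3 is made ``free.'' It fixes $\vec u\mapsto\vec\gamma$ deterministically, choosing $\vec\gamma$ to make the expected universal-reduction term nonnegative; this renders condition 3 trivial \emph{and} turns each $q_{u_i}|_{\vec u=\vec\gamma}$ into a genuine degree-$<n$ polynomial in $\vec x$, eliminating the $u_{<i}$-dependence. The price is that the support then contains one non-satisfying point (the one with $\vec t=\vec 0$, which falsifies $\bigvee_i t_i$), so condition 2 is the delicate one; that is where the Fourier identity $\sum_{\vec\alpha}(-1)^{|\vec\alpha|}g'|_{\vec x\mapsto\vec\alpha}=0$ is deployed, to show that the bad point's contribution is absorbed by the satisfying ones. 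Your plan insists the support lies entirely in satisfying assignments and puts the Fourier argument on condition 3; the paper allows one bad point and puts the Fourier argument on condition 2 — and it is precisely this inversion that lets the paper dodge the $u_{<i}$-dependence issue that breaks your sketch.
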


\begin{proof}
  Let $\mathcal Q.\phi$ be the QBF encoding of $\equality_n$. First notice that, thanks to \Cref{thm:sizedegree}, it is enough to prove the $\qdeg$ lower bound. (We comment that to do this, the strategy extraction technique from \Cref{sec:strategy}  would not work here.)
  Assume, towards a contradiction, that there is a $\qsos$ refutation of $\mathcal Q.\phi$ with $\qdeg(\pi)<n$:
\begin{equation}
\label{eq:qsos-refutation-equality}
	\sum_{p\in \enc(\phi)} q_p p+ \sum_{u\in \mathrm{vars}_\forall(\mathcal Q)}q_u(1-2u)+ q + 1 =0\ .
\end{equation}

Let $\pi$ the LHS of eq.~\eqref{eq:qsos-refutation-equality}.
We construct a Q-pseudo-expectation for $\mathcal Q.\phi$ and $\pi$. This, by \Cref{thm:pseudoExp}, implies the wanted contradiction. 

Given $\vec \alpha=(\alpha_1,\dots,\alpha_n)\in \{0,1\}^n$, let $\vec x\mapsto\vec \alpha$ be the Boolean assignment setting $x_i$ to $\alpha_i$ for each $i\in [n]$. We define analogously $\vec u\mapsto\vec \alpha$ and $\vec t\mapsto\vec \alpha$.  Given $\vec \alpha,\vec \beta\in \{0,1\}^n$, let $\vec\alpha\oplus\vec\beta$ be the vector whose $i$th entry is the sum $\alpha_i+\beta_i\pmod 2$. 

Let $h = \sum_{u\in \vars_{\forall}(\mathcal Q)} q_{u}(1-2u)$. We have that
\begin{equation}
\label{eq:h-alpha-beta}
	\sum_{\vec \alpha,\vec \beta\in \{0,1\}^n} \restr{h}{\vec x\mapsto\vec\alpha,\ \vec u\mapsto\vec \beta}
	=
	0\ .
\end{equation}
Let $\vec \gamma\in \{0,1\}^n$ be the assignment maximizing $\sum_{\vec \alpha\in \{0,1\}^n} \restr{h}{\vec x\mapsto\vec\alpha,\ \vec u\mapsto\vec\gamma}$. 
By eq.~\eqref{eq:h-alpha-beta}, 
\begin{equation}
\label{eq:h-gamma}
	\sum_{\vec \alpha\in \{0,1\}^n} \restr{h}{\vec x\mapsto\vec\alpha,\ \vec u\mapsto\vec\gamma}\geq 0\ .
\end{equation}
We define our candidate Q-pseudo-expectation as
	\[ 
	\psE(p) = 2^{-n} \sum_{\vec \alpha \in \{0,1\}^n }\restr{p}{\vec x\mapsto\vec \alpha,\ \vec u\mapsto\vec \gamma,\ \vec t\mapsto\vec \alpha\oplus\vec \gamma}\ , 
	\]
	and we prove that $\psE$ satisfies all the conditions of the definition of Q-pseudo-expectation.
	The definition of $\psE$ immediately implies that $\psE[1] = 1$, and $\psE[h] \ge 0$ follows  from eq.~\eqref{eq:h-gamma}.
Let $g = q + \sum_{p\in \enc(\phi)} q_p p$. To conclude the argument we need to prove that $\psE[g]\ge 0$. 

 Since $g=-h-1$ and by construction $h$ cannot contain $t_i$ variables, the polynomial $g$  contains only $x_i$ and $u_i$ variables. That is, 
 \[
 \psE[g]
 =
 2^{-n}\sum_{\vec \alpha \in \{0,1\}^n }\restr{g}{\vec x\mapsto\vec \alpha,\ \vec u\mapsto\vec \gamma}
 \stackrel{(\star)}{=}
 2^{-n}\sum_{\vec \alpha \in \{0,1\}^n }\restr{g}{\vec x\mapsto\vec \alpha\oplus \vec \gamma,\ \vec u\mapsto\vec \gamma}\ ,
 \]
 where the last equality $(\star)$ follows as $\vec \alpha\oplus\vec \gamma$ ranges over all $\{0,1\}^n$ just in a different order.

Let $g'$ be the polynomial such that for every $\vec \alpha\in \{0,1\}^n$, 
\[
\restr{g'}{\vec x\mapsto\vec \alpha}=\restr{g}{\vec x\mapsto\vec \alpha\oplus \vec \gamma,\ \vec u\mapsto\vec \gamma}\ .
\] 
The polynomial $g'$ is constructed from $\restr{g}{\vec u\mapsto\vec\gamma}$ replacing every occurrence of $x_i$ by $1-x_i$ (resp. $x_i$) if $\gamma_i=1$ (resp. $\gamma_i=0$).
	Crucially, $g'$ has degree at most $n-1$, since $\deg(g')\leq \deg_\exists(g) = \deg_\exists(-h-1)\leq \qdeg(\pi)\leq n-1$. That is, for each monomial $m$ in $g'$, there is some variable $x_i$ not appearing in it, and $m$ gets the same value on every pair $\vec \alpha, \vec \alpha'$ where $\vec \alpha\in \{0,1\}^n$ and $\vec \alpha'$  is identical to $\vec\alpha$ except in position $i$. This implies 
	\(
		\sum_{\vec \alpha \in \{0,1\}^n} (-1)^{\abs{\vec \alpha}} \restr{m}{\vec x\mapsto\vec\alpha}=0
	\), 
	where $|\vec \alpha|$ denotes the $1$-norm of $\vec \alpha$, the sum of all $1$s in it.
Summing over all monomials in $g'$, we~get
\begingroup
\allowdisplaybreaks
	\begin{align*}
	0&=\sum_{\vec \alpha \in \{0,1\}^n} (-1)^{\abs{\vec \alpha}} \restr{g'}{\vec x\mapsto\vec\alpha}
	\\
	&=
	\sum_{\vec \alpha \in \{0,1\}^n} \restr{g'}{\vec x\mapsto\vec\alpha}+\sum_{\vec \alpha \in \{0,1\}^n} ((-1)^{\abs{\vec \alpha}}-1) \restr{g'}{\vec x\mapsto\vec\alpha}
	\\
	&=\sum_{\vec\alpha\in \{0,1\}^n}\restr{g}{\vec x\mapsto\vec \alpha\oplus \vec \gamma,\ \vec u\mapsto\vec \gamma}+\sum_{\vec \alpha \in \{0,1\}^n} ((-1)^{\abs{\vec \alpha}}-1) \restr{g'}{\vec x\mapsto\vec\alpha}
	\\
	&=2^n\psE[g]+\sum_{\vec \alpha \in \{0,1\}^n} ((-1)^{\abs{\vec \alpha}}-1) \restr{g'}{\vec x\mapsto\vec\alpha}\ .
	\end{align*}
\endgroup
Hence, to conclude that  $\psE[g]\geq 0$ it is enough to show that 
	\begin{equation}
	\label{eq:sum-1}
			\sum_{\vec \alpha \in \{0,1\}^n} ((-1)^{\abs{\vec \alpha}}-1) \restr{g'}{\vec x\mapsto\vec\alpha}\leq 0\ .
	\end{equation}
For $\vec \alpha$s such that $|\vec \alpha|$ is even, the coefficient in front of $\restr{g'}{\vec x\mapsto\vec \alpha}$ is $0$, while for $\vec\alpha$s such that $|\vec \alpha|$ is odd, the coefficient in front of $\restr{g'}{\vec x\mapsto\vec \alpha}$ is $-2$. That is, to prove eq.~\eqref{eq:sum-1}, it suffices to show that for $\vec\alpha$s such that $|\vec \alpha|$ is odd, $\restr{g'}{\vec x\mapsto\vec \alpha}\geq 0$. 
By construction $\restr{g'}{\vec x\mapsto\vec \alpha}=\restr{g}{\vec x\mapsto\vec \alpha\oplus \vec \gamma,\ \vec u\mapsto\vec \gamma}$, and since there are no $t_i$ variables in $g$, $\restr{g}{\vec x\mapsto\vec \alpha\oplus \vec \gamma,\ \vec u\mapsto\vec \gamma}=\restr{g}{\vec x\mapsto\vec \alpha\oplus \vec \gamma,\ \vec u\mapsto\vec \gamma,\vec t\mapsto\vec \alpha}$.

Now, if $|\vec \alpha|$ is odd, in particular $\vec\alpha\neq \vec 0$ and $\vec \alpha\oplus \vec \gamma\neq \vec \gamma$.  It is easy to check that the assignment $\vec x\mapsto\vec \alpha\oplus \vec \gamma,\ \vec u\mapsto\vec \gamma,\vec t\mapsto\vec \alpha$ sets to $0$ (i.e. satisfies) all the polynomials $p\in \enc(\phi)$. Since $q$ is always non-negative on every assignment, we can conclude that $\restr{g}{\vec x\mapsto\vec \alpha\oplus \vec \gamma,\ \vec u\mapsto\vec \gamma,\vec t\mapsto\vec \alpha}\geq 0$ and therefore $\restr{g'}{\vec x\mapsto\vec \alpha}\geq 0$.
\end{proof}

\section{$\qsa$ as weighted proofs}
\label{sec:p-simulations-extra}

In propositional proof complexity it has been shown that $\SA$ and $\ns$ are equivalent to the systems \emph{weighted Resolution} \wres and \emph{restricted} weighted Resolution \reswres \cite{BBL.24}. Unsurprisingly this characterization also adapts to the QBF context. For simplicity we only do it for \wres and show the characterization for $\qsa$ as \emph{Q-weighted Resolution}.

\begin{defi}[Q-weighted Resolution]
  \label{def:QwRes}
	Let $\Phi = \mathcal{Q}. \phi$ be a false QBF. A \emph{Q-weighted Resolution} (\qwres) refutation of $\Phi$ is a sequence of configurations $\mathcal{L}_1, \dots, \mathcal{L}_k$, where each $\mathcal{L}_i$ is a multi-set of weighted clauses, \emph{i.e.} pairs of the form $(D,w)$ where $D$ is a clause and $w\in \mathbb Z$. We require that
\begin{itemize}
	\item $\mathcal{L}_1 = \emptyset,$
	\item $(\bot, w) \in \mathcal{L}_k$ for some $w > 0,$
	\item for each $(C,w)\in \mathcal{L}_k, w \ge 0,$
	\item $\mathcal{L}_{i+1}$ is obtained from $\mathcal{L}_{i}$ by the use of one of the following substitution rules:
	\begin{description}
	\item[Universal Reduction] $\dfrac{(C\lor u, 2w)}{(C, w)}$,  where all the variables in $C$ are on the left of $u$ in the quantifier prefix $\mathcal Q$ and $u$ is a universal variable,
\item[Axiom] $\dfrac{\phantom{(C,w)}}{(C,w)}$, where $C$ is a clause in $\phi$, $w\in\mathbb{Z}$,
\item[Symmetric Cut] $\dfrac{(C\lor x, w)\quad (C\lor \neg x, w)}{(C,w)}$ with $w\in \mathbb Z$,
\item[Idempotency] $\dfrac{(C\lor x \lor x, w)}{(C \lor x,w)}$ with $w\in \mathbb Z$.
	\end{description}
\end{itemize}
	The $\size$ of a \qwres refutation is the number of lines in the proof, and the Q-size is the number of universal reduction steps.
\end{defi}

We say that two configurations are equivalent, if the total weight of each clause is the same in both. For example, $\{(C, w), (C, v)\}$ is equivalent to $\{(C,w+v)\}$.
Additionally, every configuration is implicitly assumed to contain every clause with weight 0. 
These two properties combined allows one to use every clause $(C,w)$ for some rule, as long as $(C,-w)$ is also added to the next configuration.
(In \Cref{def:QwRes}, we implicitly allow that
$\mathcal{L}_{i+1}$ is equivalent to some configuration obtained from another configuration  equivalent to $\mathcal{L}_{i}$ by the use of one of the substitution rules.)

Note that every rule of \qwres can also be used in the reverse direction. The ``reverse'' rules are not required to be defined as such, because this is equivalent to using the ``forward'' rules with negative weights. 
For example, the reverse of the \emph{symmetric cut} rule is removing $(C,w)$ from the configuration and adding $(C\lor x, w)$ and $(C\lor \neg x,w)$. This is the same as adding $(C,-w)$ and removing $(C\lor x, -w)$ and $(C\lor \neg x, -w)$ due to the equivalence definition.

Recall that clause $C=\bigvee_{v\in P}v\lor \bigvee_{v\in N}\lnot v$ is encoded as the set of polynomials
\[
\enc(C) = \left\{\prod_{v\in P}\overline v\prod_{v\in N}v \right\}\cup \{v^2-v,\ v+\overline v -1 : v\in P\cup N\}\ .
\]
Let $M(C)=\prod_{v\in P}\overline v\prod_{v\in N}v$ be the monomial associated to $C$ in $\enc(C)$.

It is easy to see that multi-sets of weighted clauses and polynomials are just two ways of describing the same objects.
 The configuration $\{(C_1, a_1), ..., (C_n, a_n)\}$ corresponds to the polynomial $\sum_{i=1}^{n} -a_i \mathsf{M}(C_i)$ and the polynomial 
 \[
 \sum_{i=1}^n a_i\prod_{j\in N_i}\overline{x_j}\prod_{j\in P_i}x_j
 \]
 corresponds to the configuration $\{(\bigvee_{j\in N_i} x_j \lor \bigvee_{j\in P_i} \neg x_j, -a_i) : i\in\{1, ..., n\}\}$. This isomorphism preserves equivalence \cite{BBL.24}.

\begin{lem}[\cite{BBL.24}]
	Let configurations of weighted clauses $\mathcal{L}_1$ and $\mathcal{L}_2$ be given. $\mathcal{L}_1$ and $\mathcal{L}_2$ are equivalent if and only if 
	\begin{equation*}
		\sum_{(C,w)\in \mathcal{L}_1} w \mathsf{M}(C) = \sum_{(C,w)\in \mathcal{L}_2} w \mathsf{M}(C).
	\end{equation*}
\end{lem}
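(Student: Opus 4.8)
The plan is to exploit the injectivity of the clause-to-monomial map $C \mapsto \mathsf{M}(C)$. For a configuration $\mathcal L$ and a clause $C$, write $w_{\mathcal L}(C) = \sum_{(C,w)\in\mathcal L} w$ for the total weight of $C$ in $\mathcal L$ (summing over all occurrences in the multi-set); this is a finite sum and is nonzero for only finitely many clauses $C$. By the definition of equivalence, $\mathcal L_1$ and $\mathcal L_2$ are equivalent precisely when $w_{\mathcal L_1}(C) = w_{\mathcal L_2}(C)$ for every clause $C$. Grouping identical clauses in the defining sums also gives $\sum_{(C,w)\in \mathcal L} w\,\mathsf{M}(C) = \sum_{C} w_{\mathcal L}(C)\,\mathsf{M}(C)$, where the outer sum ranges over all clauses.

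The forward implication is then immediate: if $\mathcal L_1$ and $\mathcal L_2$ are equivalent, then $w_{\mathcal L_1}(C) = w_{\mathcal L_2}(C)$ for all $C$, so the two polynomials $\sum_{C} w_{\mathcal L_i}(C)\,\mathsf{M}(C)$ coincide term by term.

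For the converse, suppose the two polynomials are equal. Then $\sum_{C}\bigl(w_{\mathcal L_1}(C) - w_{\mathcal L_2}(C)\bigr)\,\mathsf{M}(C) = 0$ in $\mathbb Q[V \cup \overline V]$. The key observation is that distinct clauses give distinct monomials: from $\mathsf{M}(C) = \prod_{v\in P}\overline v \prod_{v\in N} v$ one reads off $P$ (the variables occurring barred) and $N$ (the variables occurring unbarred), hence recovers $C$; in particular $\mathsf{M}(\bot) = 1$ is not $\mathsf{M}$ of any nonempty clause. Since distinct monomials are linearly independent over $\mathbb Q$, each coefficient must vanish, so $w_{\mathcal L_1}(C) = w_{\mathcal L_2}(C)$ for all $C$, i.e.\ $\mathcal L_1$ and $\mathcal L_2$ are equivalent.

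The only point requiring any care is the injectivity of $C \mapsto \mathsf{M}(C)$, together with the attendant fact that the objects $\mathsf{M}(C)$ are genuine, pairwise distinct monomials of $\mathbb Q[V\cup\overline V]$, so that linear independence of monomials applies; everything else is just rewriting the defining sums. As the statement is quoted from \cite{BBL.24}, the argument here merely recalls the one given there.
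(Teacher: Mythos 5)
Your proof is correct. The paper states this lemma without proof, citing \cite{BBL.24}; the argument you give --- grouping weights by clause, observing that $C\mapsto\mathsf{M}(C)$ is injective, and invoking linear independence of distinct monomials in $\mathbb{Q}[V\cup\overline V]$ --- is precisely the routine argument one expects, and matches the account in \cite{BBL.24}. One small point worth being explicit about: the idempotency rule of \qwres operates on clauses with repeated literals, so if such clauses are admitted as objects then $\mathsf{M}(C)$ is in general a non-multilinear monomial (e.g.\ $\mathsf{M}(C\lor\lnot x\lor\lnot x)=\mathsf{M}(C)\,x^2$); your inversion step (``one reads off $P$ and $N$'') then recovers a \emph{multi-set} of literals from the exponent vector rather than a pair of sets, but injectivity still holds and the rest of the argument is unchanged.
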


\begin{prop}
\label{prop:qsa-equiv}
	\qwres is polynomially equivalent to \qsa in both the $\size$ and $\qsize$ measure.
\end{prop}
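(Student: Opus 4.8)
The plan is to prove the two simulation directions by translating, rule application by rule application, between \qwres derivations and summands of a \qsa refutation, using the stated dictionary sending a configuration $\{(C_1,a_1),\dots,(C_n,a_n)\}$ to the polynomial $\sum_i -a_i\,\mathsf{M}(C_i)$. The only genuinely QBF-specific point is that the \textbf{Universal Reduction} rule of \qwres corresponds exactly to the $q_u(1-2u)$ summands of a \qsa refutation (with the same variable-dependence condition); everything else mirrors the propositional equivalence of $\SA$ and \wres from \cite{BBL.24}.

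\medskip\noindent\emph{From \qwres to \qsa.} Let $\mathcal{L}_1=\emptyset,\dots,\mathcal{L}_k$ be a \qwres refutation of $\mathcal Q.\phi$ and let $P_i=\sum_{(C,w)\in\mathcal{L}_i}-w\,\mathsf{M}(C)$, so $P_1=0$. I would compute the polynomial change $P_{i+1}-P_i$ induced by each rule: an \textbf{Axiom} step for $C\in\phi$ changes $P$ by a constant multiple of $\mathsf{M}(C)\in\enc(\phi)$; a \textbf{Symmetric Cut} on variable $v$ over a clause $C$ changes $P$ by $\pm w\,\mathsf{M}(C)(v+\overline v-1)$; an \textbf{Idempotency} step changes $P$ by $w\,\mathsf{M}(C)\big(\overline v^{2}-\overline v\big)=w\,\mathsf{M}(C)(v^2-v)+w\,\mathsf{M}(C)(\overline v-v)(v+\overline v-1)$, i.e.\ by $O(1)$ monomials worth of $\enc(\phi)$-generators; and a \textbf{Universal Reduction} $\tfrac{(C\lor u,2w)}{(C,w)}$ changes $P$ by $w\,\mathsf{M}(C)(2\overline u-1)=w\,\mathsf{M}(C)(1-2u)+2w\,\mathsf{M}(C)(u+\overline u-1)$, where crucially $\mathsf{M}(C)$ uses only variables left of $u$. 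Summing over $i$ telescopes the left sides to $P_k$; collecting the monomial-axiom and Boolean-axiom contributions into multipliers $q_p$ and the $w\,\mathsf{M}(C)(1-2u)$ contributions into $q_u$ gives $P_k=\sum_p q_p p+\sum_u q_u(1-2u)$. Since $\mathcal{L}_k$ contains $(\bot,w_0)$ with $w_0>0$ and only non-negative weights, $P_k=-w_0-\sum_{C\neq\bot}a_C\,\mathsf{M}(C)=-w_0(1+q)$ with $q=\sum_{C\neq\bot}\tfrac{a_C}{w_0}\mathsf{M}(C)$ a polynomial with non-negative coefficients. Dividing the resulting identity by $w_0$ yields a \qsa refutation; each rule contributes $O(1)$ monomials to the multipliers, so $\size$ is preserved up to a polynomial factor, and only \textbf{Universal Reduction} steps feed the $q_u$, so $\qsize$ is preserved as well.

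\medskip\noindent\emph{From \qsa to \qwres.} Conversely, take a \qsa refutation $\sum_p q_p p+\sum_u q_u(1-2u)+q+1=0$, scaled so all coefficients are integers, and build a configuration sequence from $\mathcal{L}_1=\emptyset$. First realise $\sum_p q_p p$: for a generator $p=\mathsf{M}(C)$ with $C\in\phi$, each monomial of $q_p$ is produced by an \textbf{Axiom} step for $C$ followed by \textbf{Symmetric Cut} and \textbf{Idempotency} steps (used in either direction) that expand the clause so that its associated monomial equals the desired product of twin variables --- this is exactly the propositional $\SA\Rightarrow\wres$ construction of \cite{BBL.24}, and for the Boolean generators $v^2-v$ and $v+\overline v-1$ it collapses to single \textbf{Idempotency} resp.\ \textbf{Symmetric Cut} steps. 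Next, realise each monomial $c\,\mathsf{M}(D)$ of a $q_u$, where $D$ is a clause on variables left of $u$, by one \textbf{Universal Reduction} step on $D\lor u$ with weight $c$ and one further \textbf{Symmetric Cut} step absorbing the residual $c\,\mathsf{M}(D)(u+\overline u-1)$; the side condition of \textbf{Universal Reduction} is met precisely because the variables of $q_u$ lie left of $u$. After all these steps the current polynomial equals $\sum_p q_p p+\sum_u q_u(1-2u)=-(q+1)$, whose associated configuration is $(\bot,1)$ together with the clauses of $q$ carrying their non-negative coefficients as weights, so the terminal conditions of \qwres hold. The number of steps is polynomially bounded in $\size$ of the \qsa refutation, and the number of \textbf{Universal Reduction} steps equals, up to a constant, $\qsize$.

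\medskip\noindent The main obstacle is bookkeeping rather than conceptual difficulty: in the \qsa$\,\Rightarrow\,$\qwres direction one must turn ``multiply a monomial axiom by an arbitrary polynomial'' into a short sequence of weighted-resolution steps, resolving products that create $v\overline v$ or $v^2$ factors through $v\overline v=v(v+\overline v-1)-(v^2-v)$ together with idempotency --- but this is entirely the content of \cite{BBL.24}. The new point is merely that the \textbf{Universal Reduction} rule, with its variable-dependence condition, slots in cleanly to account for the $q_u(1-2u)$ terms at a cost of $O(1)$ steps per monomial, which is what keeps the $\qsize$ measure tight in both directions.
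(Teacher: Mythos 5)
Your proof is correct and follows essentially the same route as the paper's: translate each \qwres rule to a polynomial increment via the configuration-to-polynomial dictionary, observing that \textbf{Universal Reduction} is exactly what generates the $q_u(1-2u)$ summands, and reverse the process for the other direction, with all propositional bookkeeping inherited from \cite{BBL.24}. Two minor differences worth noting: you work with the positive-literal form of the reduction rule and explicitly split $2\overline u-1$ as $(1-2u)+2(u+\overline u-1)$ so the twin-variable axiom absorbs the residue, whereas the paper works with a $\neg u$-clause so the identity lands directly on $(1-2u)$; and in the \qsa$\Rightarrow$\qwres direction the paper first invokes a normal-form lemma from \cite{BBL.24} to make all $q_p$ scalars for $p\in P$ before translating, whereas you cite the general propositional translation of arbitrary multipliers. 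Both choices are equivalent bookkeeping and do not change the substance of the argument.
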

\begin{proof}
	This proof is completely analogous to the equivalence in the propositional case between $\SA$ and $\mathsf{w}$-$\res$ \cite{BBL.24}. It suffices to show that the method can be extended for the respective reduction rules.
	
	\paragraph*{\qsa p-simulates \qwres} 
	Let $\mathcal{L}_1, ..., \mathcal{L}_t$ be a \qwres refutation of some false QBF $\mathcal{Q}.\bigwedge_{j=1}^m C_j$. 
	We construct algebraic expressions $S_1, ..., S_t$  having the form of a \qsa derivation, with the property that $S_i = -\sum_{(C,w)\in \mathcal{L}_i} w\cdot \mathsf{M}(C)$. That is, each $S_i$ is constructed as a sum of (i)~multiples of the polynomials in $\enc(C_j)$ for some $j$ (recall, this encoding includes the Boolean and the twin variable axioms), and (ii)~multiples of $(1-2u)$ for some universal variable $u$, where the multiplier polynomials depend only on variables quantified left of $u$.  
	
	Let $S_1=0$. Since $\mathcal{L}_1 = \emptyset$ this fulfils our requirement.

        Now suppose we have constructed an algebraic expression $S_i$ that equals $-\sum_{(C,w)\in \mathcal{L}_i}w\cdot\mathsf{M}(C)$. We want to construct $S_{i+1}$.  
	We differentiate between the rules used to obtain $\mathcal{L}_{i+1}$. Note that only the first case is new to the QBF variant, all the other cases are completely analogous to the propositional case. We write them anyway for completeness.
		
	\begin{itemize}
		\item If $\mathcal{L}_{i+1}$ is obtained from $\mathcal{L}_i$ by a universal reduction rule $\frac{(C\lor u, 2w)}{(C,w)}$, then add to $S_i$ the terms
		\begin{equation*}
			2w \mathsf{M}(C\lor u) - w \mathsf{M}(C) = -w \mathsf{M}(C)(1-2u).
		\end{equation*} 
		That is $S_{i+1} = S_i - w \mathsf{M}(C)(1-2u)$.
		\item If $\mathcal{L}_{i+1}$ is obtained from $\mathcal{L}_i$ by an axiom rule $\frac{}{(C,w)}$, then add to $S_i$ the term $-w \mathsf{M}(C) $.
		That is $S_{i+1} = S_i - w \mathsf{M}(C)$.
		\item If $\mathcal{L}_{i+1}$ is obtained from $\mathcal{L}_i$ by an idempotency rule $\frac{(C\lor x\lor x)}{(C\lor x,w)}$, then add to $S_i$ the terms
		\begin{equation*}
			w \mathsf{M}(C\lor x \lor x) - w \mathsf{M}(C\lor x) = w \mathsf{M}(C)(x^2-x).
		\end{equation*}
		That is $S_{i+1} = S_i + w \mathsf{M}(C)(x^2-x)$.
		\item If $\mathcal{L}_{i+1}$ is obtained from $\mathcal{L}_i$ by a symmetric cut rule $\frac{(C\lor x, w)\quad (C\lor\neg x, w)}{(C, w)}$, then add to $S_i$ the terms
		\begin{equation*}
			-w \mathsf{M}(C)+w \mathsf{M}(C\lor x) + w \mathsf{M}(C\lor\neg x) = w \mathsf{M}(C)(x + \overline{x} - 1).
		\end{equation*}
		That is $S_{i+1} = S_i + w \mathsf{M}(C)(x + \overline{x} - 1)$.
	\end{itemize}
	
	One can easily verify, that $S_{i+1}$ constructed as above, fulfils
	\begin{equation*}
		S_{i+1} = -\sum_{(C,w)\in \mathcal{L}_i}w \mathsf{M}(C)
	\end{equation*}
	and in particular $S_t = -\sum_{(C,w)\in \mathcal{L}_t} w \mathsf{M}(C)$. 
	Since $\mathcal{L}_t$ contains only an empty clause $(\bot, c)$ for some $c > 0$ and clauses with positive weights, $S_t = -c-q$ for some polynomial $q$ having only positive coefficients.
	Therefore $\frac{S_t}{c}+\frac{q}{c} + 1 = 0$ is a \qsa refutation, and its size is no  more than the size of the \qwres\ refutation.

The only times a monomial added to $S_i$ contributes to $\qsize$ is in the first case, if $\mathcal{L}_{i+1}$ is obtained by the universal reduction rule. 
	As a result, the $\qsize$ of the resulting \qsa proof is exactly the number of reduction steps, i.e.\ the $\qsize$ of the \qwres proof.
	
	\paragraph*{\qwres p-simulates \qsa}
	Let
	\begin{equation}
		\label{eq:wResSA}
		\sum_{p\in \enc(\phi)} q_pp + \sum_{u\in \mathrm{vars}_\forall(\mathcal Q)}q_u(1-2u) + q +1 = 0
	\end{equation}
	be a \qsa refutation of a false QBF $\mathcal{Q}.\phi$. We split $\enc(\phi)$ into the encoding of the clauses of $\phi$~ $P,$ the Boolean axioms $B$ and the twin variable axioms $T$, such that \eqref{eq:wResSA} is equivalent to
	\begin{equation}
		\label{eq:wResSAext}
		\sum_{p\in P} q_pp + \sum_{b\in B} q_bb + \sum_{t\in T} q_tt + \sum_{u\in \mathrm{vars}_\forall(\mathcal Q)}q_u(1-2u) + q +1 = 0.
	\end{equation}
	Without loss of generality, $q_p$ can be non-positive scalars for all $p \in P$ (cf. Lemma 5.4 in \cite{BBL.24}; the universal terms are untouched).
	Therefore, we can rewrite \eqref{eq:wResSAext} as 
	\begin{equation}
		\label{eq:wResSASplit}
		-1-q=\sum_{p\in P} -a_pp + \sum_{b\in B} q_bb + \sum_{t\in T} q_tt + \sum_{u\in \mathrm{vars}_\forall(\mathcal Q)}q_u(1-2u)
	\end{equation}
	for scalars $a_p\ge 0$.
	
	We construct a \qwres proof $(\mathcal{L}_1, ..., \mathcal{L}_t)$ of $\mathcal{Q}.\phi$.
	This is achieved by going through the right-hand side of \eqref{eq:wResSASplit} term by term and constructing sets of weighted clauses corresponding to these partial sums.
	Additionally, each set of weighted clauses can be constructed by the previous set and a \qwres rule.
	
	We start with the empty sum and set $\mathcal{L}_1 = \emptyset$. There is nothing to show.
        
	Every clause $(C_i, a_i)$ corresponds to the term $-a_i \mathsf{M}(C_i)$.
	Therefore, we can construct a sequence of configurations $\mathcal{L}_2, ..., \mathcal{L}_{k_1}$, with each 
	\[
	\mathcal{L}_i = \{(C_1,a_1), ..., (C_{i-1}, a_{i-1})\}
	\]  
	corresponding to the $i-1$-th partial sum of $\sum_{p\in P} -a_pp$. 
	In this sequence, each configuration is obtained with the previous one and the axiom rule.
	
	Continuing, for every binomial $am(x^2-x)$ from the sum $\sum_{b\in B} q_bb$ construct
	\begin{equation*}
		\mathcal{L}_{k_2} = \mathcal{L}_{k_2-1} \cup \{(C\lor\neg x\lor \neg x, -a), (C\lor\neg x, a)\}
	\end{equation*}
	where $C$ is the unique clause such that $ \mathsf{M}(C)=m$. This is an application of the idempotency rule.
	Analogously, every term $am(x_i + \overline{x}_i-1)$ from the sum $\sum_{t\in T}q_tt$ constructs
	\begin{equation*}
		\mathcal{L}_{k_3} = \mathcal{L}_{k_3-1} \cup \{(C\lor\neg x, -a),(C\lor x, -a), (C, a)\},
	\end{equation*}
	with $C$ again being the unique clause such that $ \mathsf{M}(C)=m$. This is an application of the symmetric cut rule.
	This construction is extended to the terms $am(1-2u)$ from $\sum_{u \in \vars_{\forall}(\mathcal Q)}q_u(1-2u)$.
	Here
	\begin{equation*}
	 	\mathcal{L}_{k_4} = \mathcal{L}_{k_4-1} \cup \{(C\lor\neg u, 2a), (C, -a)\}.
	 \end{equation*}
	 This corresponds to the universal reduction rule. Due to these terms being obtained from a \qsa refutation, the syntactic restrictions for the reduction rule, i.e.\ $C$ only containing variables left of $u$ and $u$ being universal, are met.
	 
	 Equation~\eqref{eq:wResSASplit} shows that this resulting configuration is equivalent to $\mathcal{L}_t = \{(\bot, 1)\} \cup \{(C_i, a_i) : -q=\sum_{i}-a_i \mathsf{M}(C_i)\}$ for clauses $C_i$ and non-negative scalars $a_i$. 
	 This representation of $q$ is possible due to the definition of \qsa.
	 As such, $(\mathcal{L}_1, ..., \mathcal{L}_t)$ is a \qwres refutation of $\mathcal{Q}.\phi$ with rational weights, and a suitable multiple gives an integral version matching \Cref{def:QwRes}.
	 For every monomial in $q_u$, exactly one universal reduction is added. As such, the Q-sizes match.
\end{proof}

\section{Simulations}
\label{sec:p-simulations}

We now investigate how the algebraic QBF systems relate to each other and to other known QBF proof systems such as QU-Resolution and Q-PC and  show the p-simulations of \Cref{fig:p-sim}.

\begin{thm}
	\label{thm:psim-sos-sa}
	\qsos p-simulates \qsa w.r.t.\ the $\size$ and $\qsize$ measures.
\end{thm}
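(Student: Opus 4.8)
The plan is to mimic the textbook argument that propositional $\sos$ p-simulates $\SA$, and to observe that the universal block of a $\qsa$-refutation passes through untouched, so that nothing about quantification gets in the way and $\qsize$ is preserved exactly. Concretely, I would start from a $\qsa$-refutation of $\mathcal Q.\phi$,
\[
\sum_{p\in \enc(\phi)} q_p p+ \sum_{u\in \mathrm{vars}_\forall(\mathcal Q)}q_u(1-2u)+ q + 1 = 0,
\]
with $q=\sum_m c_m m$ a polynomial in $\mathbb Q[V\cup\overline V]$ having only non-negative coefficients, and I would leave every $q_u$ (hence the whole universal sum) in place, rewriting only $q$ as a sum of squares while dumping the unavoidable correction terms into the $q_p$.

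The single ingredient is the idempotency of monomials over Boolean variables. For a monomial $m$ in $V\cup\overline V$ let $m'$ be its multilinearisation. Then $m-m'$ and $m'^2-m'$ both lie in the ideal generated by the axioms $v^2-v$ and $\overline v^2-\overline v$, and $\overline v^2-\overline v=(v+\overline v-1)(\overline v - v)+(v^2-v)$ is itself in the ideal generated by $\enc(\phi)$ (assuming, as one may, that every variable occurs in $\phi$). Writing each non-negative coefficient as a sum of at most four rational squares, $c_m=\sum_j r_{m,j}^2$, one gets
\[
c_m m \;=\; \sum_j \bigl(r_{m,j}\, m'\bigr)^2 \;+\; c_m(m-m') \;-\; c_m(m'^2-m'),
\]
where each of the last two summands is an explicit $\mathbb Q[V\cup\overline V]$-combination of axioms of $\enc(\phi)$. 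Summing over the monomials of $q$ and absorbing the axiom combinations into the $q_p$, I obtain a representation of the same polynomial in which the former $q$ is replaced by $\sum_{m,j}(r_{m,j}m')^2$, a genuine sum of squares; the $q_u$ are unchanged, so their variable-dependence condition still holds. This is a $\qsos$-refutation of $\mathcal Q.\phi$.

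For the size bookkeeping: the universal sum $\sum_u q_u(1-2u)$ is literally the one in the input refutation, so $\qsize$ (and $\qdeg$) are identical, making this a p-simulation with respect to $\qsize$ for free. For $\size$, each monomial $c_m m$ of $q$ turns into $O(1)$ squares whose roots are single monomials, and adds $O(\deg m)$ monomials to the $q_p$ via the telescoping expansions of $m-m'$ and $m'^2-m'$; since $\deg m\le\deg(\pi)\le\size(\pi)$, the total blow-up is polynomial. Hence $\qsos$ p-simulates $\qsa$ with respect to both $\size$ and $\qsize$.

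I do not expect a real obstacle here: the only points needing a line of care are the two routine checks that (i) $\overline v^2-\overline v$ is expressible through the axioms in $\enc(\phi)$ so the correction terms can be legally absorbed into the $q_p$, and (ii) editing $q$ never touches the $q_u$ and therefore cannot violate the left-of-$u$ dependence restriction, which is immediate since those polynomials are copied verbatim.
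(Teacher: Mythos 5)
Your proof is correct and follows essentially the same route as the paper: multilinearize the monomials of $q$ via the Boolean axioms, use Lagrange's four-square theorem to split each non-negative coefficient into a sum of (at most four) squares, and observe that the universal polynomials $q_u$ pass through untouched so that $\qsize$ is preserved exactly. One small slip in the bookkeeping: $\deg(\pi)\le\size(\pi)$ is not generally true (a single monomial can have high degree), but the blow-up remains polynomial since one may WLOG first reduce all multipliers to multilinear form modulo the Boolean axioms, bounding every degree by $2|V|$.
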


\begin{proof}
It is well known that (degree $2d$) $\sos$ p-simulates (degree $d$) $\SA$ (see for instance \cite[Lemma 3.63]{FKP.19}). The same argument works without change in the QBF setting: every variable~$v$ in a positive monomial can be substituted by $v^2$ summing a suitable multiple of $v^2-v$. In this way, every positive monomial $\frac{a}{b}m$ with $a,b\in \mathbb N$ can be converted into $\frac{s_1^2+s_2^2+s_3^2+s_4^2}{b^2}m^2$ where $s_1,s_2,s_3,s_4$ are four integers that sum up to $ab$ (they exists by Lagrange's Four Squares Theorem). This converts $\frac{a}{b}m$ into a sum of at most four squares with rational coefficients. 
\end{proof}

In the argument above, notice that the only increment in degree is in the propositional part, hence, different from  the propositional case, \qsos\ with $\qdeg$ $d$ p-simulates \qsa\ with $\qdeg$ $d$. In the $\qsize$ measure, the converse also holds.

\begin{thm}
	\label{thm:pequiv-semialg}
	$\qsa$ p-simulates $\qsos$ w.r.t.\ the $\qsize$ measure.
\end{thm}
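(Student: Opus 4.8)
The plan is to show that from a $\qsos$ refutation of a false QBF $\mathcal{Q}.\phi$ with $\qsize$ $s$ one can build a $\qsa$ refutation with $\qsize$ polynomially bounded in $s$. By \Cref{thm:game_equivalence}, the $\qsize$ of a shortest $\qsos$ refutation equals the size of the shortest universal winning strategy $(s_u)_{u\in U}$ in \textsc{variant 1} of the score game (where the universal player wins whenever $\phi$ is falsified or the total score is strictly positive), and each $s_u$ is computed by a polynomial $q_u$ in the variables left of $u$. So it suffices to turn such a strategy into a $\qsa$ refutation whose only $\forall$-reduction-type terms are the $q_u(1-2u)$, contributing exactly the monomials of the $q_u$.

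The key observation is that, on every Boolean assignment $\alpha$ satisfying $\phi$, the winning condition of \textsc{variant 1} gives $\sum_{u\in U} q_u \restr{(2u-1)}{\alpha} > 0$, i.e. $\restr{\big(\sum_{u\in U} q_u(1-2u)\big)}{\alpha} < 0$. This is a \emph{strict} inequality over a finite set, so there is a rational constant $c>0$ with $\restr{\big(\sum_{u\in U} \tfrac{q_u}{c}(1-2u)\big)}{\alpha} \le -1$ for all $\alpha\models\phi$. Then the polynomial $g := -1 - \sum_{u\in U} \tfrac{q_u}{c}(1-2u)$ evaluates, on every Boolean $\alpha$ satisfying $\enc(\phi)$, to a nonnegative rational. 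Exactly as in Case 2 of the proof of \Cref{thm:game_equivalence}, write
\[
 q \;=\; \sum_{\alpha\models\phi}\restr{g}{\alpha}\,\chi_{\alpha}(\vec v)\ ,
\]
where $\chi_\alpha$ is the indicator monomial of $\alpha$. This $q$ has nonnegative coefficients (modulo $v^2-v$, which can be absorbed into the $q_p$ part, or one argues directly that the multilinear representation of an indicator with a nonnegative weight has nonnegative coefficients — actually one must be a little careful here, since $\chi_\alpha$ as a genuine multilinear polynomial has coefficients $\pm1$; the cleaner route is to keep $\chi_\alpha$ as the product $\prod_{v:\alpha(v)=1}v\prod_{v:\alpha(v)=0}\overline v$ using twin variables, which genuinely has all coefficients $+1$, and fold the $v+\overline v-1$ axioms into the $q_p$). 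With that choice, $\sum_{u\in U}\tfrac{q_u}{c}(1-2u) + q + 1$ vanishes on every assignment satisfying $\enc(\phi)$, hence lies in the ideal generated by $\enc(\phi)$ by \Cref{fact:evaluation}, giving the $\sum_p q_p p$ part and completing a $\qsa$ refutation. The $\qsize$ of this refutation is $\sum_u \#\{\text{monomials in }\tfrac{q_u}{c}\} = \sum_u \#\{\text{monomials in }q_u\} = s$.

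The main obstacle is the coefficient-sign bookkeeping for $q$: one needs $q$ to be a polynomial with \emph{non-negative} coefficients in the sense required by the $\qsa$ definition, and the naive multilinear expansion of an indicator function does not have this property. The fix is to use the twin-variable indicator $\prod \overline v \prod v$ (which does have all-$+1$ coefficients) and to push the resulting $v+\overline v-1$ and $v^2-v$ discrepancies into the $\sum_{p\in\enc(\phi)} q_p p$ term — legitimate because $v+\overline v-1,\ v^2-v\in\enc(\phi)$ — so that the only genuinely new terms are the $q_u(1-2u)$. Once this is handled the rest is the same density/ideal-membership argument already used for $\qns$ and $\qsos$ in \Cref{thm:game_equivalence}, and the $\qsize$ bound is an exact equality rather than merely polynomial.
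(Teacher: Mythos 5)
Your proposal is correct, and the underlying key step — replacing the positivity certificate by a nonnegative combination of twin-variable indicator monomials $\chi_\alpha=\prod_{v:\alpha(v)=1}v\prod_{v:\alpha(v)=0}\overline v$, which literally has $+1$ coefficients — is exactly what the paper uses. The difference is one of route and economy. The paper works directly on the given $\qsos$ refutation: since $q$ is a sum of squares, $\restr{q}{\alpha}\ge 0$ pointwise, so one may simply substitute $q'=\sum_\alpha\restr{q}{\alpha}\chi_\alpha$, absorb the discrepancy $q-q'$ (which vanishes on Boolean assignments respecting $v+\overline v=1$) into the $q_p p$ part via \Cref{fact:evaluation}, and observe that the $q_u$ are untouched, so $\qsize$ is preserved verbatim; no scaling is needed. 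You instead first pass through \Cref{thm:game_equivalence} to extract a score-game strategy, then normalize by a constant $c$ so the score is $\le-1$ on every satisfying assignment, and then rebuild the whole refutation from scratch via the indicator expansion, essentially redoing Case~2 of the proof of \Cref{thm:game_equivalence} with ``nonnegative coefficients'' in place of ``sum of squares.'' That detour is sound (and your bookkeeping about using twin-variable indicators, rather than the $\pm1$-coefficient multilinear expansion $\prod v\prod(1-v)$, to keep coefficients nonnegative is exactly the right fix — indeed it is what the paper's definition of $\chi_\alpha$ already builds in), but it is longer than necessary: the direct replacement avoids both the appeal to the score-game characterization and the normalization constant $c$. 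Both versions preserve $\qsize$ exactly and incur the exponential blow-up only in the propositional part, so they establish the same statement.
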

\begin{proof}
Every polynomial $q$ that is non-negative on the Boolean assignments can be written as a (possibly exponentially large) sum of the form $\sum_{\alpha}\restr{q}{\alpha}\chi_{\alpha}(\vec v)$, where $\chi_\alpha(\vec v)$ is a monomial that evaluates to $1$ when the variables are set according to the Boolean assignment $\alpha$ and on any other Boolean assignment it is $0$. In other words, every $\qsos$ refutation can be written as a possibly exponentially larger $\qsa$ refutation. The exponential blow-up appears in the propositional part which is not accounted for in $\qsize$.
\end{proof}

\begin{thm}
	\label{thm:psim-sa-res}
	\qsa p-simulates  \qures  w.r.t.\ both $\size$ and $\qsize$.
\end{thm}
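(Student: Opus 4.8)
The plan is to route through Q-weighted Resolution. By \Cref{prop:qsa-equiv}, $\qwres$ is polynomially equivalent to $\qsa$ in both the $\size$ and the $\qsize$ measure, so it suffices to show that $\qwres$ p-simulates $\qures$. Given a $\qures$ refutation $\pi=(C_1,\dots,C_s)$ of $\mathcal Q.\phi$ with $C_s=\bot$ (a dag-like derivation), I would process $\pi$ step by step and build a sequence of $\qwres$ configurations. Note that in $\qwres$ the $\size$ counts only lines and $\qsize$ only universal reduction steps, so the weights (and the multipliers hidden in them) are free: no choice of weights incurs a size penalty.

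For the axiom rule and the resolution rule of $\qures$ the simulation is exactly the propositional one of $\res$ by $\wres$ from \cite{BBL.24}: an axiom $C\in\phi$ is introduced by the \textbf{Axiom} rule; a resolution step $\frac{A\lor x\ \ B\lor\neg x}{A\lor B}$ is simulated by first weakening $A\lor x$ to $A\lor B\lor x$ and $B\lor\neg x$ to $A\lor B\lor\neg x$ (repeatedly using the reverse of \textbf{Symmetric Cut} to add literals, and \textbf{Idempotency} to collapse repeated ones), and then applying \textbf{Symmetric Cut} on the pivot $x$. The bookkeeping of weights so that each cut sees matching weights on its two premises and each clause of $\pi$ carries enough weight for all of its later uses in the dag is carried out precisely as in \cite{BBL.24}; these steps add a constant number of lines per step of $\pi$ and involve no universal reduction.

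The only genuinely new case is $\forall$-reduction $\frac{C\lor u}{C}$ of $\qures$, whose side condition is that $u$ is universal and every variable of $C$ lies left of $u$ in $\mathcal Q$. Having derived the weighted clause $(C\lor u,\,2w)$ — which we may always arrange, the weight being free — we apply the \textbf{Universal Reduction} rule $\frac{(C\lor u,\,2w)}{(C,\,w)}$ of $\qwres$, whose side condition is literally the one above. This uses exactly one universal reduction step per $\forall$-reduction of $\pi$, and the final $\bot$-line inherits a positive weight.

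Altogether the resulting $\qwres$ refutation has $O(s)$ lines and the same number of universal reduction steps as $\pi$ has $\forall$-reductions, so it is polynomial in $\size$ and preserves $\qsize$; \Cref{prop:qsa-equiv} then converts it to a $\qsa$ refutation with the same guarantees in both measures. I expect no real obstacle beyond reusing the weight bookkeeping of the propositional simulation: the $\forall$-reduction rule matches the \textbf{Universal Reduction} rule of $\qwres$ so closely — down to the dependency condition on the multiplier — that it costs only a single step and nothing in $\qsize$.
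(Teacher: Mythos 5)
Your proposal follows essentially the same route as the paper's proof: reduce to showing that \qwres p-simulates \qures via \Cref{prop:qsa-equiv}, simulate each resolution step by weakening (reverse symmetric cut plus idempotency) followed by a symmetric cut on the pivot, map each $\forall$-reduction to a single Universal Reduction step, and use the weight bookkeeping of \cite{BBL.24}. One small correction: the weakenings from $A\lor x$ to $A\lor B\lor x$ cost up to $n$ lines per resolution step (one per added literal), so the resulting \qwres refutation has $O(ns)$ lines rather than $O(s)$; this is still polynomial and does not change the $\qsize$ accounting, so the conclusion is unaffected.
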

\begin{proof}
Due to \Cref{prop:qsa-equiv}, to show that $\qsa$ p-simulates $\qures$ it is enough to show that \qwres p-simulates $\qures$. This p-simulation is an adaptation of the proof that $\mathsf{w}$-$\res$ p-simulates $\res$ \cite{BBL.24}.
	This proof is done in two steps. First, we show that every resolution step in a \qures refutation can be transformed into a sequence of \emph{weakening} and \emph{symmetric cut} rules. 
	Second, we then add appropriate weights to the clauses to transform it into a \qwres refutation. 
	
	Let
	\begin{equation*}
		\frac{(C\lor D \lor x)\quad (C\lor E \lor \neg x)}{(C \lor D \lor E)}
	\end{equation*}
	be an arbitrary application of the resolution rule in a \qures refutation, where $C,D,E$ are clauses with pairwise disjoint variables and none containing $x$ or $\neg x$. Let $D$ consist of literals $d_1, d_2,... ,d_k$ and $E$ of literals $e_1, e_2, ..., e_\ell$.
	
	Using $\ell$ consecutive \emph{weakening} rules, we expand $(C\lor D \lor x)$ to $(C \lor D\lor E \lor x)$ and $(C\lor E \lor \neg x)$ to $(C\lor D \lor E \lor \neg x)$. 
	These two clauses can then be used in a single \emph{symmetric cut} to gain $(C\lor D \lor E)$.
	
	For the second step, we need to add weights to the clauses. Since \qures proofs in general are not treelike, we need to chose the weights in a way that keeps all previous clauses in the configuration.
	
	Let $\pi= C_1, ..., C_t$ be an arbitrary \qures refutation of a false QBF, using symmetric cut instead of resolution. 
	We inductively construct a sequence of configurations $\mathcal{L}_1, ..., \mathcal{L}_t$, such that these are a valid \qwres proof and for every $i \leq t$ there exist strictly positive weights $w_1^i, ..., w_i^i$ with $\{(C_j, w_j^i),  j \in \{1,...,i\}\} \subseteq \mathcal{L}_i$. Additionally, all the weights in every $\mathcal{L}_i$ are non-negative.        
	
	We differentiate between the rules used to obtain $C_i$.
	\begin{itemize}
		\item $C_i$ is an axiom. Then $\mathcal{L}_{i} = \mathcal{L}_{i-1} \cup \{(C_i,1)\}$ and $\mathcal{L}_i$ is obtained from $\mathcal{L}_{i-1}$ using the axiom rule.
		\item $C_i$ is obtained from $C_j$ using the universal reduction rule, i.e. $C_j = C_i \lor u$ for some universal literal $u$. Let $w_j$ be the weight such that $(C_j, w_j) \in \mathcal{L}_{i-1}$. Then \[\mathcal{L}_{i} = \mathcal{L}_{i-1} \cup \{(C_j, -\frac{w_j}{2})\} \cup \{(C_i,\frac{w_j}{4})\}\] and $\mathcal{L}_i$ is obtained from $\mathcal{L}_{i-1}$ using the universal reduction rule.
		\item $C_i$ is obtained from $C_j$ using the weakening rule, i.e. $C_i = C_j \lor v$ for some literal $v$. Let $w_j$ be the weight such that $(C_j, w_j) \in \mathcal{L}_{i-1}$. Then \[\mathcal{L}_{i} = \mathcal{L}_{i-1} \cup \{(C_j, -\frac{w_j}{2})\} \cup \{(C_i,\frac{w_j}{2}), (C_j \lor \neg v,\frac{w_j}{2})\}\] and $\mathcal{L}_i$ is obtained from $\mathcal{L}_{i-1}$ using the symmetric rule.
		\item $C_i$ is obtained from $C_j$ and $C_k$ using the symmetric cut rule, i.e. $C_j = C_i \lor v$ and $C_k = C_i \lor \neg v$ for some variable $v$. Let $w_j$ and $w_k$ be the weights such that $(C_j, w_j), (C_k,w_k) \in \mathcal{L}_{i-1}$ and define $w = \min\{w_j, w_k\}$. Then \[\mathcal{L}_{i} = \mathcal{L}_{i-1} \cup \{(C_j, -\frac{w}{2}), (C_k, -\frac{w}{2})\} \cup \{(C_i,\frac{w}{2})\}\] and $\mathcal{L}_i$ is obtained from $\mathcal{L}_{i-1}$ using the symmetric cut rule.
	\end{itemize}
	The non-negativity of all weights follow directly. 
	Since $\pi$ is a \qures proof, $C_t = \bot$. As such, there exists some weight $w_t > 0$ with $(\bot, w_t) \in \mathcal{L}_t$. 
        Thus $\mathcal{L}_1, ..., \mathcal{L}_t$ is a valid \qwres refutation with rational weights, and a suitable multiple gives an integral version matching \Cref{def:QwRes}.
	
	In total, for every resolution rule of $\pi$ we add at most $n$ lines to the $\qwres$ proof and for every reduction rule we add exactly one line (where $n$ is the number of variables). The resulting proof has size at most $n|\pi|$, and its Q-size is $\qsize(\pi)$.
\end{proof}

\begin{thm}
\label{thm:psim-pc-ns}
	$\qpc$ p-simulates $\qns$ w.r.t.\ the $\size$ and $\qsize$ measures.
\end{thm}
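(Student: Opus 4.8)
The plan is to turn a $\qns$ refutation
\[ \sum_{p\in\enc(\phi)} q_p p + \sum_{u\in\mathrm{vars}_\forall(\mathcal Q)} q_u(1-2u) + 1 = 0 \]
of $\mathcal Q.\phi$ into a $\qpc$ refutation, treating the purely algebraic part exactly as in the propositional p-simulation of $\ns$ by $\pc$, and handling the $\forall$-reduction polynomials $q_u(1-2u)$ by a sequence of $\qpc$ universal-reduction steps carried out from the innermost universal variable outwards.

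First I would derive, inside $\qpc$, the polynomial $R := \sum_{p\in\enc(\phi)} q_p p$: for each $p\in\enc(\phi)$ and each monomial $m$ (with coefficient $c$) of $q_p$, use the multiplication rule repeatedly to obtain $c\,m\,p$ from the axiom $p$, and then add all these lines together. This is the standard simulation of $\ns$ by $\pc$; it uses $\mathrm{poly}(\size(\pi),n)$ monomials and no universal-reduction step, so it contributes to $\size$ but not to $\qsize$. The point to exploit is that, as an element of $\mathbb Q[V\cup\overline V]$, this $R$ equals $-1-\sum_{u}q_u(1-2u)$, so once like terms are collected its variables are confined to the universal variables together with the variables quantified before them.

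Write the universal variables as $u_1\prec u_2\prec\cdots\prec u_k$ in prefix order and set $R^{(i)}:=-1-\sum_{j\le i}q_{u_j}(1-2u_j)$, so that $R^{(k)}=R$ and $R^{(0)}=-1$. I would then argue by downward induction that each $R^{(i)}$ is derivable in $\qpc$: given $R^{(i)}$, every variable occurring in it other than $u_i$ is quantified left of $u_i$ — indeed $R^{(i)}$ mentions only $u_1,\dots,u_i$ and the variables of $q_{u_1},\dots,q_{u_i}$, and the latter lie left of the respective $u_j\preceq u_i$ — and $R^{(i)}$ does not mention $\overline{u_i}$, so the $\qpc$ universal-reduction rule applies to $R^{(i)}$ on $u_i$. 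Since $q_{u_i}$ does not contain $u_i$, plugging in gives $\restr{R^{(i)}}{u_i=0}=R^{(i-1)}-q_{u_i}$ and $\restr{R^{(i)}}{u_i=1}=R^{(i-1)}+q_{u_i}$, hence $R^{(i-1)}=\tfrac12\bigl(\restr{R^{(i)}}{u_i=0}+\restr{R^{(i)}}{u_i=1}\bigr)$ is obtained from $R^{(i)}$ by two universal-reduction steps, one addition and one scalar multiplication. After processing $u_k,u_{k-1},\dots,u_1$ we reach $R^{(0)}=-1$, and one more scalar multiplication derives $1$, completing the $\qpc$ refutation.

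Finally I would check the size bounds. The algebraic phase has size $\mathrm{poly}(\size(\pi),n)$ and $\qsize$ zero; the reduction phase uses $2k\le 2n$ universal-reduction steps, each applied to some $R^{(i)}$ with at most $1+2\sum_{j\le i}|q_{u_j}|\le 1+2\qsize(\pi)$ monomials, so the resulting $\qpc$ refutation has $\qsize=O(n\cdot\qsize(\pi))$ and total size $\mathrm{poly}(\size(\pi),n)$; both are polynomial, giving the p-simulation in the $\size$ and $\qsize$ measures. The step requiring care — and the only genuine difference from the propositional situation — is verifying that the syntactic dependency condition of $\qpc$'s universal-reduction rule holds throughout; this is exactly why the $q_u(1-2u)$ terms must be peeled off in the order $u_k,\dots,u_1$, so that the residual polynomial $R^{(i)}$ stays supported on the variables left of $u_i$ together with $u_i$ itself.
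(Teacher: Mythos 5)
Your proof follows the paper's argument essentially verbatim: derive $\sum_p q_p p$ by the standard $\pc$-simulation of $\ns$, observe that it equals $-1-\sum_u q_u(1-2u)$ syntactically, and then peel off the universal terms from the innermost variable outwards by pairing the two $\forall$-reductions $u_i=0$ and $u_i=1$ and combining. Your version is in fact slightly more careful than the paper's, since you make the $\tfrac12$ scaling explicit and spell out why each residual $R^{(i)}$ satisfies the dependency side condition for reduction on $u_i$.
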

\begin{proof}
	Let $\mathcal Q.\phi$ be a false QBF with $n$ variables and $m$ clauses and $\pi$ be a \qns refutation of $\mathcal Q.\phi$ of the form
	\begin{equation}
\label{eq:NS-psim}
	\sum_{p\in \enc(\phi)} q_p p+ \sum_{u\in \mathrm{vars}_\forall(\mathcal Q)}q_u(1-2u) + 1=0\ ,
\end{equation}
Lines in a \qpc proof can be multiplied by arbitrary polynomials. Hence, we can obtain in $\qpc$ the sum $\sum_{p\in \enc(\phi)}q_pp$ from the polynomials in $\enc(\phi)$ in a polynomial number of steps. Let $\vars_\forall(\mathcal Q)$ be $u_1,u_2,\dots,u_n$.
	Due to the symbolic equality in eq.~\eqref{eq:NS-psim}, this sum equals $-1 - \sum_{i=1}^{n}q_{u_i}(1-2u_i)$. We then use the  $\forall$-reduction  on $u_n$, then $u_{n-1}$ etc. In the first step, restricting by $u_n=1$ and $u_n=0$, we get respectively
	\[
	-1 - \sum_{i=1}^{n-1}q_{u_i}(1-2u_i) - q_{u_{n}} \text{\ and\ } -1 - \sum_{i=1}^{n-1}q_{u_i}(1-2u_i) + q_{u_{n}}\ .
	\]
	Adding them, we get $-1 - \sum_{i=1}^{n-1}q_{u_i}(1-2u_i)$. We repeat this process until we get rid of all universal variables and only the $-1$ remains. It is clear from the argument that this simulation only increases the $\size$ and $\qsize$ linearly.
\end{proof}

\begin{thm}
	\label{thm:psim-sos-pc}
	\qsos p-simulates  \qpc w.r.t.\ the $\size$ and $\qsize$ measures.
\end{thm}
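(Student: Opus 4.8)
The plan is to transfer the propositional argument that $\sos$ p-simulates $\pc$ (\cite{Berkholz18}) to the QBF setting, the genuinely new ingredient being the simulation of the $\forall$-reduction rule. Fix a $\qpc$ refutation $\pi=(p_1,\dots,p_t)$ of $\mathcal Q.\phi$ with $p_t=1$. I would go through $\pi$ line by line and maintain, for each $i$, an explicit realisation
\[
  p_i \;=\; \sum_{p\in\enc(\phi)} q^{(i)}_p\,p \;+\; \sum_{u\in\mathrm{vars}_\forall(\mathcal Q)} q^{(i)}_u\,(1-2u) \;+\; r_i
\]
as a formal polynomial identity, with each $q^{(i)}_u$ depending only on variables left of $u$ in $\mathcal Q$ (as a $\qsos$ refutation demands) and $r_i$ a ``remainder''. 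For the three propositional rules --- axiom ($q^{(i)}_p=1$ for the relevant $p$), addition ($p_i=p_j+p_k$: add the two realisations), and multiplication by a scalar or a variable ($p_i=v\,p_j$: multiply through by $v$) --- this is exactly Berkholz's construction, whose size bookkeeping keeps the total monomial count polynomial in $\size(\pi)$. None of these introduces $(1-2u)$-terms by itself, so $r_i=0$ as long as only propositional rules have been used.

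The new case is a $\forall$-reduction $p_i=\restr{p_j}{u=b}$. Its side condition --- every variable of $p_j$ other than $u$ is left of $u$ --- is precisely what makes the $(1-2u)$-term of a $\qsos$ refutation available: reducing $p_j$ modulo the Boolean and twin axioms in $\enc(\phi)$ one may write it as $a+u\,h$ with $a,h$ over the variables left of $u$, and then
\[
  \restr{p_j}{u=b} \;=\; p_j \;+\; \tfrac12\,h\,(1-2u) \;+\; \tfrac{2b-1}{2}\,h .
\]
Thus a $\forall$-reduction adds $\tfrac12 h$ to $q_u$ (whose monomials are, up to a constant factor, those of $p_j$, so the $\qsize$ of the construction stays polynomial in $\qsize(\pi)$) and $\tfrac{2b-1}{2}h$ to the remainder $r_i$. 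One point requiring care is the interaction of the multiplication rule with the prefix condition: multiplying a realisation that already carries a $(1-2u)$-term by a variable that is not left of $u$ would violate the syntactic restriction on $q_u$, and this has to be repaired (e.g.\ by pushing such terms into the remainder, or by first bringing $\pi$ into a level-respecting form).

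After processing all of $\pi$ one obtains $1=\sum q_p p+\sum q_u(1-2u)+r_t$, which is not yet a $\qsos$ refutation because the leftover $-1+r_t$ has the wrong shape for a sum-of-squares term. I would close this gap differently for the two measures. For $\qsize$, the $q_u$ read off the construction define a universal strategy in the score game, and --- using soundness of $\qpc$ together with the fact that all $q_p p$ vanish on assignments satisfying $\phi$ --- one argues that this strategy is winning, so \Cref{thm:game_equivalence} delivers a $\qsos$ refutation of $\qsize$ polynomial in $\qsize(\pi)$. For $\size$, where the exponentially large sum-of-squares term produced by that route is unaffordable, one instead folds $r_t$ into a polynomially-sized ideal-plus-SOS combination, rescaling by the (strictly positive) score value so that the constant term comes out as $+1$.

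I expect this final absorption step to be the main obstacle rather than the line-by-line bookkeeping: the $q_p$ and $q_u$ multipliers come out almost mechanically, with the size bounds inherited from Berkholz's propositional argument and from the $\forall$-reduction identity above, but turning the accumulated remainder into a genuine, polynomially-sized sum of squares --- equivalently, certifying that the extracted score-game strategy is winning and convertible without blow-up --- is exactly where the QBF-specific soundness analysis and the twin-variable encoding enter, and is the part that does not merely replay the propositional proof.
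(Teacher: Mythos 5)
Your plan diverges from the paper's at the most important point, and the two difficulties you flag are precisely the ones the paper's choice of invariant is designed to dissolve. You maintain a realisation of each $\qpc$ line $p_i$ itself; the paper (following \cite{Berkholz18}) instead maintains a realisation of $-p_i^2$ as an ideal element of $\enc(\phi)$, plus $(1-2u)$-terms with valid multipliers, plus a sum of squares. That is not a bookkeeping variant — it is the heart of the argument, and your description of the propositional rules as ``exactly Berkholz's construction'' is not accurate: Berkholz does not carry the line forward unsquared. With the negated-squares invariant, the product rule becomes the identity $-(xp)^2=-p^2+(p-xp)^2+2p^2(x-x^2)$, in which the old representation of $-p^2$ is reused \emph{with coefficient $1$}: the $(1-2u)$-multipliers from $-p^2$ are carried over untouched, a fresh square $(p-xp)^2$ is added, and a fresh ideal term $2p^2(x-x^2)$ is added. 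In particular the prefix condition on the $q_u$'s is preserved automatically. In your version, ``multiply through by $v$'' multiplies the accumulated $q_u(1-2u)$ terms by $v$, and for $v$ right of $u$ this breaks the prefix condition — you notice this, but neither ``push into the remainder'' nor ``normalise $\pi$ to a level-respecting form'' is a proof, and the first destroys the structure you need later.

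The second gap you name, absorbing the leftover $r_t$ into a sum of squares, also does not arise in the paper's approach: the inductive invariant already keeps the non-ideal, non-$(1-2u)$ part a sum of squares at every step, so when $p_s=1$ and $-p_s^2=-1$ the expression is a $\qsos$ refutation outright, with both $\size$ and $\qsize$ polynomial. Your route through the score game and \Cref{thm:game_equivalence} only bounds $\qsize$, and your proposed repair for $\size$ (``fold $r_t$ into a polynomially-sized ideal-plus-SOS combination, rescaling by the score'') is not substantiated and is in fact where the work would be. For the $\forall$-reduction step the paper uses, for $p_j=p+qu$, the identity $-p^2=-2(p+qu)^2+(p+q)^2-(q^2+2pq)(1-2u)+2q^2(u^2-u)$ (and its companion for $b=1$), in which the new $(1-2u)$-multiplier $-(q^2+2pq)$ inherits the side condition of the $\qpc$ reduction rule and the new remainder $(p+q)^2$ is a square. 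Your linear identity for $\restr{p_j}{u=b}$ is correct as an identity but leaves a remainder $\tfrac{2b-1}{2}h$ of no usable shape. In short: same references, same high-level line-by-line plan, but you have not used the one idea (work with $-p_i^2$) that makes the bookkeeping close, and without it both of the obstacles you raised are genuine.
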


\begin{proof}(sketch)
The argument in \cite[Lemma 3.1]{Berkholz18} showing that degree $2d$ $\sos$ p-simulates degree $d$ $\pc$ adapts easily to the QBF setting. The idea is that given a $\qpc$ derivation $p_1,\dots, p_s$ we derive an algebraic expression for $-p_i^2$ which eventually for $i=s$ will give a \qsos refutation of $\mathcal Q.\phi$. This is done inductively on $i$ and is based on the following algebraic identities:
\begin{itemize}
\item sum rule (from $p$ and $q$ deduce $ap+bq$ with $a,b\in \mathbb Q$):
  \[-(ap+bq)^2=-2a^2p^2-2b^2q^2+(ap-bq)^2\ ;
  \]
\item product rule (from $p$ deduce $xp$):
  \[-(xp)^2= -p^2+(p-xp)^2+2p^2(x-x^2)\ ;
  \]
	\item $\forall$-reduction (from $p+qu$ deduce $p$):
          \[
          -p^2=-2(p+qu)^2+(p+q)^2-(q^2+2pq)(1-2u)+2q^2(u^2-u) \ ;
          \]
	\item $\forall$-reduction (from $p+qu$ deduce $p+q$):
          \[
          -(p+q)^2=-2(p+qu)^2+p^2-(q^2+2pq)(1-2u)+2q^2(u^2-u) \ .
          \] \qedhere
\end{itemize}
\end{proof}

\begin{cor}
	\qsos p-simulates \qns and is exponentially stronger, w.r.t.\ the $\size$ and $\qsize$ measures.
\end{cor}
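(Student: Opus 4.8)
The plan splits the statement into the p-simulation and the separation, and for the separation it treats the two size measures by different means.

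For the p-simulation, I would simply observe that every $\qns$ refutation is already a $\qsos$ refutation: in the identity \eqref{eq:QNS-QSA-QSOS} the choice $q=0$ is a (trivial, empty) sum of squares, so a $\qns$ refutation of $\mathcal Q.\phi$ with multipliers $q_p,q_u$ is verbatim a $\qsos$ refutation of $\mathcal Q.\phi$ with the same $q_p,q_u$. Hence $\qsos$ p-simulates $\qns$ with no increase in either $\size$ or $\qsize$. (Alternatively one composes \Cref{thm:psim-pc-ns} and \Cref{thm:psim-sos-pc}.)

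For the separation in the $\qsize$ measure, I would take the $\qmaj_n$ formulas. By \Cref{prop:majority} they have $\qsos$ refutations of $\qsize$ linear in $n$. On the other hand, $\qmaj_n$ requires $\qpc$ refutations of $\qsize$ $\exp(\Omega(n))$ \cite{BeyersdorffHKS24}. Since \Cref{thm:psim-pc-ns} gives a p-simulation of $\qns$ by $\qpc$ with only linear blow-up in $\qsize$, a hypothetical $\qns$ refutation of $\qmaj_n$ of $\qsize$ $s$ would yield a $\qpc$ refutation of $\qsize$ $\mathrm{poly}(s)$, forcing $s=\exp(\Omega(n))$. Thus $\qns$ needs $\qsize$ $\exp(\Omega(n))$ on $\qmaj_n$ while $\qsos$ needs only $O(n)$, which is exactly the non-simulation marked Q-Maj in \Cref{fig:p-sim-c}.

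For the separation in the $\size$ measure, I would instead lift a propositional separation. On QBFs without universal variables $\qns$, $\qsa$ and $\qsos$ coincide with the propositional systems $\ns$, $\SA$ and $\sos$. By \cite{ALN.16} there is a family of unsatisfiable CNFs $F_n$ with polynomial-$\size$ $\sos$ refutations but requiring $\SA$ refutations of exponential size; since every $\ns$ refutation is in particular an $\SA$ refutation, $F_n$ also requires $\ns$ refutations of exponential size. Viewing each $F_n$ as a QBF with all variables existentially quantified then yields polynomial-$\size$ $\qsos$ refutations and exponential-$\size$ $\qns$ refutations, giving the separation implicit in \Cref{fig:p-sim-b}. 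The one point needing care is precisely this choice of witness: the linear $\qsize$ bound for $\qmaj_n$ does not obviously give a small $\size$ bound (the construction behind \Cref{thm:game_equivalence} produces a possibly exponentially large sum-of-squares term $q$), so the $\size$ separation must be sourced from the propositional hardness of $\ns$ rather than from $\qmaj_n$; everything else is routine bookkeeping.
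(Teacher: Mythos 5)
Your proposal is correct and gets both separations from the same ingredients the paper uses: $\qmaj_n$ together with the $\qpc$-hardness result and Theorem~\ref{thm:psim-pc-ns} for $\qsize$, and propositional separations viewed as purely existential QBFs for $\size$. The one place where you genuinely deviate is the simulation itself: the paper routes through $\qpc$ (Theorem~\ref{thm:psim-sos-pc} composed with Theorem~\ref{thm:psim-pc-ns}), whereas you just observe that a $\qns$ refutation (with $q=0$, the empty sum of squares) is syntactically already a $\qsos$ refutation, so the simulation is verbatim with zero overhead in either $\size$ or $\qsize$. Your observation is simpler and tighter, and is of course also available by composing the trivial $\qsa\!\le\!\qns$ arrow with Theorem~\ref{thm:psim-sos-sa}; the paper's detour through $\qpc$ buys nothing here and your direct argument is arguably preferable. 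Two small remarks on the $\size$ separation: the paper cites both~\cite{ALN.16} and~\cite{DGGM24} for the propositional $\sos$-versus-$\ns$/$\SA$ gap, while you invoke only~\cite{ALN.16}; and your closing caveat --- that $\qmaj_n$ is not a suitable witness for $\size$ because Theorem~\ref{thm:game_equivalence} may produce an exponentially large $q$ --- is exactly the right point and explains why the paper switches to a propositional source for the $\size$ measure.
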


\begin{proof}
	The simulation follows from \cref{thm:psim-sos-pc} and \cref{thm:psim-pc-ns}. The separation follows from \cref{prop:majority} and \cref{thm:psim-pc-ns} for the $\qsize$ measure, and from propositional separations \cite{ALN.16,DGGM24} for the $\size$ measure.
\end{proof}

\section{Conclusion}
\label{sec:conclusions}

In this work we defined semi-algebraic proof systems for QBF and initiated their proof complexity investigation. While our results already reveal an interesting picture in terms of simulations and lower and upper bounds, a number of questions remain that appear to be of interest for further research.

 In the propositional setting $\res$ and $\ns$ are incomparable proof systems. Are also $\qns$ and $\qures$ incomparable w.r.t.\ the $\qsize$ measure?
	
	 In \cref{sec:strategy} we showed how to express strategy extraction for $\qsos$ using polynomial threshold functions. Although this suffices for lower bounds, it appears interesting to determine the correct computational model \emph{characterizing}  strategy extraction for \qsos and \qns in the same sense as the tight characterisations for $\qures$ \cite{BBMP22}, $\qpc$ \cite{BeyersdorffHKS24}, and QBF Frege systems \cite{BBCP20}. 
	 
	 Finally, it would be interesting to determine the relationship of our new semi-algebraic QBF systems to the static expansion-based  algebraic systems suggested in \cite{CCKS23}, which might  turn out to be incomparable in strength.
\sbox0{ \cite{ALN.16-conf,BBC16,BP16,BL20,DGM.20} }

\section*{Acknowledgment}
  \noindent The authors  thank the Oberwolfach Research Institute for Mathematics: the idea for this work started during the Oberwolfach workshop 2413 \emph{Proof Complexity and Beyond}, and the Schloss Dagstuhl – Leibniz Center for Informatics: part of this work has been done during the Dagstuhl Seminar 24421 \emph{SAT and Interactions}.

\newcommand{\etalchar}[1]{$^{#1}$}


\begin{thebibliography}{DGGM24}

\bibitem[AFT11]{AtseriasFT11}
Albert Atserias, Johannes~Klaus Fichte, and Marc Thurley.
\newblock Clause-learning algorithms with many restarts and bounded-width
  resolution.
\newblock {\em J. Artif. Intell. Res.}, 40:353--373, 2011.
\newblock \href {https://doi.org/10.1613/jair.3152}
  {\path{doi:10.1613/jair.3152}}.

\bibitem[AH19]{AH.19}
Albert Atserias and Tuomas Hakoniemi.
\newblock {Size-Degree Trade-Offs for Sums-of-Squares and Positivstellensatz
  Proofs}.
\newblock In {\em 34th Computational Complexity Conference (CCC 2019)}, volume
  137, pages 24:1--24:20, 2019.
\newblock \href {https://doi.org/10.4230/LIPICS.CCC.2019.24}
  {\path{doi:10.4230/LIPICS.CCC.2019.24}}.

\bibitem[ALN14]{ALN.16-conf}
Albert Atserias, Massimo Lauria, and Jakob Nordstr{\"{o}}m.
\newblock Narrow proofs may be maximally long.
\newblock In {\em 29th Conference on Computational Complexity (CCC 2014)},
  pages 286--297, 2014.
\newblock \href {https://doi.org/10.1109/CCC.2014.36}
  {\path{doi:10.1109/CCC.2014.36}}.

\bibitem[ALN16]{ALN.16}
Albert Atserias, Massimo Lauria, and Jakob Nordstr{\"{o}}m.
\newblock Narrow proofs may be maximally long.
\newblock {\em ACM Trans. Comput. Logic}, 17(3):19:1--19:30, 2016.
\newblock A preliminary version of this work appeared as \cite{ALN.16-conf}.
\newblock \href {https://doi.org/10.1145/2898435} {\path{doi:10.1145/2898435}}.

\bibitem[BB23]{BB23-LMCS}
Olaf Beyersdorff and Benjamin B{\"{o}}hm.
\newblock Understanding the relative strength of {QBF} {CDCL} solvers and {QBF}
  resolution.
\newblock {\em Log. Methods Comput. Sci.}, 19(2), 2023.
\newblock \href {https://doi.org/10.46298/lmcs-19(2:2)2023}
  {\path{doi:10.46298/lmcs-19(2:2)2023}}.

\bibitem[BBC16]{BBC16}
Olaf Beyersdorff, Ilario Bonacina, and Leroy Chew.
\newblock Lower bounds: From circuits to {QBF} proof systems.
\newblock In {\em Proc.\ ACM Conference on Innovations in Theoretical Computer
  Science (ITCS'16)}, pages 249--260, 2016.
\newblock \href {https://doi.org/10.1145/2840728.2840740}
  {\path{doi:10.1145/2840728.2840740}}.

\bibitem[BBCP20]{BBCP20}
Olaf Beyersdorff, Ilario Bonacina, Leroy Chew, and Jan Pich.
\newblock Frege systems for quantified {Boolean} logic.
\newblock {\em J. ACM}, 67(2):9:1--9:36, 2020.
\newblock Preliminary versions of this work appeared as \cite{BBC16} and
  \cite{BP16}.
\newblock \href {https://doi.org/10.1145/3381881} {\path{doi:10.1145/3381881}}.

\bibitem[BBH{\etalchar{+}}12]{BBHKSZ12}
Boaz Barak, Fernando~G.S.L. Brandao, Aram~W. Harrow, Jonathan Kelner, David
  Steurer, and Yuan Zhou.
\newblock Hypercontractivity, sum-of-squares proofs, and their applications.
\newblock In {\em 44th Annual ACM Symposium on Theory of Computing (STOC
  2012)}, pages 307--326, 2012.
\newblock \href {https://doi.org/10.1145/2213977.2214006}
  {\path{doi:10.1145/2213977.2214006}}.

\bibitem[BBH18]{BBH18}
Olaf Beyersdorff, Joshua Blinkhorn, and Luke Hinde.
\newblock Size, cost, and capacity: A semantic technique for hard random
  {QBFs}.
\newblock In {\em Proc.\ Conference on Innovations in Theoretical Computer
  Science (ITCS'18)}, pages 9:1--9:18, 2018.
\newblock \href {https://doi.org/10.4230/LIPICS.ITCS.2018.9}
  {\path{doi:10.4230/LIPICS.ITCS.2018.9}}.

\bibitem[BBH19]{BBH19}
Olaf Beyersdorff, Joshua Blinkhorn, and Luke Hinde.
\newblock Size, cost, and capacity: {A} semantic technique for hard random
  {QBFs}.
\newblock {\em Logical Methods in Computer Science}, 15(1), 2019.
\newblock A preliminary version of this work appeared as \cite{BBH18}.
\newblock \href {https://doi.org/10.23638/LMCS-15(1:13)2019}
  {\path{doi:10.23638/LMCS-15(1:13)2019}}.

\bibitem[BBK{\etalchar{+}}25]{BBKMS-SAT-25}
Olaf Beyersdorff, Ilario Bonacina, Kaspar Kasche, Meena Mahajan, and
  Luc~Nicolas Spachmann.
\newblock {Semi-Algebraic Proof Systems for QBF}.
\newblock In {\em 28th International Conference on Theory and Applications of
  Satisfiability Testing (SAT 2025)}, volume 341, pages 5:1--5:19. LIPIcs,
  2025.
\newblock \href {https://doi.org/10.4230/LIPIcs.SAT.2025.5}
  {\path{doi:10.4230/LIPIcs.SAT.2025.5}}.

\bibitem[BBL24]{BBL.24}
Ilario Bonacina, Maria~Luisa Bonet, and Jordi Levy.
\newblock Weighted, circular and semi-algebraic proofs.
\newblock {\em J. Artif. Intell. Res.}, 79:447--482, 2024.
\newblock A preliminary version of this work appeared as \cite{BL20}.
\newblock \href {https://doi.org/10.1613/JAIR.1.15075}
  {\path{doi:10.1613/JAIR.1.15075}}.

\bibitem[BBMP23]{BBMP22}
Olaf Beyersdorff, Joshua Blinkhorn, Meena Mahajan, and Tom{\'{a}}s Peitl.
\newblock Hardness characterisations and size-width lower bounds for {QBF}
  resolution.
\newblock {\em {ACM} Trans. Comput. Log.}, 24(2):10:1--10:30, 2023.
\newblock \href {https://doi.org/10.1145/3565286} {\path{doi:10.1145/3565286}}.

\bibitem[BCJ15]{BCJ15}
Olaf Beyersdorff, Leroy Chew, and Mikol{\'a}\v{s} Janota.
\newblock Proof complexity of resolution-based {QBF} calculi.
\newblock In {\em Proc.\ Symposium on Theoretical Aspects of Computer Science
  (STACS'15)}, pages 76--89. LIPIcs, 2015.
\newblock \href {https://doi.org/10.4230/LIPIcs.STACS.2015.76}
  {\path{doi:10.4230/LIPIcs.STACS.2015.76}}.

\bibitem[BCJ19]{BeyersdorffCJ19}
Olaf Beyersdorff, Leroy Chew, and Mikol{\'{a}}s Janota.
\newblock New resolution-based {QBF} calculi and their proof complexity.
\newblock {\em ACM Transactions on Computation Theory}, 11(4):26:1--26:42,
  2019.
\newblock \href {https://doi.org/10.1145/3352155} {\path{doi:10.1145/3352155}}.

\bibitem[BCMS18]{BCMS18-CP}
Olaf Beyersdorff, Leroy Chew, Meena Mahajan, and Anil Shukla.
\newblock Understanding cutting planes for {QBFs}.
\newblock {\em Inf. Comput.}, 262:141--161, 2018.
\newblock \href {https://doi.org/10.1016/j.ic.2018.08.002}
  {\path{doi:10.1016/j.ic.2018.08.002}}.

\bibitem[Ber18]{Berkholz18}
Christoph Berkholz.
\newblock {The Relation between Polynomial Calculus, Sherali-Adams, and
  Sum-of-Squares Proofs}.
\newblock In {\em 35th Symposium on Theoretical Aspects of Computer Science
  (STACS)}, volume~96, pages 11:1--11:14, 2018.
\newblock \href {https://doi.org/10.4230/LIPIcs.STACS.2018.11}
  {\path{doi:10.4230/LIPIcs.STACS.2018.11}}.

\bibitem[Bey22]{M4CQBF}
Olaf Beyersdorff.
\newblock Proof complexity of quantified {Boolean} logic -- a survey.
\newblock In Marco Benini, Olaf Beyersdorff, Michael Rathjen, and Peter
  Schuster, editors, {\em Mathematics for Computation (M4C)}, pages 353--391.
  World Scientific, Singapore, 2022.

\bibitem[BGIP01]{BGIP.01}
Sam Buss, Dima Grigoriev, Russell Impagliazzo, and Toniann Pitassi.
\newblock Linear gaps between degrees for the polynomial calculus modulo
  distinct primes.
\newblock {\em Journal of Computer and System Sciences}, 62(2):267 -- 289,
  2001.
\newblock \href {https://doi.org/10.1006/jcss.2000.1726}
  {\path{doi:10.1006/jcss.2000.1726}}.

\bibitem[BHKS24]{BeyersdorffHKS24}
Olaf Beyersdorff, Tim Hoffmann, Kaspar Kasche, and Luc~Nicolas Spachmann.
\newblock Polynomial calculus for quantified boolean logic: Lower bounds
  through circuits and degree.
\newblock In {\em 49th International Symposium on Mathematical Foundations of
  Computer Science (MFCS)}, volume 306, pages 27:1--27:15, 2024.
\newblock \href {https://doi.org/10.4230/LIPICS.MFCS.2024.27}
  {\path{doi:10.4230/LIPICS.MFCS.2024.27}}.

\bibitem[BHP20]{BHP20}
Olaf Beyersdorff, Luke Hinde, and J\'{a}n Pich.
\newblock Reasons for hardness in {QBF} proof systems.
\newblock {\em ACM Transactions on Computation Theory}, 12(2):10:1--10:27,
  2020.
\newblock \href {https://doi.org/10.1145/3378665} {\path{doi:10.1145/3378665}}.

\bibitem[BJ12]{Balabanov12}
Valeriy Balabanov and Jie-Hong~R. Jiang.
\newblock Unified {QBF} certification and its applications.
\newblock {\em Form. Methods Syst. Des.}, 41(1):45--65, 2012.
\newblock \href {https://doi.org/10.1007/s10703-012-0152-6}
  {\path{doi:10.1007/s10703-012-0152-6}}.

\bibitem[BJLS21]{qbfhandbook}
Olaf Beyersdorff, Mikol{\'{a}}s Janota, Florian Lonsing, and Martina Seidl.
\newblock Quantified {Boolean} formulas.
\newblock In Armin Biere, Marijn Heule, Hans van Maaren, and Toby Walsh,
  editors, {\em Handbook of Satisfiability, 2nd edition}, Frontiers in
  Artificial Intelligence and Applications. IOS press, 2021.
\newblock \href {https://doi.org/10.3233/FAIA201015}
  {\path{doi:10.3233/FAIA201015}}.

\bibitem[BL20]{BL20}
Maria~Luisa Bonet and Jordi Levy.
\newblock Equivalence between systems stronger than resolution.
\newblock In {\em 23rd International Conference on Theory and Applications of
  Satisfiability Testing (SAT 2020)}, volume 12178, pages 166--181, 2020.
\newblock \href {https://doi.org/10.1007/978-3-030-51825-7_13}
  {\path{doi:10.1007/978-3-030-51825-7_13}}.

\bibitem[Bla37]{Bla37}
A.~Blake.
\newblock {\em Canonical expressions in boolean algebra}.
\newblock PhD thesis, University of Chicago, 1937.

\bibitem[BN21]{sathandbookpc}
Sam Buss and Jakob Nordstr{\"{o}}m.
\newblock Proof complexity and {SAT} solving.
\newblock In Armin Biere, Marijn Heule, Hans van Maaren, and Toby Walsh,
  editors, {\em Handbook of Satisfiability}, Frontiers in Artificial
  Intelligence and Applications, pages 233--350. {IOS} Press, 2021.
\newblock \href {https://doi.org/10.3233/FAIA200990}
  {\path{doi:10.3233/FAIA200990}}.

\bibitem[BP16]{BP16}
Olaf Beyersdorff and J{\'{a}}n Pich.
\newblock Understanding {Gentzen} and {Frege} systems for {QBF}.
\newblock In {\em Proc.\ ACM/IEEE Symposium on Logic in Computer Science
  (LICS)}, 2016.
\newblock \href {https://doi.org/10.1145/2933575.2933597}
  {\path{doi:10.1145/2933575.2933597}}.

\bibitem[BSW01]{BW01}
Eli Ben-Sasson and Avi Wigderson.
\newblock Short proofs are narrow - resolution made simple.
\newblock {\em J. ACM}, 48(2):149--169, 2001.
\newblock \href {https://doi.org/10.1145/375827.375835}
  {\path{doi:10.1145/375827.375835}}.

\bibitem[BWJ14]{BWJ14}
Valeriy Balabanov, Magdalena Widl, and Jie-Hong~R. Jiang.
\newblock {QBF} resolution systems and their proof complexities.
\newblock In {\em Proc.\ Theory and Applications of Satisfiability Testing
  {(SAT)}}, pages 154--169, 2014.
\newblock \href {https://doi.org/10.1007/978-3-319-09284-3_12}
  {\path{doi:10.1007/978-3-319-09284-3_12}}.

\bibitem[CCKS23]{CCKS23}
Sravanthi Chede, Leroy Chew, Balesh Kumar, and Anil Shukla.
\newblock Understanding {Nullstellensatz} for {QBF}s.
\newblock {\em Electron. Colloquium Comput. Complex.}, {TR23-129}, 2023.
\newblock URL: \url{https:TR//eccc.weizmann.ac.il/report/2023/129/}.

\bibitem[CCT87]{CCT87}
William Cook, Collette~R. Coullard, and Gy{\"{o}}rgy Tur{\'{a}}n.
\newblock On the complexity of cutting-plane proofs.
\newblock {\em Discrete Applied Mathematics}, 18(1):25--38, 1987.
\newblock \href {https://doi.org/10.1016/0166-218X(87)90039-4}
  {\path{doi:10.1016/0166-218X(87)90039-4}}.

\bibitem[CEI96]{CEI.96}
Matthew Clegg, Jeff Edmonds, and Russell Impagliazzo.
\newblock Using the {Groebner} basis algorithm to find proofs of
  unsatisfiability.
\newblock In {\em 28th Annual {ACM} Symposium on the Theory of Computing
  (STOC)}, pages 174--183, 1996.
\newblock \href {https://doi.org/10.1145/237814.237860}
  {\path{doi:10.1145/237814.237860}}.

\bibitem[Che17]{Che17}
Hubie Chen.
\newblock Proof complexity modulo the polynomial hierarchy: Understanding
  alternation as a source of hardness.
\newblock {\em ACM Transactions on Computation Theory}, 9(3):15:1--15:20, 2017.
\newblock \href {https://doi.org/10.1145/3087534} {\path{doi:10.1145/3087534}}.

\bibitem[CR79]{CR79}
Stephen~A. Cook and Robert~A. Reckhow.
\newblock The relative efficiency of propositional proof systems.
\newblock {\em The Journal of Symbolic Logic}, 44(1):36--50, 1979.
\newblock \href {https://doi.org/10.2307/2273702} {\path{doi:10.2307/2273702}}.

\bibitem[DGGM24]{DGGM24}
Stefan~S. Dantchev, Nicola Galesi, Abdul Ghani, and Barnaby Martin.
\newblock Proof complexity and the binary encoding of combinatorial principles.
\newblock {\em {SIAM} J. Comput.}, 53(3):764--802, 2024.
\newblock A preliminary version appeared as \cite{DGM.20}.
\newblock \href {https://doi.org/10.1137/20M134784X}
  {\path{doi:10.1137/20M134784X}}.

\bibitem[DGM20]{DGM.20}
Stefan~S. Dantchev, Abdul Ghani, and Barnaby Martin.
\newblock Sherali-adams and the binary encoding of combinatorial principles.
\newblock In {\em 14th Latin American Symposium on Theoretical Informatics
  (LATIN 2020)}, volume 12118, pages 336--347, 2020.
\newblock \href {https://doi.org/10.1007/978-3-030-61792-9_27}
  {\path{doi:10.1007/978-3-030-61792-9_27}}.

\bibitem[DM13]{DantchevM13}
Stefan~S. Dantchev and Barnaby Martin.
\newblock Rank complexity gap for lov{\'{a}}sz-schrijver and sherali-adams
  proof systems.
\newblock {\em Comput. Complex.}, 22(1):191--213, 2013.
\newblock \href {https://doi.org/10.1007/S00037-012-0049-1}
  {\path{doi:10.1007/S00037-012-0049-1}}.

\bibitem[DMR09]{DMR.09}
Stefan~S. Dantchev, Barnaby Martin, and Mark Nicholas~Charles Rhodes.
\newblock Tight rank lower bounds for the {Sherali-Adams} proof system.
\newblock {\em Theor. Comput. Sci.}, 410(21-23):2054--2063, 2009.
\newblock \href {https://doi.org/10.1016/J.TCS.2009.01.002}
  {\path{doi:10.1016/J.TCS.2009.01.002}}.

\bibitem[dRLNS21]{RezendeLN021}
Susanna~F. de~Rezende, Massimo Lauria, Jakob Nordstr{\"{o}}m, and Dmitry
  Sokolov.
\newblock The power of negative reasoning.
\newblock In {\em 36th Computational Complexity Conference (CCC)}, volume 200,
  pages 40:1--40:24, 2021.
\newblock \href {https://doi.org/10.4230/LIPICS.CCC.2021.40}
  {\path{doi:10.4230/LIPICS.CCC.2021.40}}.

\bibitem[ELW13]{ELW13}
Uwe Egly, Florian Lonsing, and Magdalena Widl.
\newblock Long-distance resolution: Proof generation and strategy extraction in
  search-based {QBF} solving.
\newblock In {\em Proc. Logic for Programming, Artificial Intelligence, and
  Reasoning (LPAR)}, pages 291--308, 2013.
\newblock \href {https://doi.org/10.1007/978-3-642-45221-5_21}
  {\path{doi:10.1007/978-3-642-45221-5_21}}.

\bibitem[FKP19]{FKP.19}
Noah Fleming, Pravesh Kothari, and Toniann Pitassi.
\newblock Semialgebraic proofs and efficient algorithm design.
\newblock {\em Found. Trends Theor. Comput. Sci.}, 14(1-2):1--221, 2019.
\newblock \href {https://doi.org/10.1561/0400000086}
  {\path{doi:10.1561/0400000086}}.

\bibitem[GHJ{\etalchar{+}}24]{GoosHJMPRT24}
Mika G{\"{o}}{\"{o}}s, Alexandros Hollender, Siddhartha Jain, Gilbert Maystre,
  William Pires, Robert Robere, and Ran Tao.
\newblock Separations in proof complexity and {TFNP}.
\newblock {\em J. {ACM}}, 71(4):26:1--26:45, 2024.
\newblock \href {https://doi.org/10.1145/3663758} {\path{doi:10.1145/3663758}}.

\bibitem[Hak20]{Hakoniemi20}
Tuomas Hakoniemi.
\newblock Feasible interpolation for polynomial calculus and sums-of-squares.
\newblock In {\em 47th International Colloquium on Automata, Languages, and
  Programming (ICALP)}, volume 168, pages 63:1--63:14, 2020.
\newblock \href {https://doi.org/10.4230/LIPICS.ICALP.2020.63}
  {\path{doi:10.4230/LIPICS.ICALP.2020.63}}.

\bibitem[JM15]{JM15}
Mikol{\'{a}}s Janota and Joao Marques{-}Silva.
\newblock Expansion-based {QBF} solving versus {Q}-resolution.
\newblock {\em Theor. Comput. Sci.}, 577:25--42, 2015.
\newblock \href {https://doi.org/10.1016/j.tcs.2015.01.048}
  {\path{doi:10.1016/j.tcs.2015.01.048}}.

\bibitem[KKF95]{KBKF95}
Hans {Kleine B{\"u}ning}, Marek Karpinski, and Andreas Fl{\"o}gel.
\newblock Resolution for quantified {Boolean} formulas.
\newblock {\em Inf. Comput.}, 117(1):12--18, 1995.
\newblock \href {https://doi.org/10.1006/INCO.1995.1025}
  {\path{doi:10.1006/INCO.1995.1025}}.

\bibitem[Las01]{Lasserre01}
Jean~B. Lasserre.
\newblock Global optimization with polynomials and the problem of moments.
\newblock {\em {SIAM} J. Optim.}, 11(3):796--817, 2001.
\newblock \href {https://doi.org/10.1137/S1052623400366802}
  {\path{doi:10.1137/S1052623400366802}}.

\bibitem[MP87]{MP87}
Marvin Minsky and Seymour Papert.
\newblock {\em Perceptrons - an introduction to computational geometry}.
\newblock {MIT} Press, 1987.
\newblock \href {https://doi.org/10.7551/mitpress/11301.001.0001}
  {\path{doi:10.7551/mitpress/11301.001.0001}}.

\bibitem[OZ13]{ODonnellZ13}
Ryan O'Donnell and Yuan Zhou.
\newblock Approximability and proof complexity.
\newblock In {\em Proceedings of the Twenty-Fourth Annual {ACM-SIAM} Symposium
  on Discrete Algorithms (SODA)}, pages 1537--1556. {SIAM}, 2013.
\newblock \href {https://doi.org/10.1137/1.9781611973105.111}
  {\path{doi:10.1137/1.9781611973105.111}}.

\bibitem[PD11]{DBLP:journals/ai/PipatsrisawatD11}
Knot Pipatsrisawat and Adnan Darwiche.
\newblock On the power of clause-learning {SAT} solvers as resolution engines.
\newblock {\em Artif. Intell.}, 175(2):512--525, 2011.
\newblock \href {https://doi.org/10.1016/j.artint.2010.10.002}
  {\path{doi:10.1016/j.artint.2010.10.002}}.

\bibitem[Rob65]{Rob65}
John~Alan Robinson.
\newblock A machine-oriented logic based on the resolution principle.
\newblock {\em Journal of the ACM}, 12:23--41, 1965.
\newblock \href {https://doi.org/10.1145/321250.321253}
  {\path{doi:10.1145/321250.321253}}.

\bibitem[SA90]{SheraliA90}
Hanif~D. Sherali and Warren~P. Adams.
\newblock A hierarchy of relaxations between the continuous and convex hull
  representations for zero-one programming problems.
\newblock {\em {SIAM} J. Discret. Math.}, 3(3):411--430, 1990.
\newblock \href {https://doi.org/10.1137/0403036} {\path{doi:10.1137/0403036}}.

\bibitem[Sok20]{Sokolov20}
Dmitry Sokolov.
\newblock (semi)algebraic proofs over {\(\pm\)}1 variables.
\newblock In {\em Proceedings of the 52nd Annual {ACM} {SIGACT} Symposium on
  Theory of Computing (STOC)}, pages 78--90, 2020.
\newblock \href {https://doi.org/10.1145/3357713.3384288}
  {\path{doi:10.1145/3357713.3384288}}.

\bibitem[VG12]{Gelder12}
Allen Van~Gelder.
\newblock Contributions to the theory of practical quantified {Boolean} formula
  solving.
\newblock In {\em Proc.\ Principles and Practice of Constraint Programming
  (CP)}, pages 647--663, 2012.
\newblock \href {https://doi.org/10.1007/978-3-642-33558-7_47}
  {\path{doi:10.1007/978-3-642-33558-7_47}}.

\bibitem[ZM02]{ZM02}
Lintao Zhang and Sharad Malik.
\newblock Conflict driven learning in a quantified {Boolean} satisfiability
  solver.
\newblock In {\em Proc. {IEEE/ACM} International Conference on Computer-aided
  Design {(ICCAD)}}, pages 442--449, 2002.
\newblock \href {https://doi.org/10.1145/774572.774637}
  {\path{doi:10.1145/774572.774637}}.

\end{thebibliography}
\end{document}